\setlist[enumerate]{itemsep=0mm}
\newtheoremstyle{theoremsansserif} 
    {\topsep}                    
    {\topsep}                    
    {\itshape}                   
    {}                           
    {\sffamily\bfseries }        
    {.}                          
    {.5em}                       
    {}  
\theoremstyle{theoremsansserif}
\newtheorem{lemma}{Lemma}
\newtheorem{corollary}{Corollary}
\newtheorem{proposition}{Proposition}
\newtheorem{remark}{Remark}
\newtheorem*{example*}{Example}
\newtheorem{theorem}{Theorem}
\theoremstyle{definition}
\newcommand{\R}{\mathbb{R}}
\newcommand{\E}{\mathbb{E}}
\newcommand{\p}{\pmb{p}}
\newcommand{\e}{\pmb e}
\renewcommand{\d}{\pmb d}
\renewcommand{\v}{\pmb v}
\newcommand{\vv}{\mathcal{V}}
\newcommand{\pp}{\mathcal{P}}
\newcommand{\grad}{\nabla}
\DeclareMathOperator{\rev}{Rev}
\DeclareMathOperator{\argmin}{argmin}
\DeclareMathOperator{\argmax}{argmax}
\DeclareMathOperator{\regret}{Regret}
\DeclareMathOperator{\mye}{Myer}
\begin{document}
\title{\sf\textbf{Strategically-Robust Learning Algorithms for Bidding in First-Price Auctions}}
\author{
\sf Rachitesh Kumar\\
\sf Columbia University \\
\small\texttt{rk3068@columbia.edu}
\and
\sf Jon Schneider \\
\sf Google Research\\
\small\texttt{jschnei@google.com}
\and
\sf Balasubramanian Sivan\\
\sf Google Research \\
\small\texttt{balusivan@google.com}}
\date{\today}

\maketitle

\allowdisplaybreaks

\begin{abstract}
    \normalsize
    Learning to bid in repeated first-price auctions is a fundamental problem at the interface of game theory and machine learning, which has seen a recent surge in interest due to the transition of display advertising to first-price auctions. In this work, we propose a novel concave formulation for pure-strategy bidding in first-price auctions, and use it to analyze natural Gradient-Ascent-based algorithms for this problem. Importantly, our analysis goes beyond regret, which was the typical focus of past work, and also accounts for the strategic backdrop of online-advertising markets where bidding algorithms are deployed---we provide the first guarantees of strategic-robustness and incentive-compatibility for Gradient Ascent.

Concretely, we show that our algorithms achieve $O(\sqrt{T})$ regret when the highest competing bids are generated adversarially, and show that no online algorithm can do better. We further prove that the regret reduces to $O(\log T)$ when the competition is stationary and stochastic, which drastically improves upon the previous best of $O(\sqrt{T})$. Moving beyond regret, we show that a strategic seller cannot exploit our algorithms to extract more revenue on average than is possible under the optimal mechanism. Finally, we prove that our algorithm is also incentive compatible---it is a (nearly) dominant strategy for the buyer to report her values truthfully to the algorithm as a whole. Altogether, these guarantees make our algorithms the first to simultaneously achieve both optimal regret and strategic-robustness. 
\end{abstract}
\newpage

\setcounter{page}{1}

\setstretch{1.5}

\graphicspath{ {Images/} }

\section{Introduction}

Advertising is an indispensable part of the internet economy. It allows online platforms (like Google and Meta) to monetize their services by charging advertisers for the opportunity to display their ads to users. This is operationally achieved through an online market/exchange where advertising opportunity is sold to interested advertisers. The mechanism of choice for these online advertising markets is real-time auctions: anytime a user visits the platform, an auction is run to determine the advertiser who will get to display their ad to that user, and the payment to be charged for that opportunity. Each of these auctions runs in less than a few milliseconds and advertisers typically participate in millions of such auctions as part of their advertising campaign. Given the speed and scale of ad-auction markets, all bidding is done programmatically---each advertiser employs an automated bidding algorithm, which is often provided as a service by the platform itself or a third-party demand-side platform (DSP). This algorithm takes as input high-level objectives like value-per-click, targeting criteria etc., and bids on behalf of the advertiser with the goal of maximizing her utility.

Until a few years ago, the second-price auction and its generalizations were the dominant auction formats in online advertising, but that is no longer the case with the transition of the display-advertising industry to first-price auctions~\citep{Google}. Unlike second-price auctions, where truthful bidding is optimal, bidding in first-price auctions is highly non-trivial and presents the need for non-trivial bidding algorithms. Combined with the colossal scale of online advertising markets and the accompanying abundance of data, this transition has created a need for online algorithms for bidding in repeated first-price auctions that can learn from data. Thus motivated, a recent line of work \citep{balseiro2022contextual, han2020optimal, han2020learning, zhang2022leveraging, wang2023learning, badanidiyuru2023learning} has proposed algorithms for a variety of input models (adversarial, stochastic, etc.) and feedback structures (bandit, partial, full, etc.). These works analyze the problem of bidding in repeated first-price auctions through the lens of online learning, and consequently focus on minimizing regret against the best fixed bidding strategy in hindsight. However, although regret is an important aspect of any learning algorithm, it completely ignores the strategic nature of the markets in which these algorithms are deployed---both the buyer (advertiser) and the seller (platform) can attempt to manipulate the algorithm in order to obtain better revenue/utility. Consequently, bidding algorithms with strong regret guarantees can perform very poorly when deployed in markets with strategic agents. \citet{braverman2018selling} showed that this is true of all mean-based algorithms, which includes bidding algorithms based on popular paradigms like Exponential Weights/Hedge~\citep{freund1997decision}, EXP3~\citep{auer2002nonstochastic}, Follow-the-Regularized-Leader (FTRL), Sample-Average-Approximation etc., and importantly includes nearly all algorithms proposed in recent works (see Subsection~\ref{subsec:related}). In particular, \citet{braverman2018selling} showed that any mean-based algorithm is susceptible to manipulation on either side of the market:
\begin{itemize}[topsep=0pt]
    \item A seller who knows that the buyer is employing a mean-based algorithm can extract more revenue on average than is possible under the optimal single-shot mechanism (posting the monopoly reserve price). Moreover, she can do so by simply posting a sequence of decreasing reserve prices.
    \item A buyer can improve her utility by misreporting her values to any automated bidding algorithm that is mean based.
\end{itemize}
\citet{braverman2018selling} go on to propose an algorithm that attains sub-linear regret while being resistant to manipulation by the seller and incentivizing truthful reporting by the buyer. It does so by minimizing a more complicated notion of regret---namely contextual-swap regret---instead of standard (external) regret. Although this approach is strategically robust, it is complicated and comes at a substantial cost: (i) the algorithm of \citet{braverman2018selling} suffers from $\Theta(T^{7/8})$-regret, which is much worse than the $O(\sqrt{T})$-regret achieved by other works (e.g., \citealt{balseiro2022contextual}); (ii) it requires a super-constant $\Omega(T^{1/8})$ amount of computation and memory for each auction.\footnote{\citet{braverman2018selling} study a model with $m$ possible values and provide guarantees in terms of $m$. Here we use $m = T^{1/8}$ as the discretization because it optimally trades off regret and strategic robustness in their guarantees.} This naturally begs the question:
\begin{quote}
\vspace{-0.5em}
\emph{Is it possible to design an algorithm that is both strategically robust and achieves the optimal $O(\sqrt{T})$-regret rate? Can it be done with a simple algorithm that only requires a constant amount of computation and memory for each auction?}
\end{quote}
In light of the vulnerability of all natural mean-based algorithms to strategic manipulation and the universal reliance on mean-based algorithms to achieve $O(\sqrt{T})$-regret in prior work, one might be tempted to conclude that the aforementioned questions do not admit a positive answer. However, such a conclusion would ignore perhaps the most important/popular data-driven optimization algorithm in existence---(Online) Gradient Descent/Ascent, which is not a mean-based algorithm. In fact, the literature on bidding in first-price auctions, including \citet{braverman2018selling}, is marked by a conspicuous lack of results for Gradient Ascent, and leaves open the fundamental question:
\begin{quote}
\vspace{-0.5em}
\emph{Is Online Gradient Ascent strategically robust as a bidding algorithm?}
\end{quote}
Surprisingly, we show that the answer to all of the above questions is a resounding `Yes!'. We provide a comprehensive analysis of Online Gradient Ascent for first-price auctions, and show that this simple algorithm achieves the optimal regret rate (with a constant amount of compute and memory) while also being robust to strategic manipulations by both the seller and the buyer; making it the first to achieve this hitherto-unattained amalgam of guarantees.

\subsection{Major Contributions}

We study the problem of designing algorithms for a buyer who participates in $T$ sequential first-price auctions with a continuum of values and discrete bids. We assume that the value of the buyer is drawn independently from some distribution $F$ (with bounded density) in each auction and that she observes this value before bidding. Moreover, motivated by practice, we assume full feedback---the maximum of the highest competing bid and the reserve price is revealed after each auction. The performance of algorithms is measured using regret against the best fixed bidding strategy (map from values to bids) in hindsight. Below, we provide a brief overview of our results.

\textbf{Concave Formulation.} The bedrock of our algorithms and analysis is a novel concave formulation for \emph{pure-strategy} bidding in first-price auctions (Theorem~\ref{thm:convex-refor}). In particular, we propose a change-of-variables transformation that maps each (monotonic) pure bidding strategy to the probability distribution over bids induced by that strategy and the randomness in the buyer's value. Importantly, while utility is not concave as a function of the bidding strategy (which itself is an infinite-dimensional object), we show that it is concave as a function of the induced probability distribution over bids. As the set of possible bids is finite (e.g. discretized to cents), we are able to transform the problem of finding the optimal bidding strategy in first-price auctions from an infinite dimensional non-concave problem to a finite-dimensional concave one. Notably, this transformation does not impose any regularity conditions on the value distribution $F$, as is often the case with such transformations in other contexts (like \citealt{bulow1989simple} and \citealt{kinnear2022convexity}). To the best of our knowledge, this is the first unconditional concave formulation for pure-strategy bidding in first-price auctions, and may be of independent interest. 
In this work, we leverage it to propose two algorithms: Algorithm~\ref{alg:known_GA} requires knowledge of the value distribution $F$ and simply implements Online Gradient Ascent for the online concave maximization problem implied by our formulation; Algorithm~\ref{alg:threshold} does not require knowledge of the value distribution $F$ and instead implements Online Gradient Ascent under the pretense of uniformly-distributed values. 

\textbf{Regret Guarantees against Adaptive Adversarial Inputs and Stochastic Inputs.} Algorithm~\ref{alg:known_GA} inherits the $O(\sqrt{T})$-regret guarantee enjoyed by Online Gradient Ascent under adversarial input (Proposition~\ref{prop:sub-linear-regret}). We show that Algorithm~\ref{alg:threshold} also enjoys the same $O(\sqrt{T})$-regret guarantee against adversarial highest competing bids (Theorem~\ref{thm:regret_threshold}), despite falling outside the purview of Online Convex Optimization and not directly inheriting any properties of Online Gradient Ascent. Our regret guarantees hold even if the highest competing bids are chosen adaptively based on the past, and thus apply to the setting where all of the buyers are simultaneously learning to bid. Moreover, we show that our guarantees are tight---no algorithm can achieve $o(\sqrt{T})$-regret against adversarial competition (Proposition~\ref{prop:regret-lower-bound}). When the competition is stochastic, i.e., the maximum of the highest competing bid and the reserve price is i.i.d. from some distribution, our concave reformulation yields a strongly-concave optimization problem. This allows us to prove a $O(\log T)$-regret guarantee for Algorithm~\ref{alg:known_GA} (Theorem~\ref{thm:log_regret}), which exponentially improves over the previous-best $O(\sqrt{T})$-regret bound~\citep{balseiro2022contextual,han2020optimal}.

\textbf{Strategic Robustness against the Seller.} If a buyer employs either of our algorithms for bidding, we prove that the seller cannot extract more revenue than $\mye(F)\cdot T + O(\sqrt{T})$ from her in total, where $\mye(F)$ is the optimal revenue obtainable under any single-shot mechanism for value distribution $F$, i.e, $\mye(F)$ is the revenue obtained by posting the monopoly reserve price. Put another way, the seller cannot exploit our algorithms to extract (substantially) more revenue than $\mye(F)$ on average (Theorem~\ref{thm:strat_robust} and Theorem~\ref{thm:threshold_robust}). In particular, the seller does not gain by dynamically changing the reserve price and is best off just posting the monopoly reserve price in each auction. This robustness to strategic manipulation is critical in practice because platforms often have detailed knowledge of the bidding algorithms of the advertisers (they might even design them!). Our algorithms also lead to more stability by removing the incentive for the seller to manipulate the bidding algorithm through dynamic reserve prices. 
Finally, Algorithm~\ref{alg:threshold} is also strategically robust in the multi-buyer setting where all of the buyers simultaneously use it to bid. In that setting, it limits the seller's revenue to $\mye(\{F_i\}_i) + O(\sqrt{T})$, where $\mye(\{F_i\}_i)$ is the revenue of the optimal mechanism for buyers with $\prod_i F_i$ as the prior. Once again, this implies that the seller cannot extract more average revenue than is possible under the optimal single-shot mechanism (Theorem~\ref{thm:multi_buyer_robust}).

\textbf{Incentive Compatibility for the Buyer.} Bidding algorithms are deployed as automated agents that bid on behalf of the buyers (advertisers) in the auctions. The buyers provide high-level information about their value for winning these auctions by specifying their value per click/conversion/impression and targeting criteria, which is then used by the automated bidding algorithm to optimize utility. As the high-level information is private to the buyer, one cannot simply assume that she will reveal it truthfully to the algorithm. If the algorithm incentivizes misreporting of values by the buyer, it hurts the seller because she does not obtain reliable data about the values of the buyer (which is very valuable for experimentation), and it hurts the buyer because it imposes the burden of computing the best misreporting strategy. Algorithm~\ref{alg:threshold} does not suffer from these shortcomings and incentivizes truthful reporting of values by the buyer. Formally, we show that the excess utility over truthful reporting earned through any misreport of values is no more than $O(\sqrt{T})$ (Theorem~\ref{thm:ic_threshold}). In other words, the \emph{online algorithm itself} is (approximately) incentive compatible with respect to its input. This property holds for adaptively adversarial highest competing bids. In particular, if all buyers are simultaneously learning to bid, then the buyers using Algorithm~\ref{alg:threshold} would not prefer to misreport their values in hindsight.

\subsection{Significance}\label{subsec:sig}
We highlight a few significant implications of our work here.

\textbf{Agile vs Lazy Projections Makes All the Difference.}
Given the close connections between our algorithm (Online Gradient Ascent) and FTRL with the Euclidean regularizer (which is mean-based)--- namely that they differ only in how they handle projections, with the former being ``agile'' and the latter being ``lazy'' (see \citealt{hazan2016introduction} for a discussion), one might have little hope for Online Gradient Ascent. I.e., one might expect Online Gradient Ascent to also suffer from strategic manipulability, much like FTRL with Euclidean regularizer. Surprisingly, our analysis reveals that the shift from lazy to agile projections makes a substantial difference in the robustness properties of the algorithm. Intuitively, the fact that lazy projections result in way too much memory of the past, and hence makes the algorithm mean-based and consequently strategically manipulable is not hard to digest. However the fact that this simple switch from lazy to agile projections makes the algorithm strategically robust against both the buyer and the seller (recall that by the result of~\citet{braverman2018selling}, both these properties are false for lazy projections), while obtaining the $O(\sqrt{T})$ regret guarantee even against adaptively generated adversarial inputs is quite surprising.

\textbf{Swap Regret Minimization.}
Interestingly, all previously-known algorithms achieving this sense of strategic robustness required some form of swap-regret minimization, with \citep{MMSSbayesian, RZ24} even showing that low swap regret is a necessary property for a learning algorithm to be generically strategically-robust across all games. Our work is the first to achieve these properties without explicitly minimizing some form of swap regret. This is significant because, unlike Follow-The-Regularized-Leader (FTRL) and Gradient Ascent, swap-regret-minimizing algorithms are often non-intuitive and complex. Rather than designing swap-regret minimizing algorithms to overcome the strategic weaknesses of FTRL, our results imply that one can instead simply make a small switch from lazy to agile projections. This makes the algorithm retain all its intuitiveness, and as we prove, brings in all the desired robustness. Given the results of~\citep{MMSSbayesian, RZ24}, one might wonder whether it is possible to establish strategic robustness of an algorithm without explicitly establishing a low value for the relevant notion of swap regret for that algorithm. However, the necessity of using algorithms with low swap-regret values established in those two works (the former for non-Bayesian games, and the latter for Bayesian games) only holds if we seek strategic robustness in all (respectively non-Bayesian or Bayesian) games. Our work is focused on the specific game of first-price auctions, and shows that we are not bound by those results---we directly establish strategic robustness without first establishing a low value for some notion of swap regret.

\subsection{Additional Related Work}\label{subsec:related}

The first-price auction is arguably the most popular auction format in human history. It has been studied extensively in the economics literature, where the primary focus has been on the analysis of equilibria. Since our focus is on developing data-driven bidding algorithms, we do not discuss the work on equilibrium analysis here, and refer the reader to standard texts on auction theory like \citet{krishna2009auction} and \citet{milgrom2004putting}. Similarly, we omit the work on equilibrium analysis in computer science and operations research, and refer to recent works of \citet{chen2023complexity} and \citet{balseiro2023contextual} for an overview. Finally, extensive work has been done at the intersection of auctions and data-driven optimization, the vast majority of which has focused on the mechanism-design problem faced by the seller. We refer to the survey by \citet{nedelec2022learning} for a detailed discussion of learning algorithms for buyers and sellers in repeated auctions, and focus here exclusively on relevant work on bidding algorithms for first-price auctions and strategic aspects of learning.

Motivated by the change of auction format in the Display Advertising industry, \citet{balseiro2022contextual} analyze bidding algorithms for first-price auctions. They assume that the buyer only observes binary feedback, i.e., whether or not she won the auction, and model it as a contextual bandit problem with potential cross-learning between contexts. When the highest competing bids are stochastic, they propose a UCB-based algorithm which achieves $O(\sqrt{T})$ regret, whereas when the highest competing bids are adversarially generated, they propose an EXP3-based algorithm which achieves $O(T^{2/3})$ regret (this was later improved to $O(\sqrt{T})$ regret by \cite{schneider2024optimal}).  \citet{han2020optimal} also study a model where the highest competing bids are stochastic, albeit under a different partial-feedback model where only the winning bid is revealed at the end of each auction. They propose algorithms that achieve $O(\sqrt{T})$ regret and allow for infinitely many possible bids. \citet{han2020learning} study a model where both the values and the highest competing bids are adversarial, but restrict the space of benchmark strategies to be Lipschitz. They propose an algorithm that runs the Exponential Weights Algorithm over a suitable cover of the space of all Lipschitz bidding strategies, and prove a $O(\sqrt{T})$-regret guarantee for it. Moreover, they show that a computationally-expensive version of their algorithm attains $O(\sqrt{T})$-regret when the benchmark strategies are monotonic instead of Lipschitz. \citet{zhang2022leveraging} extend the algorithm and analysis of \citet{han2020learning} to incorporate hints about the highest competing bids. 

\citet{badanidiyuru2023learning} study a contextual model of first-price auctions in which the highest competing bids are the sum of a linear function of the contexts and a log-concave stochastic noise term. They propose algorithms that attain $O(\sqrt{T})$ regret under different feedback and informational assumptions. \citet{wang2023learning} analyze repeated first-price auctions with a global budget constraint. When the highest competing bids are stochastically generated, they propose dual-based algorithms which achieve $O(\sqrt{T})$ regret under both full and partial feedback. Importantly, all of these aforementioned algorithms are mean based, and consequently they are neither strategically robust nor incentive compatible \citep{braverman2018selling}. \citet{feng2018learning} and \citet{cesa2023role} investigate bidding in repeated auctions when the buyer does not know her own value, and propose algorithms that compete against the best static bid in hindsight. We assume that the buyer observes her value before bidding and use the best \emph{strategy} in hindsight as the benchmark, and therefore our results are not directly comparable. \citet{kinnear2022convexity} study the different but related problem of procuring advertising opportunity for contract fulfilment. They analyze the full-information optimization problem against stochastic competition, and reformulate it as a convex optimization problem in the space of winning probabilities. Unlike our unconditional concave formulation, their convex formulation for first-price auctions requires the competing bid distribution to have full support and satisfy a log-concavity-like assumption.

Finally, our paper is closely connected to a growing body of literature on strategizing against no-regret learning algorithms in games. This area of work is concerned with the two questions of: 1. how should you best-respond if you know other players in a repeated game choose their actions according to a learning algorithm? and 2. what learning algorithm should you choose to be robust to the strategic behavior of other players? \cite{braverman2018selling} was one of the first works to investigate these questions, specifically for the setting of non-truthful auctions -- their work was later generalized to the prior-free setting \citep{deng2019prior}, the setting of multiple buyers \citep{cai2023selling}, and the setting of selecting parameters for bidding algorithms \citep{kolumbus2022and, kolumbus2022auctions}. Since then, these questions have also been studied in the settings of general games \citep{DSSstrat, brown2023is}, Bayesian games \citep{MMSSbayesian}, contract design \citep{guruganesh2024contracting} and Bayesian persuasion \citep{chen2023persuading}.

\section{Model}

\paragraph{Notation.} $\R_+ \coloneqq [0,\infty)$ denotes the set of non-negative reals. We use boldface for vectors. If a vector is indexed by time, like $\pmb a_t$, then its $i$-th component is denoted by $a_{t,i}$. Throughout the paper, $\|\cdot\|$ represents the Euclidean norm, i.e., $\|\pmb a\| = (\sum_{i} a_i^2)^{1/2}$.

Consider a buyer who participates in $T$ sequential first-price auctions. In each auction $t \in [T]$, her value $V_t \in [0,1]$ for the item is drawn independently from a distribution with CDF $F: [0,1] \to [0,1]$, and bounded density $f: [0,1] \to \R_+$ such that $\sup_{x \in [0,1]} f(v) \leq \bar f$. In line with practice, we will assume that the set of possible bids is finite and equally spaced (e.g., multiples of cents): there are $K+1$ possible bids $0 = b_0 < b_1 < \dots < b_K \leq 1$, where $b_i = i \cdot \epsilon$ for some $0 < \epsilon \leq 1/K$. We use $ h_t \in \{b_0, b_1, \dots, b_K\}$ to denote the minimum bid needed to win at time $t$, i.e., $h_t$ is the maximum of the highest competing bid and the reserve price (see Subsection~\ref{sec:tie-breaking} for a detailed discussion, including the impact of tie-breaking). To simplify terminology, we will treat the reserve price as an additional bid submitted by the seller and often refer to $h_t$ simply as the highest competing bid.

In auction $t \in [T]$, the following sequence of events takes place:
\begin{itemize}
    \item Nature picks the highest competing bid $h_t$. Aside from Section~\ref{sec:log_regret}, we allow nature's choice to be adaptively adversarial, i.e., $h_t$ can be chosen arbitrarily based on the past $\{s_r(\cdot), V_r\}_{r=1}^{t-1}$.
    \item The buyer observes her value $V_t \sim F$ and places a bid $s_t(V_t) \in \{b_0, b_1, \dots, b_K\}$.
    \item The buyer wins the item and pays $s_t(V_t)$ if $s_t(V_t) \geq h_t$. If $s_t(V_t) < h_t$, she does not win the item and does not make any payment.
    \item The buyer observes the highest competing bid $h_t$.\footnote{Many platforms reveal the minimum bid needed to win in practice to help advertisers bid more efficiently, e.g., see \url{https://support.google.com/authorizedbuyers/answer/12798257?hl=en}.}
\end{itemize}

As the value $V_t$ can lie anywhere in $[0,1]$, the buyer effectively specifies a bidding strategy $s_t: [0,1] \to \{b_0, b_1, \dots, b_K\}$ at each time $t$, where $s_t(V_t)$ is the bid in auction $t$ if her value is $V_t$. When the buyer employs the strategy $s: [0,1] \to \{b_0, b_1, \dots, b_K\}$ and the highest competing bid is $h \in \{b_0, b_1, \dots, b_K\}$, her expected utility is given by
\begin{align*}
	u(s | F, h) \coloneqq \E_{v \sim F} \left[ (v - s(v)) \cdot \mathbf{1}(s(v) \geq h) \right] \,.
\end{align*}

An online bidding algorithm $A$ for the buyer is a (potentially randomized) procedure which specifies a bidding strategy $A_t:[0,1] \to \{b_0, b_1, \dots, b_K\}$ at each time $t$, based only on the information observed till time $t-1$ and the value $V_t$. We will measure the learning rate of an online algorithm by its (pseudo) regret compared to the best static bidding strategy $s$:
\begin{align*}
    \regret(A| F) \coloneqq \max_{s(\cdot)} \sum_{t=1}^T \E[u(s| F, h_t)]\ -\ \sum_{t=1}^T \E[u(A_t| F, h_t)] \,, 
\end{align*}
where the expectation is over any randomness in $h_t$ (which can potentially depend on the randomness in historical values $\{V_r\}_{r=1}^{t-1}$). All our algorithms will output strategies $A_t(\cdot)$ that deterministically depend on the historical highest competing bids $\{h_s\}_{s=1}^{t-1}$. Therefore, the adversary can compute $A_t(\cdot)$ using the past $\{A_r(\cdot), V_r\}_{r=1}^{t-1}$ and select $h_t$ based on it. In other words, we allow nature to choose $h_t$ based on $A_t(\cdot)$, in addition to the past $\{s_r(\cdot), V_r\}_{r=1}^{t-1}$. However, it is worth noting that $h_t$ cannot depend on the private value $V_t$ of the buyer.

\subsection{Multiple Buyers and Tie Breaking}\label{sec:tie-breaking}

Our focus in this paper is on developing algorithms for individual buyers. Consequently, the bids of competing buyers and the reserve price of the seller are only relevant in so far as they determine the bids at which the buyer under consideration wins the auction. Here, we argue that the minimum bid needed to win, denoted by $h_t$, completely captures the effect of competing bids and the reserve price. To see this, suppose there are $n-1$ competing buyers in auction $t$. Let $\beta(1)_t, \beta(2)_t, \dots, \beta(n-1)_t$ be their (potentially random) bids, and $r_t$ be the reserve price. Moreover, let $\bar \beta_t = \max_{j} \beta(j)_t$ denote the highest bid among the competing buyers, and $\Gamma \coloneqq \{i \in [n-1]\mid \beta(i)_t = \bar \beta_t\}$ be the set of competing buyers which are tied for it.

First observe that any natural tie-breaking rule, including uniform and lexicographic, can be implemented using random rankings: draw a permutation of buyers $\sigma_t \in S_n$ independently according to some distribution and break ties in favor of higher ranked buyers. For example, one can implement the uniform tie-breaking by picking a permutation uniformly at random from $S_n$. Now, conditioned on the random ranking $\sigma_t$, the buyer under consideration wins auction $t$ if and only if her bid clears the reserve price, i.e., $s_t(V_t) \geq r_t$, \emph{and} one of the following conditions is satisfied:
\begin{enumerate}
    \item $s_t(V_t) = \bar \beta_t$ and her rank is higher than all competing buyers in $\Gamma$
    \item $s_t(V_t) > \bar \beta_t$.
\end{enumerate}
If condition (1) is satisfied, we set $h_t = \max\{r_t, \bar \beta_t\}$, and if condition (2) is satisfied, we set $h_t = \max\{r_t, \bar \beta_t + \epsilon\}$. Finally, it is possible that $\bar \beta_t = b_K$ (i.e., $\bar \beta_t$ is the highest bid possible) and some competing buyer in $\Gamma$ who made that bid outranks the buyer under consideration. In which case, it is impossible for the buyer under consideration to win the auction and her utility is identically 0 for all bids. We can safely ignore such auctions without loss of generality, and we assume it is not the case here. Then, for our definition of $h_t$, it is easy to see that the buyer under consideration wins the auction if and only if $s_t(V_t) \geq h_t$, as desired. In particular, as we allow $h_t$ to be chosen by an adaptive adversary (with the exception of Section~\ref{sec:log_regret}), we can capture environments where the highest competing bid is determined by competing buyers (with independent values) and the seller running their own private learning algorithms.

\section{Concave Formulation}\label{sec:convex}

Note that the space of bidding strategies is infinite dimensional and the map $s(\cdot)\mapsto u(s| F, h)$ is non-convex, which makes online optimization over the set of bidding strategies unwieldy. As our first step, we circumvent this hurdle and show that the problem of utility maximization in first-price auctions over pure strategies can be formulated as a finite-dimensional concave maximization problem in the space of bidding \emph{probabilities}. This reformulation forms the cornerstone of all our algorithms and results.

Consider a buyer who participates in a single-shot first-price auction where the set of possible bids are $0 = b_0 < b_1 < \dots < b_K \leq 1$. As before, assume that her value $v \in [0,1]$ for the item is drawn from an absolutely continuous distribution with CDF $F: [0,1] \to [0,1]$ and density $f: [0,1] \to \R_+$. Moreover, let $h$ denote the highest competing bid and assume that it's distributed according to $\pmb d = (d_0, d_1, \dots, d_K) \in \Delta^{K+1}$, independently of the value $v \sim F$. Importantly, point masses are independent of all other distributions, and thus a deterministic $h= h_t$ is independent of $v = V_t \sim F$, which is our setting of interest. Furthermore, independence of the value and the highest competing bid is a common assumption in the literature~(for example, \citealt{han2020optimal, balseiro2022contextual, wang2023learning}) on first-price auctions, and holds in practice for large-scale markets.

Let $s: [0,1] \to \{b_0, b_1, \dots, b_K\}$ be a bidding strategy of the buyer, i.e., she bids $s(v)$ when her value is $v$. 
For this strategy $s(\cdot)$ and highest competing bid distribution $\pmb d$, the expected utility $u(s | F, \pmb d)$ of the buyer is given by
\begin{align*}
	u(s | F, \pmb d) = \E_{v \sim F, h \sim \pmb d} \left[ (v - s(v)) \cdot \mathbf 1 (s(v) \geq h) \right] = \E_{v \sim F} \left[ (v - s(v)) \cdot \sum_{i: b_i \leq s(v)} d_i \right] \,.
\end{align*}

We first simplify the space of strategies by showing that it is sufficient to restrict attention to non-decreasing and left-continuous strategies that never overbid. In particular, the optimal strategy that optimizes the utility at each value always satisfies these properties (if ties are broken appropriately).

\begin{lemma}\label{lemma:non-decreasing-opt}
For each $v \in [0, 1]$, define $s^*(v)$ to be the bid $b_j \in \{b_0, b_1, \dots, b_K\}$ which maximizes the quantity $(v - b_j) \cdot \sum_{i=0}^j d_i$, choosing the smaller one in case of equality. Then $s^* \in \argmax_{s(\cdot)} u(s|F, \pmb d)$. Moreover, $s^*$ is non-decreasing, left continuous, and satisfies $s^*(v) \leq v$.

\end{lemma}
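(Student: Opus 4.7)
My strategy is to treat each of the four claims separately, leveraging the fact that the integrand in $u(s|F,\pmb d)$ depends on $s$ only through the pointwise value $s(v)$. Define $g(v, j) \coloneqq (v - b_j) D_j$ with $D_j \coloneqq \sum_{i=0}^j d_i$, so that $u(s|F,\pmb d) = \E_{v \sim F}[g(v, j_s(v))]$ where $b_{j_s(v)} = s(v)$. Global optimality of $s^*$ is then immediate: since the integrand can be chosen independently at each $v$, pointwise maximization is globally optimal, and by definition $s^*(v)$ picks the argmax of $g(v, \cdot)$ with the smallest index selected on ties.

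The core technical step is monotonicity, which I would prove by contradiction. Suppose $v_1 < v_2$ yet $s^*(v_1) = b_{j_1}$ and $s^*(v_2) = b_{j_2}$ with $j_1 > j_2$. The tie-breaking rule forces $g(v_1, j_1) > g(v_1, j_2)$ strictly (otherwise the smaller $j_2$ would have been selected at $v_1$), while optimality at $v_2$ gives $g(v_2, j_2) \geq g(v_2, j_1)$. Adding the two inequalities and expanding using $g(v,j) = v D_j - b_j D_j$, the $b_j D_j$ terms cancel to yield $(v_1 - v_2)(D_{j_1} - D_{j_2}) > 0$. This contradicts $v_1 < v_2$ together with $j_1 > j_2$ (which forces $D_{j_1} \geq D_{j_2}$).

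For left-continuity, I would exploit that $s^*$ is a step function with at most $\binom{K+1}{2}$ jumps, each occurring at a value $v^*$ where two curves $g(\cdot, j)$ and $g(\cdot, j')$ cross. At such a threshold, let $J^*$ be the set of tied maximizers of $g(v^*, \cdot)$, so that $s^*(v^*) = b_{\min J^*}$. For any $j \in J^*$, the identity $g(v, j) - g(v, \min J^*) = (v - v^*)(D_j - D_{\min J^*})$ combined with $D_j \geq D_{\min J^*}$ shows that $\min J^*$ weakly dominates every element of $J^*$ just to the left of $v^*$, and it dominates strictly either because $D_j > D_{\min J^*}$ or because $D_j = D_{\min J^*}$ and the smaller bid wins directly. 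By continuity of $g$, all indices outside $J^*$ remain strictly suboptimal in a small neighborhood of $v^*$. Hence $s^*$ equals $b_{\min J^*} = s^*(v^*)$ on some left neighborhood of $v^*$, yielding left continuity. The no-overbidding claim then falls out as a by-product: any $b_j > v$ satisfies $g(v, j) \leq 0 \leq v D_0 = g(v, 0)$, so the tie-breaking rule forces $s^*(v)$ to be some $b_j \leq v$.

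The main obstacle is the monotonicity step, where one must carefully exploit the tie-breaking rule to secure the strict inequality at $v_1$; without it, the linear-combination argument collapses to $0 > 0$ and yields no contradiction. The analogous care in the left-continuity argument, treating the degenerate sub-case $D_j = D_{\min J^*}$, is a close second.
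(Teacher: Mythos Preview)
Your proposal is correct. The optimality and monotonicity arguments coincide with the paper's: both reduce optimality to pointwise maximization, and both prove monotonicity by contradiction, adding the strict inequality at the lower value (forced by the tie-breaking rule) to the weak inequality at the higher value and cancelling the $b_j D_j$ terms to obtain $(v_1 - v_2)(D_{j_1} - D_{j_2}) > 0$.

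The left-continuity arguments differ. You analyze the local structure of the tied-maximizer set $J^*$ at a point $v^*$, using the affine identity $g(v,j) - g(v,\min J^*) = (v - v^*)(D_j - D_{\min J^*})$ to show that $\min J^*$ remains the selected index on a left neighborhood, with a separate continuity argument to exclude indices outside $J^*$. The paper instead takes an increasing sequence $v_n \uparrow v$, uses monotonicity plus finiteness of the bid set to conclude $s^*(v_n)$ is eventually constant equal to some $b_j \le s^*(v)$, and then passes the optimality inequalities for $b_j$ at $v_n$ to the limit to obtain $s^*(v) \le b_j$. The paper's route is shorter and leans directly on the already-proved monotonicity; your route is more self-contained and gives a clearer geometric picture of why the smallest tied index is stable from the left, at the cost of having to handle the degenerate sub-case $D_j = D_{\min J^*}$ explicitly.
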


In the rest of this section, we will assume that the bidding strategy $s(\cdot)$ is non-decreasing, left-continuous and satisfies $s(v) \leq v$ for all $v \in [0,1]$. Then, if we set $v_i = \max\{v \in [0,1] \mid s(v) \leq b_{i-1}\}$ for all $0 \leq i \leq K$, we get $0 = v_0 \leq v_1\leq v_2 \leq \dots v_K \leq v_{K+1} = 1$ and $s(v) = b_i$ for all $v \in (v_i, v_{i+1}]$. In other words, we can alternately parameterize the bidding strategy $s(\cdot)$ in terms of the \emph{value thresholds} $\pmb v = \{v_i\}_{i=1}^K$ such that the bid $s(v)$ is constant between any two consecutive thresholds. In particular, note that $v_i \geq b_i$ because $b_i =s(v) \leq v$ for all $v \in (v_i, v_{i+1}]$. Now, we can rewrite the utility function $u(s | F, \pmb d)$ in terms of value thresholds as follows:
\begin{align}\label{eq:convex-refor-inter-1}
	u(s| F, \pmb d) = u(\pmb v| F, \pmb d) &\coloneqq \E_{v \sim F} \left[ \sum_{j=0}^K (v - b_j) \cdot \left(\sum_{i = 0}^j d_i \right) \cdot \mathbf{1}\left(v \in (v_j, v_{j+1}] \right) \right] \nonumber\\
	&= \sum_{j=0}^K \sum_{i = 0}^j \E_{v \sim F} \left[ (v - b_j) \cdot  d_i \cdot \mathbf{1}(v \in (v_j, v_{j+1}]) \right] \nonumber\\
	&= \sum_{i=0}^K \sum_{j=i}^K d_i \cdot  \E_{v \sim F} \left[ (v - b_j) \cdot \mathbf{1}(v \in (v_j, v_{j+1}]) \right] \nonumber \\
	&= \sum_{i=0}^K d_i \cdot \left(\E_{v \sim F}\left[ v \cdot \mathbf{1}(v > v_i) \right] - \sum_{j=i}^K b_j \cdot (F(v_{j+1}) - F(v_j)) \right) \,,
\end{align}

Note that this transformation allows use to reduce the infinite-dimensional optimization problem $\max_{s(\cdot)} u(s|F, \pmb d)$ to a finite-dimensional one $\max_{\pmb v} u(\pmb v| F, \pmb d)$, which is a considerable simplification. However, since $F$ can be arbitrary, $\pmb v \mapsto u(\pmb v| F, \pmb d)$ may not be convex, we are still left with a non-convex problem. Getting rid of this non-convexity requires yet another change of variables, which we outline next.

To motivate our approach, we first rewrite $u(\pmb v| F, \pmb d)$ in terms of the \textit{generalized inverse} of $F$, defined as $F^{-}(y) = \inf\{v \in [0,1] \mid F(v) \geq y\}$. To do so, we use the fact that, if $U$ is the uniform random variable over $[0,1]$, then $F^{-}(U)$ is distributed according to the CDF $F$. Therefore,
\begin{align*}
	\E_{v \sim F}\left[ v \cdot \mathbf{1}(v > v_i) \right] &=\E_{U}\left[ F^{-}(U) \cdot \mathbf{1}(F^{-}(U) > v_i) \right]\\
	&= \int_0^1F^-(u) \cdot \mathbf{1}(F^{-}(u) > v_i) \cdot du\\
	&= \int_0^1F^-(u) \cdot \mathbf{1}(u > F(v_i)) \cdot du\\
	&= \int_{F(v_i)}^1 F^- (u) \cdot du\,,
\end{align*}
where the third equality follows from part (5) of Proposition 1 of \citet{embrechts2013note}. This allows us to simplify \eqref{eq:convex-refor-inter-1} further and write
\begin{align}\label{eq:convex-refor-inter-2}
	u(s| F,\pmb d) = u(\pmb v| F, \pmb d) = \sum_{i=0}^K d_i \cdot \left( \int_{F(v_i)}^1 F^- (u) \cdot du - \sum_{j=i}^K b_j \cdot (F(v_{j+1}) - F(v_j)) \right) \,.
\end{align}

\begin{figure}[t!]
    \centering
    \includegraphics[width=0.5\textwidth]{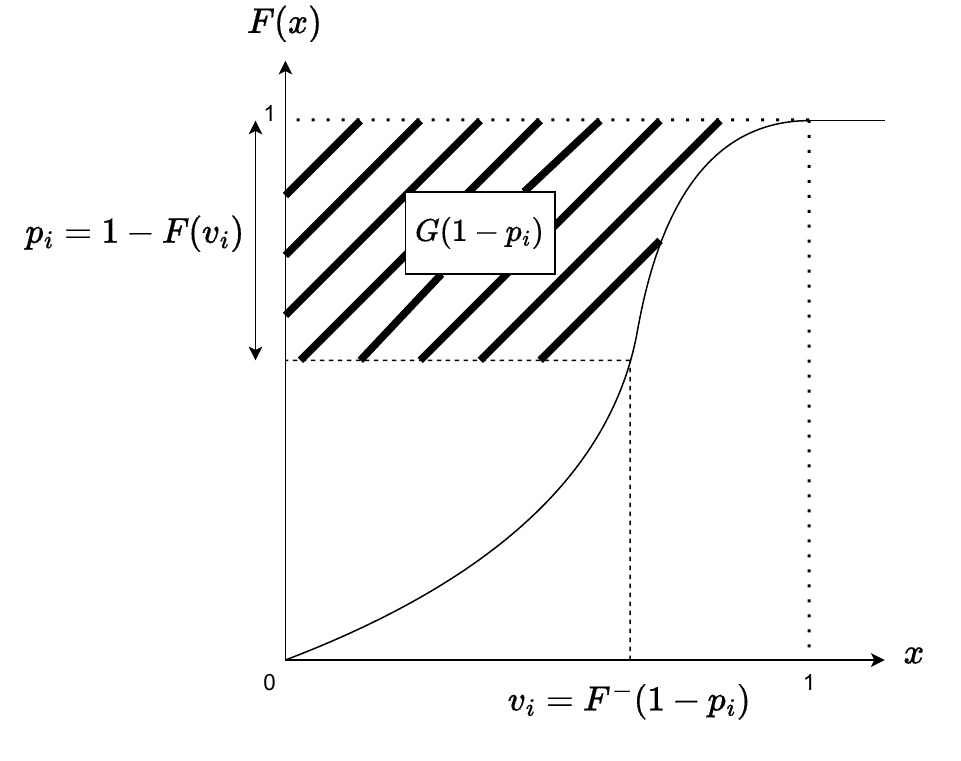}
    \caption{The change of variables that transforms a value threshold $v_i$ to the corresponding bidding probability $p_i$, and vice versa. The indicated area represents the concave integral term $G(1-p_i) = \int_{1-p_i}^1 F^-(u) \cdot du$ in the utility function $u(\p|F,\d)$.}
    \label{fig:enter-label}
\end{figure}

Now, observe that since $F^{-}(u)$ is non-decreasing, the function $G(x) \coloneqq \int_x^1 F^-(u) \cdot du$ is concave. Formally, for $0 \leq x_1 < x_2 \leq 1$ and $\bar x = (x_1 + x_2)/2$, we have
\begin{align*}
	&G(x_1) - G(\bar x) = \int_{x_1}^{\bar x} F^-(u) \cdot du \leq \int_{\bar x}^{x_2} F^-(u) \cdot du = G(\bar x) - G(x_2) \implies \frac{G(x_1) + G(x_2)}{2} \leq G(\bar x)\,,
\end{align*}
where the inequality follows from the fact that $F^-(\cdot)$ is non-decreasing. As $G(x)$ is continuous, the above inequality implies that $G(x)$ is concave.

This observation motivates our final change of variables. Let $p_i$ denote the probability that the buyer submits a bid greater than or equal to $b_i$, i.e., set $p_j = \mathbb P(s(v) \geq b_j) = 1 - F(v_j)$ for all $j \in [K]$ (define $p_{K+1} = 0$ for convenience). Then, $F(v_i) = 1 - p_i$, and we can rewrite \eqref{eq:convex-refor-inter-2} in the form
\begin{align}\label{eq:convex-refor}
	u(s| F,\pmb d) = u(\pmb p| F, \pmb d) \coloneqq \sum_{i=0}^K d_i \cdot \left( \int_{1 - p_i}^1 F^- (u) \cdot du - \sum_{j=i}^K b_j \cdot (p_j - p_{j+1}) \right)\,.
\end{align}

As $G(x) \coloneqq \int_x^1 F^-(u) \cdot du$ is concave, the function $\p \mapsto u(\p|F,\d)$ is a positive linear combination of concave functions and purely linear terms, and therefore itself is concave. Thus we have derived a concave formulation for the utility maximization problem, and (surprisingly) done so without relying on randomized bidding strategies. We summarize the reformulation result in the following theorem, which delineates the transformation between the space of bidding strategies and the space of bidding probabilities $\pp = \{\pmb p \in [0,1]^K \mid p_j \geq p_{j+1},\ p_j \leq 1 - F(b_j) \}$.

\begin{theorem}\label{thm:convex-refor}
The following statements hold for all value distributions $F$ and competing bid distributions $\pmb d \in \Delta^{K+1}$:
	\begin{enumerate}
		\item $\pmb p \mapsto u(\pmb p | F, \pmb d)$ is concave.
		\item Let $s: [0,1] \to \{b_0, b_1, \dots, b_K\}$ be a non-decreasing left-continuous bidding strategy with $s(v) \leq v$ for all $v \in [0,1]$, and set $p_j = \mathbb P(s(v) \geq b_j)$ for all $j \in [K]$. Then, $\pmb p \in \pp$ and  $u(s|F, \pmb d) = u(\p| F, \pmb d)$.
		\item Let $\pmb p \in \pp$ and define bidding strategy $s:[0,1]\to \{b_0, b_1, \dots, b_K\}$ as
			\begin{align*}
				s(v) = b_i \quad \text{for } v \in \left(F^-\left(1- p_i \right), F^-\left(1 - p_{i+1} \right) \right]\,.
			\end{align*}
			and $s(0) = 0$. Then, $u(s|F, \pmb d) = u(\p| F, \pmb d)$.
	\end{enumerate}
\end{theorem}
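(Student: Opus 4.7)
The three parts of the theorem follow largely by collecting and formalizing pieces already developed in the excerpt, so the plan is mainly to tie up a few technical loose ends.

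For part 1, I would write $u(\pmb p | F, \pmb d)$ as
\begin{align*}
u(\pmb p | F, \pmb d) = \sum_{i=0}^K d_i \cdot G(1-p_i) \;-\; \sum_{i=0}^K d_i \sum_{j=i}^K b_j (p_j - p_{j+1}),
\end{align*}
with $G(x) \coloneqq \int_x^1 F^{-}(u)\, du$. The excerpt already shows $G$ is concave via a midpoint argument using monotonicity of $F^{-}$. Each $\pmb p \mapsto G(1-p_i)$ is then concave (concavity preserved under an affine precomposition), and since $d_i \geq 0$ the first sum is a non-negative combination of concave functions, hence concave. The second sum is linear in $\pmb p$, so the total is concave.

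For part 2, I would take the non-decreasing left-continuous strategy $s$ with $s(v) \leq v$ and set $v_i = \sup\{v \in [0,1] \mid s(v) \leq b_{i-1}\}$ for $i \geq 1$, with $v_0 = 0$ and $v_{K+1}=1$. Left-continuity and monotonicity of $s$ guarantee that $s(v) = b_i$ precisely on $(v_i, v_{i+1}]$, giving $p_i = \mathbb{P}(s(v) \geq b_i) = 1 - F(v_i)$. The monotonicity $p_j \geq p_{j+1}$ then follows from $v_j \leq v_{j+1}$. To check $p_j \leq 1 - F(b_j)$, I would use $s(v) \geq b_j$ for every $v > v_j$ (by the definition of $v_j$) together with $s(v) \leq v$; taking $v \downarrow v_j$ yields $b_j \leq v_j$, so $F(b_j) \leq F(v_j) = 1-p_j$, giving $\pmb p \in \pp$. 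The utility equality $u(s|F, \pmb d) = u(\pmb p|F, \pmb d)$ is then exactly the chain of manipulations between displays~\eqref{eq:convex-refor-inter-1} and~\eqref{eq:convex-refor} in the excerpt, applied to these thresholds.

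For part 3, I would set $v_i = F^{-}(1-p_i)$. Because $F$ admits a density on $[0,1]$ with $F(0)=0$ and $F(1)=1$, $F$ is continuous and surjective onto $[0,1]$, so $F(F^{-}(y)) = y$ for all $y \in [0,1]$ (by the standard properties of generalized inverses, e.g., part (4) of Proposition~1 of \citet{embrechts2013note}); in particular $F(v_i) = 1-p_i$. Since $p_i \geq p_{i+1}$, $v_i \leq v_{i+1}$, so the strategy $s(v) = b_i$ on $(v_i,v_{i+1}]$ is well-defined and monotone. Feeding this $s$ into $u(s|F,\pmb d)$ and using exactly the same derivation as in part 2 gives $u(s|F,\pmb d) = u(\pmb v|F,\pmb d)$; then substituting $F(v_i) = 1-p_i$ into \eqref{eq:convex-refor-inter-2} yields $u(\pmb v|F,\pmb d) = u(\pmb p|F,\pmb d)$.

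The only nonroutine step is the edge condition in part 2 ($v_j \geq b_j$), where one has to combine $s(v) \leq v$ with the characterization of $v_j$ via left-continuity; otherwise the theorem is essentially a bookkeeping consolidation of the derivation already carried out in Section~\ref{sec:convex}.
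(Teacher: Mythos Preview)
Your proposal is correct and follows essentially the same route as the paper: parts (1) and (2) are exactly the consolidation of the derivation in Section~\ref{sec:convex} (including the $v_j \geq b_j$ check, which the paper also handles via $b_j = s(v) \leq v$ on $(v_j, v_{j+1}]$), and part (3) is proved in the paper by the same use of $F(F^{-}(y)) = y$ from \citet{embrechts2013note} to recover $\mathbb{P}(s(v) \geq b_i) = p_i$ and then invoking part (2).
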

\begin{remark}
    Similar change-of-variables to move the problem to the ``Quantile Space" have been employed in the literature on pricing to great effect (see \citealt{hartline2013mechanism} and the references therein). However, unlike our result, the value distribution needs to be regular to achieve concavity in pricing. 
\end{remark}

Having formulated the problem of bidding in first-price auctions as a concave maximization problem, we can now exploit the powerful machinery of Online Convex Optimization (\citealt{shalev2012online, hazan2016introduction}) in order to develop learning algorithms for bidding in first-price auctions. First, in Section~\ref{sec:known_dist}, we propose and analyze a natural Online Gradient Ascent algorithm based on our concave formulation. We prove that, in addition to attaining the optimal regret scaling of $O(\sqrt{T})$, it is robust to strategic reserve pricing by the seller. However, the direct application of Online Gradient Ascent to the concave formulation requires knowledge of the value distribution $F$, which may not always be available. In Section~\ref{sec:unknown_dist}, we propose another Gradient-Ascent-based algorithm which does not require the knowledge of the value distribution $F$. It also attains $O(\sqrt{T})$-regret while being robust to strategic reserve pricing by the seller, and is additionally incentive compatible as an autobidding algorithm for the buyer.

\section{Known Value Distribution}\label{sec:known_dist}

In this section, we will assume that the value distribution $F$ is known to the buyer ahead of time (before the first auction). Leveraging the concave formulation of Theorem~\ref{thm:convex-refor}, we propose an algorithm (Algorithm~\ref{alg:known_GA}) that runs Gradient Ascent in the space of bidding probabilities $\pp$ with reward functions $\{u(\cdot| F, h_t)\}_t$. To determine the bid in each auction, it translates the iterates $\pmb p_t \in \pp$ of Gradient Ascent to bidding strategies by using the change-of-variables equivalence established in Theorem~\ref{thm:convex-refor}.

\begin{algorithm}[H]
	\textbf{Input:} Value distribution $F$, initial iterate $\pmb p_1 \in \pp$, and step size $\eta$.\\
	\For{$t=1$ to $T$}
	{ 
		Observe value $V_t \sim F$;\\
		Bid $A_t(V_t) = b_i$ if $V_t \in \left(F^-\left(1- p_{t,i} \right), F^-\left(1 - p_{t,i+1} \right) \right]$;\\
		Observe highest competing bid $h_t$;\\
		Update $\pmb p_t$ with a Gradient Ascent step: 
		\begin{align}\label{eq:GA_update}
			\pmb p_{t+1} = \argmin_{\pmb p \in \pp} \| \pmb p - \pmb p_t^+\| \quad \text{where} \quad \pmb p_t^+ = \pmb p_t + \eta \cdot \grad u(\pmb p_t|F, h_t)
		\end{align}
    }
   \caption{Gradient Ascent with Known Value Distribution}
   \label{alg:known_GA}
\end{algorithm}

Before diving into the analysis of Algorithm~\ref{alg:known_GA}, we take a deeper look at its updates to build intuition. First observe that, for highest competing bid $h = b_i$, we can rewrite the expected utility $u(\pmb p| F, h)$ as
\begin{align}
	u(\pmb p| F, h) &=	\int_{1 - p_i}^1 F^- (u) \cdot du - \sum_{j=i}^K b_j \cdot (p_j - p_{j+1}) \nonumber \\
	&= \int_{1 - p_i}^1 F^- (u) \cdot du - b_i \cdot p_i - \sum_{j=i+1}^K (b_{j} - b_{j-1}) \cdot p_j \nonumber\\
	&= \int_{1 - p_i}^1 F^- (u) \cdot du - b_i \cdot p_i - \sum_{j=i+1}^K \epsilon \cdot p_j
\end{align}

In particular, this implies that the gradient $\nabla u(\pmb p_t | F, h_t)$ is given by
\begin{align*}
	\partial_j u(\pmb p_t | F, h_t) = \begin{cases}
		0 &\text{if } b_j < h_t\\
		F^-(1 - p_{t,j}) - b_j &\text{if } b_j = h_t\\
		-\epsilon &\text{if } b_j > h_t
	\end{cases}
\end{align*}

Ignoring the projection step, i.e., assuming $\pmb p_{t+1} = \pmb p_t^+$, we can see that Algorithm~\ref{alg:known_GA} updates $\p_t$ to
\begin{itemize}
	\item increase the probability of bidding $h_t = b_i$ because $F^-(1 - p_{t,i}) - b_i \geq 0$ for all $p_{t,i} \leq 1 - F(b_i)$,
	\item decrease the probability of bidding $b_j$ or higher for all $b_j > h_t$.
\end{itemize}

This is intuitive because $h_t$ is the optimal bid against the highest competing bid of $h_t$, and bidding strictly higher only increases the payment without increasing the chance of winning the item. Although the projection step is important and significantly complicates the analysis, we will largely ignore it here to build intuition. Importantly, as we show in Appendix~\ref{appendix:known_GA}, it is possible to execute the projection step in $O(K)$ time. In fact, we give a (quasi) closed-form expression for $\p_{t+1}$ in terms of $\p_t$ in Lemma~\ref{lemma:GA_update}.

The gradient $\nabla u(\p|F,h)$ also has an economic interpretation, wherein the $j$-th component $\partial_j u(\pmb p_t | F, h_t)$ is simply the change in utility from bidding $b_j$ instead of $b_{j-1}$:
\begin{itemize}
    \item If $b_j < h$, there is no change in utility from bidding $b_{j-1}$ instead.
    \item When $b_j = h$, bidding $b_j$ increases the utility by $F^-(1- p_j) - b_j$ when compared to the losing bid of $b_{j-1}$.
    \item If $b_j > h$, bidding $b_j$ increases the payment by $\epsilon$ in comparison to $b_{j-1}$. Since both $b_j$ and $b_{j-1}$ result in a win, this reduces the utility by $\epsilon$.
\end{itemize}

\subsection{Regret Guarantees}

We now investigate the regret guarantees of Algorithm~\ref{alg:known_GA}. Since Algorithm~\ref{alg:known_GA} is a variant of Online Gradient Ascent, it inherits the low-regret bounds enjoyed by that family of algorithms. In particular, it inherits the regret bound of Online Gradient Descent (e.g., see \citealt{shalev2012online} or \citealt{hazan2016introduction}), which we formally state in the following proposition.

\begin{proposition}\label{prop:sub-linear-regret}
	With step size $\eta$ and initial iterate $\p_1 \in \pp$, the regret of Algorithm~\ref{alg:known_GA} satisfies 
	\begin{align*}
		\regret(A|F) \leq \frac{K}{2 \eta} + 2 \eta\cdot T\,.
	\end{align*}
	In particular, setting $\eta = \sqrt{K/2T}$ yields
	\begin{align*}
		\regret(A|F) \leq 2\sqrt{2K} \cdot \sqrt{T}\,.
	\end{align*}
\end{proposition}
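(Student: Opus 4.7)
The plan is to observe that Algorithm~\ref{alg:known_GA} is precisely Online Gradient Ascent applied to the sequence of concave reward functions $\{u(\cdot\,|\,F,h_t)\}_{t=1}^T$ on the convex feasible set $\pp$, and then to invoke the textbook OGD regret bound. The work reduces to (i) translating the benchmark $\max_{s(\cdot)}\sum_t \E[u(s\,|\,F,h_t)]$ into a maximum over $\pp$, (ii) running the standard telescoping argument using the concavity from Theorem~\ref{thm:convex-refor}, and (iii) supplying numerical bounds on the diameter of $\pp$ and the per-round gradient norms.

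For the benchmark reduction, I would use the fact that $u(s\,|\,F,\cdot)$ is linear in the highest-competing-bid distribution to write $\sum_t \E[u(s\,|\,F,h_t)] = T\cdot u(s\,|\,F,\bar{\pmb d})$, where $\bar{\pmb d}$ is the average of the marginal distributions of the $h_t$'s. The hindsight-optimal $s^*$ is therefore a pointwise optimizer against the single distribution $\bar{\pmb d}$, and Lemma~\ref{lemma:non-decreasing-opt} guarantees that it can be taken to be non-decreasing, left-continuous, with $s^*(v)\le v$. Theorem~\ref{thm:convex-refor}(2) then provides a $\pmb p^*\in\pp$ with $u(s^*\,|\,F,h)=u(\pmb p^*\,|\,F,h)$ for every $h$, while Theorem~\ref{thm:convex-refor}(3) gives $u(A_t\,|\,F,h_t)=u(\pmb p_t\,|\,F,h_t)$ for the iterates produced by the algorithm. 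Hence $\regret(A\,|\,F)=\E[\sum_t u(\pmb p^*\,|\,F,h_t)-u(\pmb p_t\,|\,F,h_t)]$.

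For the OGD telescoping argument, concavity of $u(\cdot\,|\,F,h_t)$ from Theorem~\ref{thm:convex-refor}(1) gives $u(\pmb p^*\,|\,F,h_t)-u(\pmb p_t\,|\,F,h_t)\le \langle \nabla u(\pmb p_t\,|\,F,h_t),\pmb p^*-\pmb p_t\rangle$, and the update~\eqref{eq:GA_update} combined with nonexpansiveness of Euclidean projection onto $\pp$ yields $\|\pmb p_{t+1}-\pmb p^*\|^2 \le \|\pmb p_t-\pmb p^*\|^2 - 2\eta\langle \nabla u(\pmb p_t\,|\,F,h_t),\pmb p^*-\pmb p_t\rangle + \eta^2\|\nabla u(\pmb p_t\,|\,F,h_t)\|^2$. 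Rearranging, telescoping over $t$, and dropping the nonpositive terminal term produces the per-comparator bound
\begin{align*}
\sum_{t=1}^T \bigl(u(\pmb p^*\,|\,F,h_t)-u(\pmb p_t\,|\,F,h_t)\bigr)\ \le\ \frac{\|\pmb p_1-\pmb p^*\|^2}{2\eta}+\frac{\eta}{2}\sum_{t=1}^T \|\nabla u(\pmb p_t\,|\,F,h_t)\|^2.
\end{align*}
Now $\pp\subseteq[0,1]^K$ gives $\|\pmb p_1-\pmb p^*\|^2\le K$, and the explicit gradient formula from the section shows that when $h_t=b_i$ the gradient has one coordinate $F^-(1-p_{t,i})-b_i\in[0,1]$ in absolute value and $K-i$ further coordinates equal to $-\epsilon$; using $\epsilon\le 1/K$ yields a uniform bound $\|\nabla u(\pmb p_t\,|\,F,h_t)\|^2\le 4$. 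Substituting and taking expectations gives $\regret(A\,|\,F)\le K/(2\eta)+2\eta T$, and tuning $\eta=\sqrt{K/(2T)}$ recovers the $2\sqrt{2K}\cdot\sqrt{T}$ bound.

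The main obstacle is the benchmark reduction: one needs to string together Lemma~\ref{lemma:non-decreasing-opt} with parts (2) and (3) of Theorem~\ref{thm:convex-refor} to argue that both the algorithm's strategy and the hindsight-optimal strategy have faithful representatives in $\pp$ with identical utilities, so that the comparator against which OGD regret is measured matches the bidding-strategy comparator appearing in $\regret(A\,|\,F)$. Once this translation is in hand, the remainder is a direct invocation of the textbook OGD analysis, with only the gradient-norm bound requiring the setting-specific observation that $\epsilon\le 1/K$ keeps the tail of $\epsilon$ coordinates from blowing up.
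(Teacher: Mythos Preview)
Your proposal is correct and follows essentially the same approach as the paper: both reduce to the textbook OGD regret bound on the concave reformulation, translate the strategy-space benchmark into $\pp$ via Lemma~\ref{lemma:non-decreasing-opt} and Theorem~\ref{thm:convex-refor}, and plug in diameter and gradient bounds. The only cosmetic differences are that the paper cites the OGD bound directly rather than writing out the telescoping, and uses the slightly sharper estimate $\|\nabla u(\pmb p_t|F,h_t)\|^2 \le \epsilon^2(K-1)+1 \le 2$ (paired with the OGD form having $\eta$ rather than $\eta/2$ in front of the gradient term), which lands at the same $K/(2\eta)+2\eta T$.
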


Proposition~\ref{prop:sub-linear-regret} shows that Algorithm~\ref{alg:known_GA} achieves $O(\sqrt{T})$ regret. The next result shows that this is the best that can be achieved by any online bidding algorithm.

\begin{proposition}\label{prop:regret-lower-bound}
	Let $A$ be any online algorithm for bidding in repeated first-price auctions, then there exists a non-adaptive deterministic sequence of highest competing bids $\{h_t\}_{t=1}^T$ such that
	\begin{align*}
		\regret(A| F) = \max_{s(\cdot)} \sum_{t=1}^T u(s| F, h_t)\ -\ \sum_{t=1}^T u(A_t| F, h_t) \geq \Omega(\sqrt{T})\,.
	\end{align*}
	This is true even when there is just one non-zero bid ($K=1$).
\end{proposition}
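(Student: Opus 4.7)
My plan is to reduce the problem to the classical $\Omega(\sqrt{T})$ lower bound for two-expert prediction under full information, and then apply Yao's principle to extract a non-adaptive deterministic hard sequence of highest competing bids.

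I take the hard instance to be $K=1$ with bid $b = b_1 \in (0,1)$, and $F$ uniform on $[1-\delta, 1]$ for a small $\delta \in (0, 1-b)$ to be chosen later. This distribution has bounded density $1/\delta$, ensures $V_t \geq b$ so bidding $b$ is always feasible, and concentrates $V_t$ near $1$. With $V_t$ essentially constant, value-dependent benchmark strategies can improve over the pure actions ``bid $0$'' and ``bid $b$'' by at most $O(\delta)$ per round, so the problem effectively reduces to a two-action one. For the random adversary, I draw $\{h_t\}$ i.i.d.\ from $\{0, b\}$ with $\Pr(h_t = 0) = q^* := 1 - b/\E_F[V]$, which equalizes the per-round expected utilities of the two pure strategies at $\mu^* := \E_F[V] - b$.

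The analysis then splits into an algorithm upper bound and a benchmark lower bound. For the algorithm, independence of $h_t$ from both the history and $V_t$ implies that, for any bidding strategy $s_t$, the conditional expected per-round reward given $V_t$ is either $V_t - b$ or $q^* V_t$; taking the pointwise maximum and averaging over $V_t$ shows that the best achievable per-round expected reward of any strategy is $\mu^* + O(\delta)$, so $\E[\text{Reward}(A)] \leq T\mu^* + O(T\delta)$. For the benchmark, consider the pure-strategy cumulative utilities $R_0 := \sum_t V_t \mathbf{1}(h_t = 0)$ and $R_b := \sum_t (V_t - b)$: both have mean $T\mu^*$, but $R_0 - R_b = \sum_t [b - V_t \mathbf{1}(h_t = b)]$ is a sum of $T$ i.i.d.\ bounded zero-mean random variables with $\Theta(1)$ variance per term (the variance comes from the Bernoulli $\mathbf{1}(h_t = b)$, not from the negligible spread of $V_t$). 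A standard anti-concentration inequality (Paley--Zygmund or Berry--Esseen) then gives $\E[|R_0 - R_b|] = \Omega(\sqrt{T})$, so $\E[\max(R_0, R_b)] \geq T\mu^* + \Omega(\sqrt{T})$.

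Combining the two bounds, the expected regret under the random adversary is at least $\Omega(\sqrt{T}) - O(T\delta)$; choosing $\delta = O(1/T)$ (leaving the density bound $\bar f$ finite) makes the correction negligible, yielding $\Omega(\sqrt{T})$ expected regret. Yao's principle then extracts a non-adaptive deterministic sequence $\{h_t\}$ realizing this bound. The main obstacle is that value-dependent benchmark strategies can exploit the continuous spread of $V$ to beat both pure strategies, and handling this forces me to shrink $\delta$ with $T$ so that the advantage stays $o(\sqrt{T})$. A $T$-independent construction would require $F^-$ to be flat on an interval (equivalently, $F$ to have an atom), violating bounded density; the shrinking-$\delta$ construction is the cleanest workaround.
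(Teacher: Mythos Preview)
Your proposal is correct and follows essentially the same approach as the paper: the paper also takes $K=1$ with $F$ uniform on a shrinking interval of width $1/T$ (specifically $[1/2,\,1/2+1/T]$ with $b_1=1/4$), draws $h_t$ i.i.d.\ Bernoulli on $\{0,b_1\}$, upper-bounds any algorithm's per-round expected reward by $\mu^*+O(1/T)$, lower-bounds the benchmark via anti-concentration of the Bernoulli sum, and then invokes Yao's principle. The paper's choice of constants makes the equalizing probability exactly $1/2$, whereas you compute $q^*$ in general; and note that since the regret benchmark in this paper is $\max_s\sum_t u(s\mid F,h_t)$ with $u$ already averaging over $V\sim F$, your $R_0,R_b$ should use $\E_F[V]$ rather than the realized $V_t$, though the difference is $O(T\delta)$ and is absorbed by your choice $\delta=O(1/T)$.
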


The proof of Proposition~\ref{prop:regret-lower-bound} uses an argument similar to the one used for establishing $\Omega(\sqrt{T})$-regret in Online Convex Optimization (e.g., see Theorem~3.5.1 of \citealt{hazan2016introduction}), which leverages the anti-concentration property of sums of i.i.d. binomial random variables. 

Proposition~\ref{prop:sub-linear-regret} highlights the power of our concave formulation (Theorem~\ref{thm:convex-refor}): it allows us to directly leverage the powerful theory of Online Convex Optimization to get the optimal regret rate. Moreover, unlike previous algorithms, our Algorithm~\ref{alg:known_GA} uses pure strategies which are monotonic. This ensures that having a higher values never leads to lower bids, a property that algorithms based on randomized strategies (like the ones proposed in \citealt{balseiro2022contextual}) lack.

\subsection{Strategic Robustness}

Although $O(\sqrt{T})$-regret is a desirable property, a variety of other algorithms proposed in previous works also achieve $O(\sqrt{T})$ regret. However, as we discussed in the introduction, regret is not the end-all-be-all performance metric, and other properties of algorithms turn out to be equally important in real-world auction markets. Specifically, the strategic nature of online advertising markets implies that resistance to manipulation by the seller is of paramount importance. In the remainder of this section, we will demonstrate the strategic robustness of Algorithm~\ref{alg:known_GA} by proving that it limits the seller's average revenue to that attained under the optimal mechanism (posting the monopoly reserve), i.e., the seller cannot exploit their knowledge of the buyer's algorithm to extract more average revenue from her than is possible under the optimal single-shot mechanism. With this, Algorithm~\ref{alg:known_GA} stands out from past work, all of which either attain $O(\sqrt{T})$ regret or are strategically robust, but fail to achieve both simultaneously.

Before proceeding with the formal statement and proof of strategic robustness of Algorithm~\ref{alg:known_GA}, we introduce some notation. When the buyer uses the bidding strategy $s(\cdot)$, the revenue that the seller extracts from her under the reserve-price/highest-competing-bid $h$ is given by
\begin{align*}
	\rev(s, h) \coloneqq \E_{v \sim F}\left[ s(v) \cdot \mathbf 1(s(v) \geq h)\right]\,.
\end{align*}
Moreover, let $\mye(F)$ denote the revenue obtained by the optimal mechanism, i.e., 
\begin{align*}
	\mye(F) = \max_{r \in [0,1]} r \cdot (1 - F(r))\,.
\end{align*}

The following theorem demonstrates the strategic robustness of Algorithm~\ref{alg:known_GA}. It states that the maximum average revenue that can be extracted from a buyer using Algorithm~\ref{alg:known_GA} is bounded above by $\mye(F) + O(1/\sqrt{T})$. In other words, if the buyer uses Algorithm~\ref{alg:known_GA} to bid and the seller wants to maximize the revenue that is extracted from her, she cannot do much better than posting the monopoly reserve price $h_t = \argmax_{r \in [0,1]} r \cdot (1 - F(r))$ at each time step $t \in [T]$. This is despite the fact that the seller has complete knowledge of the exact algorithm being used by the buyer and her value distribution. Importantly, none of the mean-based algorithms, like the popular Multiplicative/Exponential Weights algorithm, satisfy this property~\citep{braverman2018selling}; we provide a concrete example demonstrating their lack of strategic robustness in Subsection~\ref{subsec:threshold_robust}.

\begin{theorem}\label{thm:strat_robust}
	With step size $\eta = \sqrt{K/2T}$ and initial iterate $\p_1 \in \pp$, Algorithm~\ref{alg:known_GA} satisfies
	\begin{align*}
		\sum_{t=1}^T \E[\rev(A_t, h_t)] \leq \mye(F) \cdot T + \sqrt{2KT}\,.
	\end{align*}
\end{theorem}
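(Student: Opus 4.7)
My plan is to find a pointwise identity that decomposes $\rev(\p_t, h_t)$ into a term bounded by $\mye(F)$ plus a quantity that can be controlled using the standard OGA potential argument applied to a cleverly chosen comparator. The match between the theorem's target $\mye(F) T + \sqrt{2KT}$ and the regret bound in Proposition~\ref{prop:sub-linear-regret} strongly suggests that the proof should route through the OGA analysis.

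First I would compute $\langle \nabla u(\p_t|F, h_t), \p_t \rangle$ explicitly using the three-case gradient formula derived earlier in this section. Writing $h_t = b_{j_t}$, this evaluates to $(F^-(1-p_{t,j_t}) - b_{j_t}) p_{t,j_t} - \epsilon \sum_{i > j_t} p_{t,i}$, which differs from $\rev(\p_t, h_t) = b_{j_t} p_{t,j_t} + \epsilon \sum_{i > j_t} p_{t,i}$ only through replacing $b_{j_t}$ with $F^-(1-p_{t,j_t})$. Rearranging yields the key identity
\begin{equation*}
\rev(\p_t, h_t) = F^{-}(1-p_{t,j_t}) \cdot p_{t,j_t} - \langle \nabla u(\p_t|F, h_t), \p_t \rangle.
\end{equation*}
The first term equals $v(1-F(v))$ for $v = F^{-}(1-p_{t,j_t})$, i.e.\ the revenue from posting a single-shot reserve $v$ against a buyer with value distribution $F$, and is therefore pointwise at most $\mye(F) = \max_r r(1-F(r))$.

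For the remaining $-\sum_t \langle \nabla u(\p_t|F, h_t), \p_t \rangle$ term, I would apply the textbook OGA potential argument with the trivial comparator $\pmb 0 \in \pp$. Non-expansiveness of the projection onto $\pp$ gives $\|\p_{t+1}\|^2 \leq \|\p_t + \eta\, \nabla u(\p_t|F, h_t)\|^2$; expanding, rearranging, and telescoping then produces
\begin{equation*}
-\sum_t \langle \nabla u(\p_t|F, h_t), \p_t \rangle \leq \frac{\|\p_1\|^2}{2\eta} + \frac{\eta}{2} \sum_t \|\nabla u(\p_t|F, h_t)\|^2.
\end{equation*}
Using $\|\p_1\|^2 \leq K$ and the easy coordinate-wise bound $\|\nabla u(\p_t|F, h_t)\|^2 \leq 2$ (one coordinate in $[0,1]$ at $i=j_t$ and the rest equal to $-\epsilon$ with $K\epsilon^2 \leq 1$), together with the theorem's step size $\eta = \sqrt{K/(2T)}$, the right-hand side evaluates to $\sqrt{2KT}$. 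Summing the two contributions yields the claimed bound.

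The main obstacle I anticipated is conceptual: regret-style bounds naturally control the buyer's \emph{utility}, not the seller's \emph{revenue}, and naive attempts using the standard regret inequality with a single static comparator run into trouble, since the per-round optimal comparator (bid $h_t$ when $V \geq h_t$) changes with $t$, while the best-in-hindsight static comparator can itself extract strictly more than $\mye(F)$ per round. The identity in the first step is the crucial conceptual move -- by re-packaging revenue as ``Myerson term minus OGA potential drift,'' it sidesteps the need for any clever comparator and reduces everything to the textbook OGA potential argument with the trivial choice $\p^* = \pmb 0$. Once this identity is found, the remainder is routine calculation.
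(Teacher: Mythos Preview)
Your proof is correct and is, at its core, the same potential argument as the paper's: both amount to showing $\Phi(\p_{t+1}) - \Phi(\p_t) + \rev(\p_t,h_t) \leq \mye(F) + \eta$ with $\Phi(\p) = \|\p\|^2/(2\eta)$, then telescoping. The difference lies in how the potential drift is controlled. The paper invokes its explicit characterization of the projected update (Lemma~\ref{lemma:GA_update}) to compute $\Phi(\p')-\Phi(\p)$ coordinate-by-coordinate, tracking the pool-adjacent-violators structure of the projection and bounding several residual terms. You instead observe the clean algebraic identity $\rev(\p_t,h_t) + \langle \nabla u(\p_t|F,h_t),\p_t\rangle = F^-(1-p_{t,j_t})\,p_{t,j_t} \leq \mye(F)$, and then handle $-\langle \nabla u,\p_t\rangle$ via the standard non-expansiveness step of OGA with comparator $\pmb 0\in\pp$. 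This is strictly more elementary: it bypasses Lemma~\ref{lemma:GA_update} entirely and reduces the whole proof to two lines of textbook OGA calculus. The paper's route, on the other hand, makes the ``instance-dependent'' remark after the theorem more transparent, since it explicitly exposes the term $p_{t,j_t}\,F^-(1-p_{t,j_t})$ through the direct potential computation; your identity yields the same refinement just as easily.
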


\begin{remark}
    In fact, our proof yields a stronger instance-dependent upper bound. For $h_t = b_{i_t}$ for all $t \in [T]$, we get:
    \begin{align*}
        \sum_{t=1}^T \E[\rev(A_t, h_t)] \leq \sum_{t=1}^T p_{t, i_t} \cdot F^-(1 - p_{t, i_t}) + \sqrt{2KT}\,.
    \end{align*}
    Note that $p_{t, i_t} \cdot F^-(1 - p_{t, i_t})$ is simply the revenue attained from the posted-price mechanism that sells the item with probability exactly $p_{t, i_t}$, which is the probability that the algorithm wins the item in auction $t$. Therefore, $\sum_{t=1}^T p_{t, i_t} \cdot F^-(1 - p_{t, i_t})$ is the total revenue attained from selling the $T$ items separately to a strategic buyer, with the price for item $t$ being $F^-(1 - p_{t, i_t})$. Under this mechanism, the probability of selling item $t$ is equal to $p_{t, i_t}$, which is the probability with which item $t$ is sold to Algorithm~\ref{alg:known_GA}, thereby yielding a more fine-grained instance-dependent bound.
\end{remark}

The proof of Theorem~\ref{thm:strat_robust} is based on a novel potential-function argument that couples the revenue of the seller in each auction with the change in the potential caused by the update of Algorithm~\ref{alg:known_GA}. Importantly, unlike \citet{braverman2018selling}, it does not rely on ex-post incentive compatibility, and instead directly establishes strategic robustness, which may be of independent interest. The intuition of the argument is best conveyed using the continuous-time approximation of Gradient Ascent, and that is what we present here. The full proof is more involved because it has to contend with discrete updates and the projection onto $\pp$; it can be found in Appendix~\ref{appendix:known_GA}.

Let $\p \in \pp$ be the bidding probabilities of the buyer. Define the potential $\Phi(\p)$ to be the squared Euclidean norm of $\p$ scaled down by $1/2$, i.e.,
\begin{align*}
    \Phi(\p) \coloneqq \frac{\|\p\|^2}{2} =  \frac{\sum_{j=1}^K p_j^2}{2}
\end{align*}
When $h = b_i$, note that the revenue obtained from the bidding strategy corresponding to bidding probabilities $\p$ (Theorem~\ref{thm:convex-refor}) is given by
\small
\begin{align*}
    \rev(\p, h) = \sum_{j=i}^K b_j \cdot \mathbb{P}(\text{Bid} = b_j) =  \sum_{j=i}^K b_j \cdot (p_j - p_{j+1}) = b_i \cdot p_i + \sum_{j=i+1}^K (b_j - b_{j-1}) \cdot p_j = b_i \cdot p_i + \epsilon \cdot \sum_{j=i+1}^K p_j\,.
\end{align*} 
\normalsize
Moreover, the continuous-time approximation of the update step of Algorithm~\ref{alg:known_GA} is given by
\begin{align*}
    \frac{d\p}{dt} = \nabla u(\p|F, h) \quad \text{where} \quad \partial_j u(\pmb p | F, h) = \begin{cases}
		0 &\text{if } j < i\\
		F^-(1 - p_{i}) - b_i &\text{if } j = i\\
		-\epsilon &\text{if } j > i
	\end{cases}
\end{align*}
We start by showing that the excess revenue over $\mye(F)$ accrued by the seller can be charged against the change in potential $d\Phi(\p)/dt$ and $\mye(F)$. To see this, observe that the change in potential is given by
\begin{align*}
    \frac{d\Phi(\p)}{dt} = \sum_{j=1}^K \frac{\partial \Phi(\p)}{\partial p_j} \cdot \frac{dp_j}{dt} =  \sum_{j=1}^K p_j \cdot \frac{dp_j}{dt} = p_i \cdot (F^-(1 - p_{i}) - b_i) - \epsilon \cdot \sum_{j=i+1}^K p_j
\end{align*}

Therefore, we have
\begin{align*}
     \frac{d\Phi(\p)}{dt} + \rev(\p, h) &= p_i \cdot (F^-(1 - p_{i}) - b_i) - \epsilon \cdot \sum_{j=i+1}^K p_j + b_i \cdot p_i + \epsilon \cdot \sum_{j=i+1}^K p_j\\
    &= p_i \cdot F^-(1- p_i)\\
    &\leq \mye(F)\,,
\end{align*}
where the last inequality follows from $\mye(F) \geq r \cdot (1 - F(r))$ for $r = F^-(1 - p_i)$. As the reserve price $h = b_i$ was arbitrary, we have argued that the excess revenue over $\mye(F)$ can be charged against the change in potential, i.e.,
\begin{align*}
    \rev(\p, h) - \mye(F) + \frac{d\Phi(\p)}{dt} \leq 0\,.
\end{align*}
Integrating over time $t$ and using $\rev$ to denote the total expected revenue of the seller yields
\begin{align*}
    \rev - \mye(F) \cdot T + \Phi(\p_T) - \Phi(\p_1) \leq 0\,.
\end{align*}

As $\Phi(\p) - \Phi(\tilde \p) \leq K$ for any $\p, \tilde \p \in \pp$, we get that the total revenue extracted from the buyer is at most a constant larger than $\mye(F) \cdot T$. The analysis of the continuous-time approximation is not exact, and going from continuous time to discrete time introduces the additional $O(\sqrt{T})$ error that appears in the guarantee of Theorem~\ref{thm:strat_robust}. The simplicity of the aforementioned potential function argument is worth emphasizing; one rarely comes across a potential function simpler than the Euclidean norm. It yet again highlights the utility of our concave formulation in drastically simplifying the design and analysis of algorithms bidding first-price auctions. Contrast it with how one might go about implementing and analyzing Gradient Ascent in the absence of a concave formulation. Natural attempts would likely involve discretizing the value space and running an independent copy of Gradient Ascent for each value on the space of probability distributions over bids (like \citealt{balseiro2022contextual} do with Exponential Weights). This approach quickly runs into serious challenges. Due to the discretization, it either results in high (scaling with $T$) computational/memory utilization at each time period or suffers from sub-optimal regret. More importantly, it is difficult to imagine a simple argument establishing strategic robustness for such a complicated algorithm with multiple independent copies of Gradient Ascent. Thus, we consider our simple implementation of Gradient Ascent, which does not require multiple independent copies for each value, to be an important contribution in its own right---one that applies to all Online Convex Optimization algorithms.

In this section, we assumed that the value distribution $F$ was known to the algorithm designer. Since this may not always be the case in practice, we relax this assumption in the next section, and develop an algorithm that does not require knowledge of $F$.

\section{Unknown Value Distribution}\label{sec:unknown_dist}

\begin{algorithm}
	\textbf{Input:} Initial iterate $\v_1 \in \vv$ and step size $\eta$.\\
	\For{$t=1$ to $T$}
	{ 
		Observe value $V_t \sim F$;\\
		Bid $A_t(V_t) = b_i$ if $V_t \in (v_{t,i}, v_{t,i+1}]$ (where $v_{K+1} = 1$);\\
		Observe highest competing bid $h_t$;\\
		Update $\v_{t+1} = \argmin_{\v \in \vv} \| \v- \v_t^+\|$, where
		\begin{align}\label{eq:threshold_update}
			 v_{t,i}^+ = \begin{cases}
			 	v_{t,i} + \eta \cdot \epsilon &\text{if } b_i > h_t\\
			 	v_{t,i} - \eta \cdot (v_{t,i} - h_t) &\text{if } b_i = h_t\\
			 	v_{t,i} &\text{if } b_i < h_t
			 \end{cases}
		\end{align}
    }
   \caption{Value-Threshold-Based Algorithm}
   \label{alg:threshold}
\end{algorithm}

In this section, we no longer assume that the value distribution $F$ is known to the algorithm-designer, and instead only assume knowledge of an upper bound $\bar f \geq \sup_{x \in [0,1]} f(x)$ on the density. The knowledge of $\bar f$ is not crucial to our results: we use it solely for tuning the step size $\eta$, and setting $\eta = 1/\sqrt{T}$ yields the same dependence on $T$ even in its absence. Now, as the utility function $\p \mapsto u(\p|F,h)$ and its gradient $\p \mapsto \nabla u(\p|F, h)$ depend on the value distribution $F$, we can no longer directly implement Gradient Ascent to our concave formulation. However, it turns out that the design principles of Algorithm~\ref{alg:known_GA} continue to work well even in this setting, and we use them to construct Algorithm~\ref{alg:threshold}. In particular, Algorithm~\ref{alg:threshold} maintains feasible value-thresholds $\v_t \in \vv$, where 
\begin{align*}
	\vv \coloneqq \{\v \in [0,1]^K \mid v_i \leq v_{i+1},\ v_i \geq b_i  \}\,,
\end{align*} 
\noindent
and bids $b_i$ in auction $t$ whenever $v \in (v_{t,i}, v_{t,i+1}]$. It updates the value-thresholds iteratively with the same directional changes as Algorithm~\ref{alg:known_GA}: upon encountering the highest competing bid $h_t = b_k$ for some $k \in [K]$, it decreases the probability of bidding strictly greater than $b_k$ by increasing the thresholds $v_i$ for all $i > k$, and it increases the probability of bidding $b_k$ by decreasing the threshold $v_k$; leaving all other thresholds unchanged. Even though Algorithm~\ref{alg:known_GA} and Algorithm~\ref{alg:threshold} may seem very different at first sight, the following proposition demonstrates their intimate connection: Algorithm~\ref{alg:threshold} is simply Algorithm~\ref{alg:known_GA} run with the uniform value distribution.

\begin{proposition}\label{prop:alg_equivalence}
	Fix the sequence of highest competing bids $\{h_t\}_{t=1}^T$. Let $\{\p_t\}_{t=1}^T$ be the sequence of iterates produced by Algorithm~\ref{alg:known_GA} with the uniform value distribution as the input (i.e., $F$ with $F(x) = x$ for all $x \in [0,1]$), and let $\{\v_t\}_{t=1}^T$ be the iterates produced by Algorithm~\ref{alg:threshold}. If the the initial iterates $\v_1, \p_1$ satisfy $\v_1 = \pmb 1 - \p_1$, then we have $\v_t = \pmb 1 - \p_t$ for all $t \in [T]$.
\end{proposition}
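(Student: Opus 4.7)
The plan is to proceed by induction on $t$, with the base case $\v_1 = \pmb 1 - \p_1$ being given. For the inductive step, I would show two things: (i) when $F$ is uniform, the pre-projection updates of the two algorithms correspond under the map $\p \mapsto \pmb 1 - \p$; and (ii) the projections onto the respective feasible sets are compatible with this map, so the correspondence persists after projection.

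\textbf{Step 1: Pre-projection updates agree.} Assume $\v_t = \pmb 1 - \p_t$. Under the uniform distribution, $F^-(y) = y$, so the gradient used in Algorithm~\ref{alg:known_GA} simplifies: $\partial_j u(\p_t|F, h_t)$ is $0$ for $b_j < h_t$, equals $(1-p_{t,j}) - b_j$ for $b_j = h_t$, and equals $-\epsilon$ for $b_j > h_t$. Substituting $v_{t,i} = 1 - p_{t,i}$ into the update in \eqref{eq:threshold_update} and comparing it against \eqref{eq:GA_update} case by case shows that $\v_t^+ = \pmb 1 - \p_t^+$. In particular, the middle case works precisely because $v_{t,i} - h_t$ equals $(1 - p_{t,i}) - b_i$ when $b_i = h_t$, which is the same quantity (up to sign) that appears in the gradient.

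\textbf{Step 2: The feasible sets correspond under the map $\p \mapsto \pmb 1 - \p$.} When $F$ is uniform, $1 - F(b_j) = 1 - b_j$, so the constraints defining $\pp$ become $p_j \in [0,1]$, $p_j \geq p_{j+1}$, and $p_j \leq 1 - b_j$. Setting $v_j = 1 - p_j$ transforms these into $v_j \in [0,1]$, $v_j \leq v_{j+1}$, and $v_j \geq b_j$, which are exactly the defining constraints of $\vv$. Hence the affine map $\p \mapsto \pmb 1 - \p$ is a bijection between $\pp$ and $\vv$.

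\textbf{Step 3: Projections are preserved.} The map $\p \mapsto \pmb 1 - \p$ is an isometry (reflection composed with translation), so for any point $\p_t^+$,
\begin{align*}
\argmin_{\v \in \vv} \|\v - (\pmb 1 - \p_t^+)\| \;=\; \pmb 1 - \argmin_{\p \in \pp} \|\p - \p_t^+\|.
\end{align*}
Combining with Step 1 gives $\v_{t+1} = \pmb 1 - \p_{t+1}$, completing the induction.

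\textbf{Main obstacle.} The only nontrivial piece is verifying that the constraint sets $\pp$ and $\vv$ match up precisely under the reflection; once this is established, everything follows from the isometric invariance of Euclidean projection and a routine case analysis of the update rule. The rest of the argument is a bookkeeping exercise.
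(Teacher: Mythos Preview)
Your proposal is correct and follows essentially the same approach as the paper's proof: induction on $t$, verifying that the pre-projection updates correspond under $\p \mapsto \pmb 1 - \p$ via a case analysis, showing that $\vv = \pmb 1 - \pp$ when $F$ is uniform, and then invoking the isometric invariance of Euclidean projection to carry the correspondence through the projection step. If anything, your write-up is slightly more explicit than the paper's about the isometry argument in Step~3.
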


\subsection{Regret Guarantee}

Algorithm~\ref{alg:threshold} is motivated by Proposition~\ref{prop:alg_equivalence}---it is not possible to run Algorithm~\ref{alg:known_GA} without knowledge of the value distribution $F$, so we \emph{pretend} that the value distribution is the uniform distribution over $[0,1]$ and deploy Algorithm~\ref{alg:known_GA} with the uniform distribution. However, this mismatch between the true value distribution $F$ and the uniform distribution breaks the concavity that has hitherto driven our analysis---the utility $u(\v|F,h)$ of the buyer as a function of the value thresholds $\v$ is not always concave. In particular, this means that we cannot use Proposition~\ref{prop:sub-linear-regret}, or extend its proof to establish regret bounds for Algorithm~\ref{alg:threshold}. Despite the lack of concavity, Algorithm~\ref{alg:threshold} performs well and attains $O(\sqrt{T})$-regret.

\begin{theorem}\label{thm:regret_threshold}
	With step-size $\eta$ and initial iterate $\v_1 \in \vv$, Algorithm~\ref{alg:threshold} satisfies
	\begin{align*}
		\regret(A|F) \leq 6\bar f K \cdot \eta T + \frac{K}{\eta}\,\,.
	\end{align*}
	In particular, with $\eta = 1/\sqrt{\bar f T}$, we get
	\begin{align*}
		\regret(A|F) \leq 7 {\bar f}^{\frac{1}{2}} K \cdot \sqrt{T}
	\end{align*} 
\end{theorem}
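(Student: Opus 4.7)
The plan is to leverage the concavity of $u(\p|F,h)$ in the true bidding probabilities $\p = 1 - F(\v)$ from Theorem~\ref{thm:convex-refor}, combined with a Bregman-like potential argument, to handle the mismatch between Algorithm~\ref{alg:threshold}'s updates (which act in $\v$-space as if $F$ were uniform, by Proposition~\ref{prop:alg_equivalence}) and the true utility function.

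First, using the equivalence in Proposition~\ref{prop:alg_equivalence}, I would view Algorithm~\ref{alg:threshold} as projected gradient ascent in $\v$-space with update direction $g_t = \nabla_\v \tilde u(\v_t | h_t)$, the gradient of the uniform-distribution utility $\tilde u(\v | h_t) = u(s_\v | F_U, h_t)$ with $F_U(x) = x$. A direct computation from Theorem~\ref{thm:convex-refor} shows that $\nabla_\p u(\p_t | F, h_t) = -g_t$ when $\p_t = 1 - F(\v_t)$, since the $k$-th coordinate $F^-(1 - p_{t,k}) - b_k$ equals $v_{t,k} - b_k = -g_{t,k}$ and the coordinates $j > k$ contribute $-\epsilon = -g_{t,j}$. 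The concavity of $u(\cdot | F, h_t)$ in $\p$ then yields the per-step bound
\begin{align*}
u(s_{\v^*} | F, h_t) - u(s_{\v_t} | F, h_t) \;\leq\; \nabla_\p u(\p_t | F, h_t) \cdot (\p^* - \p_t) \;=\; g_t \cdot (F(\v^*) - F(\v_t)).
\end{align*}
It therefore suffices to bound $\sum_t g_t \cdot (F(\v^*) - F(\v_t))$.

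To control this sum, I would introduce the potential $\Phi_t = \sum_{j=1}^K D_F(v_{t,j}, v^*_j)$ with Bregman-like divergence $D_F(x, y) = \int_y^x (F(u) - F(y))\,du$, generated by the antiderivative of $F$. Since $F$ is non-decreasing, $D_F \geq 0$, and since $F$ takes values in $[0,1]$ with $|x - y| \leq 1$, one has $D_F(x, y) \leq 1$, so $\Phi_1 \leq K$. Applying the Mean Value Theorem to the integral $\int_{v_{t,j}}^{v_{t+1,j}} (F(u) - F(v^*_j))\,du$, splitting $F(u) - F(v^*_j) = (F(v_{t,j}) - F(v^*_j)) + (F(u) - F(v_{t,j}))$, and using the $\bar f$-Lipschitzness of $F$ together with $|v_{t+1, j} - v_{t, j}| \leq \eta |g_{t, j}|$ to bound the second-order term, I obtain the descent inequality
\begin{align*}
\Phi_{t+1} - \Phi_t \;\leq\; \eta \, g_t \cdot (F(\v_t) - F(\v^*)) \;+\; \tfrac{1}{2}\, \bar f \, \eta^2 \|g_t\|^2.
\end{align*}
Telescoping and using the gradient bound $\|g_t\|^2 \leq 1 + K\epsilon^2 \leq 2$ yields $\sum_t g_t \cdot (F(\v^*) - F(\v_t)) \leq K/\eta + O(\bar f \eta T)$, which combined with the per-step bound above produces the claimed regret bound up to constants.

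The main obstacle is verifying that the Euclidean projection onto $\vv$ used by Algorithm~\ref{alg:threshold} does not increase $\Phi_t$ (in contrast to the Bregman projection, which would do so by definition). I would resolve this by exploiting the structure of $\vv$: for the lower-bound constraints $v_j \geq b_j$, clipping $v_j$ up to $b_j$ decreases $D_F(\cdot, v^*_j)$ because $v^*_j \geq b_j$ and $F$ is monotone; for the monotonicity constraints $v_j \leq v_{j+1}$, the isotonic (pool-adjacent-violators) projection averages consecutive violators, and the combination of $v^*_j \leq v^*_{j+1}$ with the monotonicity of $F$ can be shown to imply that this averaging step also does not increase $\Phi_t$. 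Since the full Euclidean projection onto $\vv$ can be realized by a finite sequence of such single-constraint projections, this establishes the required monotonicity of $\Phi_t$ and completes the argument.
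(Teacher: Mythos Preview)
Your approach is genuinely different from the paper's and, once the projection step is nailed down, it works and even yields a tighter bound. The paper does \emph{not} use concavity of $u(\p|F,h)$; instead it introduces a per-value potential
\[
\Phi(\v\mid v^*) \;=\; \tfrac{1}{\eta}\Big(\textstyle\sum_j (v^*-v_j)\,\mathbf 1(v^*>v_j)+\sum_{j\le i^*} v_j\Big),
\]
and proves the pointwise inequality $\Delta\Phi(\v\mid v^*)+R(\v\mid s^*,v^*,h)\le 3\cdot \mathbf 1(\min_j |v_j-v^*|\le\eta)$ by an exhaustive six-case analysis on the ordering of the algorithm's bid, the benchmark bid, and $h$; integrating over $v^*\sim F$ then produces the $6\bar f K\eta T$ term. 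Your route---concavity in $\p$ to reduce to $\sum_t g_t\cdot(F(\v^*)-F(\v_t))$, then a mirror-descent-style telescoping with the Bregman potential $\sum_j D_F(v_{t,j},v_j^*)$ generated by $\Psi'=F$---is cleaner, avoids the case analysis entirely, and (with $\|g_t\|^2\le 2$) gives $\regret\le K/\eta+\bar f\eta T$, which is a factor $K$ better in the leading term than the paper's constant.

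The one real gap is the sentence ``the full Euclidean projection onto $\vv$ can be realized by a finite sequence of such single-constraint projections.'' As a general principle this is false: alternating projections onto individual half-spaces need not reproduce the projection onto their intersection. What \emph{is} true here---and what you need---is the specific structural fact that Euclidean projection onto $\vv$ (monotone cone intersected with the box $\prod_j[b_j,1]$ whose bounds are themselves monotone) equals unconstrained PAVA followed by coordinate-wise clipping; this is a standard result on bounded isotonic regression with isotonic bounds, and also follows directly from the explicit projection formula in Corollary~\ref{cor:threshold_update}. Once you have that decomposition, your two ingredients do the job: for a PAVA merge of adjacent blocks $[a,b]$ (value $u$) and $[b{+}1,c]$ (value $w<u$) into their weighted mean $\bar z$, convexity of $\Psi$ gives $\sum[\Psi(\text{old}_j)-\Psi(\bar z)]\ge 0$, while monotonicity of $F(v_j^*)$ makes the cross term $-\sum_j F(v_j^*)(\text{old}_j-\bar z)\ge 0$; and coordinate clipping toward $[b_j,1]\ni v_j^*$ decreases $D_F(\cdot,v_j^*)$ by convexity of $D_F$ in its first argument. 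One caveat: when $\eta>1$ the pooled block can hit the floor $x=b_i$ (Corollary~\ref{cor:threshold_update}), and then for indices $j<i$ with $v_j^*<b_i$ the clip-up step need not decrease $D_F(\cdot,v_j^*)$. Your argument is clean only for $\eta\le 1$, which covers the stated choice $\eta=1/\sqrt{\bar f T}$ but not the theorem as written for arbitrary $\eta$.
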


Since the utility may not be concave in the value thresholds, we can no longer leverage results from Online Convex Optimization like we did in the analysis of Algorithm~\ref{alg:known_GA}; a new approach is required. The proof of Theorem~\ref{thm:regret_threshold} provides such a new approach in the form of a novel potential function argument which does not use any of the machinery from Online Convex Optimization, and can be found in Appendix~\ref{appendix:unknown}. The central insight lies in the careful design of the potential function, which allows us to charge the change in regret to the change in potential. Here, we present a proof-sketch for the special case where $K=1$, i.e.,  there is only 1 possible non-zero bid, namely $b_1 = \epsilon < 1$. Once again, the intuition behind the proof is best conveyed with a continuous-time argument that allows us to ignore the difficult (and tedious) edge cases.

Fix the benchmark policy $s^*:[0,1] \to \{b_0, b_1\}$ against which we want low regret (assume no overbidding). Moreover, also fix a value $v^* \in [0,1]$; we will prove the no-regret property for each value separately. Let $v_1$ denote the threshold such that the buyer bids $b_1$ when the value $V > v_1$ and bids $0$ otherwise. Define the following potential function:
\begin{align*}
    \Phi(v_1|v^*) = v_1 \cdot \left[\mathbf{1}(s^*(v^*) = b_1) - \mathbf{1}(v^* > v_1) \right]
\end{align*}
We will show that the regret corresponding to value $v^*$ can be charged against the change in the potential $\Phi(v_1|v^*)$ for any possible highest competing bid $h$.  First, observe that the continuous-time update of Algorithm~\ref{alg:threshold} is given by
\begin{align*}
    \frac{dv_1}{dt} = \begin{cases}
        -(v_1 - b_1) &\text{if } h = b_1\\
        \epsilon &\text{if } h = b_0\,.
    \end{cases}
\end{align*}

Note that the algorithm only accumulates regret when it bids something different from the benchmark $s^*$. Moreover, it also has zero regret when $v^* \leq b_1$: the algorithm always bids $b_0$ for value $v^*$ because $v_1 \geq b_1$, which is the same as the benchmark $s^*$. Therefore, assume $v^* > b_1$. Thus, for the highest competing bid $h$, it only accumulates non-zero regret in the following cases,
\begin{itemize}
    \item $s^*(v^*) = h = b_0$ and $v_1 < v^*$: In this case, the algorithm bids $b_1$ for value $v^*$, whereas the benchmark bids $s^*(v^*) = 0$. As $h=b_0$, both bids result in a win, but the benchmark pays $b_1 - b_0 = \epsilon$ less than the algorithm, thereby incurring $\epsilon$ regret. On the other hand, in this case we have $d\Phi(v_1|v^*)/dt = -\epsilon$.
    \item $s^*(v^*) = h = b_1$ and $v_1 \geq v^*$: In this case, the algorithm bids $b_0$ for value $v^*$, whereas the benchmark bids $s^*(v^*) = b_1$. As $h=b_1$, the algorithm does not win, but the benchmark does, resulting in a regret of $v^* - b_1$. On the other hand, in this case we have $d\Phi(v_1|v^*)/dt = - (v_1 - b_1) \leq -(v^* - b_1)$.
    \item $s^*(v^*) = b_0 < h = b_1$ and $v_1 < v^*$:  In this case, the algorithm bids $b_1$ for value $v^*$, whereas the benchmark bids $s^*(v^*) = 0$. As $h=b_1$, the algorithm wins and the benchmark loses, resulting in a negative regret of $-(v^* - b_1)$. On the other hand, in this case we have $d\Phi(v_1|v^*)/dt = v_1 - b_1 \leq v^* - b_1$.
    \item $s^*(v^*) = b_1 > h = b_0$ and $v_1 \geq v^*$: In this case, the algorithm bids $b_0$ for value $v^*$, whereas the benchmark bids $s^*(v^*) = b_1$. As $h=b_0$, both bids result in a win, but the benchmark overpays by $\epsilon$, resulting in a negative regret of $-\epsilon$. On the other hand, in this case we have $d\Phi(v_1|v^*)/dt = \epsilon$.
\end{itemize}
Therefore, in all cases, we have shown that the regret for value $v^*$ can be charged against the change in potential, i.e.,
\begin{align*}
    \frac{d\Phi(v_1|v^*)}{dt} + \regret(v_1|v^*, h) \leq 0\,,
\end{align*}
where $\regret(v_1|v^*, h)$ is the regret for value $v^*$ associated with bidding according to the threshold $v_1$ instead of the benchmark $s^*(v^*)$, when the highest competing bid is $h$. Integrating over time $t$ and $v^*$, and using $\regret$ to denote the total expected regret yields
\begin{align*}
    \Phi(v_T|v^*) - \Phi(v_1|v^*) + \regret \leq 0\,.
\end{align*}
As $-1 \leq \Phi(x) \leq 1$ for all $x$, we get that $\regret$ is at most 2. This analysis of the continuous-time approximation is inexact and simplified. The proof becomes much more intricate when one goes to discrete time and considers the general case of $K > 1$, which introduces the additional $O(\sqrt{T})$ error that appears in the guarantee of Theorem~\ref{thm:regret_threshold}.

\subsection{Strategic Robustness}\label{subsec:threshold_robust}

Having established a $O(\sqrt{T})$-regret guarantee for Algorithm~\ref{alg:threshold}, we now turn our attention to its strategic robustness. In the known-distribution setting, the allure of Algorithm~\ref{alg:known_GA} over previously-proposed algorithms stems from its added ability to limit the average revenue extracted from the buyer to $\mye(F)$. The next theorem shows that Algorithm~\ref{alg:threshold} is also robust to strategic manipulation by the seller, without requiring knowledge of the value distribution $F$ (aside from a bound on $\bar f$). 

\begin{theorem}\label{thm:threshold_robust}
	With step size $\eta = 1/\sqrt{\bar f T}$ for some constant $a> 0$ and initial iterate $\v_1 \in \vv$, Algorithm~\ref{alg:threshold} satisfies
	\begin{align*}
		\sum_{t=1}^T \E[\rev(A_t, h_t)] \leq \mye(F) \cdot T + 2 \bar f^{\frac{1}{2}} K \cdot \sqrt{T} \,\,.
	\end{align*}
\end{theorem}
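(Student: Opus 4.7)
My plan is to adapt the potential-function argument of Theorem~\ref{thm:strat_robust} to the value-threshold space used by Algorithm~\ref{alg:threshold}. The quadratic potential $\|\p\|^2/2$ does not apply directly because Algorithm~\ref{alg:threshold} updates value thresholds as if values were uniform while revenue accrues under the true distribution $F$. The central idea is to use an analytical potential $\Phi$ whose first-order behavior exactly compensates for this mismatch. Concretely, I would define
\[
\Phi(\v) \coloneqq \sum_{i=1}^K \psi(v_i), \qquad \psi(v) \coloneqq \int_0^v \bigl(F(x) - 1\bigr)\,dx,
\]
for which $\psi'(v) = F(v) - 1 \le 0$ (so $\psi$ is non-increasing and $1$-Lipschitz), $\psi''(v) = f(v) \in [0,\bar f]$ (so $\psi$ is convex with curvature at most $\bar f$), $|\psi(v)| \le 1$ on $[0,1]$ (so $|\Phi(\v)| \le K$), and $\psi$ is constant on $[1,\infty)$ since $F \equiv 1$ there. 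Although $\Phi$ depends on the unknown $F$, it is used only in the analysis, not in the algorithm.

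\textbf{Per-step bound.} For each round with $h_t = b_{k_t}$ I would perform a second-order Taylor expansion of each $\psi(v_{t+1,i}^+) - \psi(v_{t,i})$ along the three-case update in \eqref{eq:threshold_update}. The first-order increments work out to $\eta \bigl(1 - F(v_{t,k_t})\bigr)(v_{t,k_t} - b_{k_t})$ for $i = k_t$ and $-\eta\epsilon\bigl(1-F(v_{t,i})\bigr)$ for $i > k_t$. Adding the revenue
\[
\rev(\v_t, h_t) = b_{k_t}\bigl(1-F(v_{t,k_t})\bigr) + \epsilon \sum_{i > k_t}\bigl(1-F(v_{t,i})\bigr),
\]
the $\epsilon\sum$ terms cancel exactly, and the remaining pieces collapse to $v_{t,k_t}\bigl(1-F(v_{t,k_t})\bigr) \le \mye(F)$ by definition of Myerson's revenue. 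The second-order remainder is bounded by $\bar f \eta^2$ (using $\epsilon \le 1/K$ to absorb the $(K-k_t)\epsilon^2$ term), giving
\[
\rev(\v_t, h_t) + \tfrac{1}{\eta}\bigl[\Phi(\v_{t+1}^+) - \Phi(\v_t)\bigr] \le \mye(F) + \bar f \eta.
\]

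\textbf{Projection, telescoping, and main obstacle.} Before telescoping I need $\Phi(\v_{t+1}) \le \Phi(\v_{t+1}^+)$: projection onto $\vv$ must not increase $\Phi$. Since $\psi$ is convex, Jensen's inequality implies each pool-adjacent-violators step of isotonic regression (which averages a block of coordinates to their mean) weakly decreases $\sum_i \psi(v_i)$; since $\psi$ is constant on $[1,\infty)$, clipping coordinates above $1$ leaves $\Phi$ unchanged; and the update preserves $v_{t,i}^+ \ge b_i$, so the lower box constraints never bind. Telescoping the per-step inequality and using $|\Phi(\v_{T+1}) - \Phi(\v_1)| \le 2K$ yields $\sum_t \E[\rev(A_t, h_t)] \le \mye(F)\cdot T + \bar f \eta T + 2K/\eta$, and plugging in $\eta = 1/\sqrt{\bar f T}$ gives the claimed $O(\bar f^{1/2} K\sqrt{T})$ overhead. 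The main conceptual obstacle is the design of $\Phi$: the quadratic potential fails because the analysis must absorb the $F$-dependence that is missing from the algorithm, and choosing $\psi'(v) = -(1-F(v))$ is precisely what makes the first-order gradient cancel the non-$\mye(F)$ parts of revenue. The main technical obstacle is handling the interaction of monotonicity and box constraints in the projection step, which is resolved cleanly by convexity of $\psi$ together with its flatness above $1$.
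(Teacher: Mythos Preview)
Your proposal is correct and follows essentially the same route as the paper. Your potential $\Phi(\v)=\sum_i\psi(v_i)$ with $\psi(v)=\int_0^v(F(x)-1)\,dx$ is, up to an additive constant and the $1/\eta$ scaling, exactly the paper's potential $\frac{1}{\eta}\sum_i\int_{v_i}^1(1-F(t))\,dt$; both yield the same per-step inequality $\rev(\v_t,h_t)+\tfrac{1}{\eta}\Delta\Phi\le\mye(F)+\bar f\eta$ via the identical cancellation $b_{k_t}(1-F(v_{t,k_t}))+(1-F(v_{t,k_t}))(v_{t,k_t}-b_{k_t})=v_{t,k_t}(1-F(v_{t,k_t}))\le\mye(F)$.

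The only stylistic difference is in how the potential change is bounded. The paper invokes the explicit projection formula (Corollary~\ref{cor:threshold_update}) and bounds $\Phi(\v')-\Phi(\v)$ directly via integrals; you instead split it as $[\Phi(\v^+)-\Phi(\v)]$ (Taylor) plus $[\Phi(\v')-\Phi(\v^+)]\le 0$ (projection does not increase $\Phi$). Your decomposition is arguably more modular. Two small remarks: (i) your assertion that the lower box constraints never bind is correct for $\eta\le 1$ (which holds since $\bar f\ge 1$), because every $v_j^+$ in the pooled block $[m,i]$ satisfies $v_j^+\ge b_i$; but even if the $\max$ with $b_i$ in the projection were active, $\psi$ being non-increasing would still give $(i{-}m{+}1)\psi(x)\le\sum_{j=m}^i\psi(v_j^+)$, so the conclusion is robust; (ii) the second-order bound is cleaner stated via Lipschitzness of $F$ (i.e.\ $\psi(v+\delta)-\psi(v)-\psi'(v)\delta=\int_v^{v+\delta}(F(t)-F(v))\,dt\le \bar f\delta^2/2$) rather than Taylor with $\psi''$, since $\psi''$ need not exist at $v=1$ under your extension.
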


The proof of Theorem~\ref{thm:threshold_robust} is based on a carefully-designed novel potential function that allows us to charge the change in revenue against the change in potential. We omit the proof here and instead discuss a concrete example in which mean-based algorithms (like Multiplicative Weights and FTRL) yield a revenue higher than $\mye(F)\cdot T$, but our Algorithm~\ref{alg:threshold} does not.

\begin{example*}
    Consider a single buyer whose value distribution is a smoothly-truncated equi-revenue distribution starting at 1/8, i.e., 
    \begin{align*}
        F(x) = \begin{cases}
            0 &\text{if } x \leq 1/8\\
            1 - \frac{1}{8x} &\text{if } 1/8 < x < 1 - \delta\\
            1 - \frac{1-x}{8(1 - \delta)\delta} &\text{if } x \geq 1 - \delta
        \end{cases}
    \end{align*}
    for some small constant $\delta \in (0,0.5)$. The possible bids are $b_0 = 0$, $b_1 = 1/8$ and $b_2 = 1/4$. It is straightforward to check that posting a price of either $b_1 = 1/8$ or $b_2 = 1/4$ leads to a revenue of $1/8$. In fact, 1/8 is the maximum revenue that can be achieved by any price because
    \begin{align*}
         x \cdot (1 - F(x)) = \begin{cases}
            x &\text{if } x \leq 1/8\\
            \frac{1}{8} &\text{if } 1/8 < x < 1 - \delta\\
            \frac{x(1-x)}{8(1 - \delta)\delta} &\text{if } x \geq 1 - \delta
        \end{cases}
    \end{align*}
    Therefore, we have $\mye(F) = 1/8$.
    
    Consider the sequence of decreasing reserve prices $\{h_t\}_t$ such that $h_t = b_2 = 1/4$ for $t \leq T/2$ and $h_t = b_1 = 1/8$ for $t> T/2$, i.e., the seller posts a reserve price of $1/4$ for the first half of the auctions and then reduces it to $1/8$ for the second half. We start by showing that this simple sequence of reserve prices is sufficient to exploit mean-based algorithms and extract more revenue than $\mye(T)\cdot T$. Informally speaking, an algorithm is mean-based if it plays historically sub-optimal actions with a small probability (see \citealt{braverman2018selling} for a formal definition). In other words, they almost always play actions that yield the highest historical cumulative utility. Many popular algorithms like Exponential Weights, EXP3 and FTRL are mean based, and consequently most of the recently proposed algorithms for bidding in first-price auctions are also mean based (see Subsection~\ref{subsec:related} for a discussion). 

    First note that, in the first $T/2$ auctions, bidding $1/4$ for values $v \geq 1/4$ is the optimal strategy for the past, i.e., maximizes the historical cumulative utility in auctions 1 through $t-1$. Therefore, in the first $T/2$ auctions, every mean-based algorithm bids $1/4$ for values $v \geq 1/4$ and bids arbitrarily for the other values. Importantly, even after the shift to reserve price $1/8$ (auctions $t > T/2$), the bid with the highest historical cumulative utility remains $b_2 = 1/4$ for values $v \geq 1/2$. For values $1/4 \leq v \leq 1/2$, the bid with the highest historical cumulative utility transitions from $1/4$ to $1/8$ at some time $t \in [T/2 + 1, T]$. Lastly, for values $v \leq 1/4$, the bid with the highest historical utility is $1/8$. Therefore, in the last $T/2$ auctions, every mean-based algorithm continues to bid $1/4$ for values $v \geq 1/2$, transitions to bidding $1/4$ for values $1/4 \leq v \leq 1/2$, and bids $1/8$ for values $v \leq 1/4$. Crucially, this implies that the total payment made by any mean-based algorithm is at least
    \begin{align*}
        \underbrace{\frac{1}{4}\cdot (1 - F(1/4)) \cdot \frac{T}{2}}_{t \leq T/2}\ +\ \underbrace{\frac{1}{4}\cdot (1 - F(1/2)) \cdot \frac{T}{2}}_{t > T/2 \text{ and } v \geq 1/2}\ +\ \underbrace{\frac{1}{8} \cdot F(1/2) \cdot \frac{T}{2}}_{t > T/2 \text{ and } v < 1/2}\ \geq\ \mye(F) \cdot T + \frac{T}{64}\,.
    \end{align*}

    On the other hand, except for an initial transition period of length  $O(\sqrt{T})$, Algorithm~\ref{alg:threshold} bids $1/4$ for all values $v \geq 1/4$ in the first $T/2$ auctions. Moreover, except for a transition period of length $O(\sqrt{T})$ after the change of reserve price from $1/4$ to $1/8$, Algorithm~\ref{alg:threshold} bids $1/8$ for all values $v \geq 1/8$ in the last $T/2$ auctions. Therefore, the total payment of Algorithm~\ref{alg:threshold} is bounded above by
    \begin{align*}
        \underbrace{\frac{1}{4}\cdot (1 - F(1/4)) \cdot \frac{T}{2}}_{t \leq T/2}\ +\ \underbrace{\frac{1}{8}\cdot (1 - F(1/2)) \cdot \frac{T}{2}}_{t > T/2 \text{ and } v \geq 1/2}\ +\ \underbrace{\frac{1}{8} \cdot F(1/2) \cdot \frac{T}{2}}_{t > T/2 \text{ and } v < 1/2}\ =\ \mye(F) \cdot T\,.
    \end{align*}
    The above decomposition and comparison of total payment precisely highlights a weakness of mean-based algorithms: they are not agile and put too much weight on the distant past. In particular, they fail to learn the new optimal bid for values $v \geq 1/2$ sufficiently fast after the change in reserve price from $1/4$ to $1/8$, and this results in unnecessarily high payments for those values. In contrast, Algorithm~\ref{alg:threshold} is based on Gradient Ascent and quickly switches to the optimal bid of $1/8$ after the transition. The lack of agility on the part of mean-based algorithm not only results in higher revenue for the seller, but also lower utility for the buyer. We will use this fact to demonstrate the lack of incentive compatibility in mean-based algorithms when we continue this example in the next subsection.
\end{example*}

\subsection{Incentive Compatibility}

In the previous subsection, we showed that Algorithm~\ref{alg:threshold} is resistant to manipulation by the seller. However, thus far we have paid very little attention to manipulation by the buyer. In particular, bidding algorithms of the type developed in this paper are deployed as automated bidding algorithms (or \emph{autobidders} for short) on internet platforms. These autobidders take as input the high-level objectives of the advertiser and attempt to maximize total utility according to those objectives. One of the main inputs provided by each advertiser is her value-per-click and targeting criteria, which is used to compute her value for winning each auction. Therefore, even though buyers cannot directly choose their bids in each auctions, they can misreport their values in an attempt to gain higher utility. 

In particular, a strategic buyer can misreport their high-level objectives to the autobidder in a way that causes it to believe that her value is $M(v)$ whenever her true value is $v$. This misreporting of values is detrimental to both the buyer and the seller. The buyer has to spend effort and incur costs in order to find beneficial misreports. This in turn makes the system unpredictable for the seller and she loses the ability to measure the true value of the buyer, which is very valuable for experimentation. Thus, it is practically desirable to employ algorithms which are resistant to manipulation by a strategic buyer who has the power to misreport her values. Formally, we want algorithms that are incentive compatible: the buyer should not regret truthfully reporting her values. Like strategic robustness, mean-based algorithms fail to hit the mark here too and are not incentive compatible. In contrast, as the next theorem establishes, Algorithm~\ref{alg:threshold} is incentive compatible.

\begin{theorem}\label{thm:ic_threshold}
	For any misreport map $M:[0,1] \to [0,1]$ and initial iterate $\v_1 \in \vv$, Algorithm~\ref{alg:threshold} with step-size $\eta = 1/\sqrt{\bar f T}$ satisfies
	\begin{align*}
		\sum_{t=1}^T \E[u(s_{\v} \circ M|F,h_t)] - \sum_{t=1}^T \E[u(s_{\v}|F,h_t)] \leq 8K\bar f^{\frac{1}{2}} \cdot \sqrt{T}\,\,.
	\end{align*}
\end{theorem}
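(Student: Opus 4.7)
The starting point is the structural observation that Algorithm~\ref{alg:threshold}'s iterates $\v_t$ are a deterministic function of the observed highest competing bids $\{h_r\}_{r<t}$ alone (the update rule~\eqref{eq:threshold_update} uses neither $V_t$ nor the placed bid), so the same sequence $\v_t$ governs the buyer's utility whether she reports truthfully or through $M$. This lets us decompose the IC gap on a per-value basis:
\[
\sum_{t=1}^T \E\bigl[u(s_{\v_t}\circ M|F,h_t) - u(s_{\v_t}|F,h_t)\bigr] = \int_0^1 f(v^*)\sum_{t=1}^T \E\bigl[g_t(v^*)\bigr]\,dv^*,
\]
where $g_t(v^*) = u_t(v^*, s_{\v_t}(M(v^*))) - u_t(v^*, s_{\v_t}(v^*))$ with $u_t(v,b)=(v-b)\mathbf{1}(b\geq h_t)$. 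It therefore suffices to prove a pointwise-in-$v^*$ bound of the form $\sum_t g_t(v^*) \leq O(K/\eta)$ and integrate against $f\leq\bar f$, then choose $\eta$ to balance the $\sqrt{T}$ scaling against the regret-type constraint.

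My plan is to follow the potential-function recipe used in the proof of Theorem~\ref{thm:regret_threshold}, but with the ``moving benchmark'' $s_{\v_t}\circ M$ playing the role of the static benchmark $s^{*}$. The natural candidate is the scaled potential
\[
\tilde\Phi(\v|v^*, M) \;=\; \frac{1}{\eta}\sum_{i=1}^K v_i\bigl[\mathbf{1}(M(v^*)>v_i) - \mathbf{1}(v^*>v_i)\bigr],
\]
which is bounded by $K/\eta$ and is supported on exactly the thresholds lying between $v^*$ and $M(v^*)$ — the thresholds whose position determines the gap between the truthful and misreport bids. The core step is a discrete per-step inequality $g_t(v^*) + \tilde\Phi(\v_{t+1}|v^*,M) - \tilde\Phi(\v_t|v^*,M) \leq 0$, which I would verify via case analysis on $h_t=b_k$ and on the interval indices $j^T$ (with $v_{j^T}<v^*\leq v_{j^T+1}$) and $j^M$ (with $v_{j^M}<M(v^*)\leq v_{j^M+1}$). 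In each of the $O(K)$ cases, either the threshold $v_k$ pulled toward $h_t$ by $v_k^+ = v_k - \eta(v_k - b_k)$ or the bundle of higher thresholds drifting upward by $\eta\epsilon$ contributes a decrease in $\tilde\Phi$ that — thanks to the $1/\eta$ scaling — exactly matches the per-step IC gap, algebraically mirroring the continuous-time cancellations in the Theorem~\ref{thm:regret_threshold} sketch. Telescoping then yields $\sum_t g_t(v^*)\leq 2K/\eta$, which with $\eta=1/\sqrt{\bar f T}$ gives the claimed $O(K\bar f^{1/2}\sqrt{T})$ bound after integration against $f$.

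The main technical obstacle is handling ``indicator flips'' — single $\eta$-sized updates in which a threshold $v_{t,i}$ crosses $v^*$ or $M(v^*)$, causing a discrete jump in $\tilde\Phi$ of magnitude up to $v_i/\eta$ with no continuous analog. Flips across $v^*$ are the easier case: the offending set of $v^*$ has Lebesgue measure $O(\eta)$ per threshold per step, and the resulting jump-direction analysis shows that these flips in fact help (rather than hurt) the inequality in the relevant cases. Flips across $M(v^*)$ are the delicate piece, since an arbitrary (possibly atomic) map $M$ can concentrate substantial mass in a small neighborhood of $v_i$ and so the crude per-$v^*$ bound of $v_i/\eta$ on the jump cannot simply be multiplied by $\bar f \eta$. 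The key observation is that exactly at the step where $v_i$ crosses $M(v^*)$, the misreport bid for $v^*$ also shifts by one bid-level, and the induced discrete change in $g$ — accounted for jointly with the $\tilde\Phi$-jump across the $t\to t+1$ boundary, rather than bounded in isolation — absorbs the jump up to a lower-order term that still fits inside $O(K\bar f^{1/2}\sqrt{T})$. Assembling these controls with the $O(\eta)$ Euler-type error from the multiplicative update on $v_k$ yields the claimed bound of Theorem~\ref{thm:ic_threshold}.
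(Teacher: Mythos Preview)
Your overall plan---fix $v^*$, build a potential, prove a per-step drift inequality, telescope, and integrate against $f$---is exactly the paper's. The gap is in the specific potential you choose and in how the indicator flips are handled. Your $\tilde\Phi(\v)=\tfrac{1}{\eta}\sum_i v_i[\mathbf 1(M(v^*)>v_i)-\mathbf 1(v^*>v_i)]$ is \emph{discontinuous} in each $v_i$: when a threshold crosses $v^*$ or $M(v^*)$ the summand jumps by $\Theta(v_i)$, so $\tilde\Phi$ jumps by $\Theta(1/\eta)$. Neither proposed remedy closes this. For $v^*$-crossings the small-measure argument only buys a factor $\bar f\eta$ after integrating, turning a $\Theta(1/\eta)$ per-step error into $\Theta(\bar f)$---still $\Theta(T)$ after summing---and the jump direction does not uniformly ``help'' (an upward crossing of $v_j$ through $v^*$ flips the wrong way). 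For $M(v^*)$-crossings the ``discrete change in $g$'' cannot absorb a $\Theta(1/\eta)$ jump because $|g_t|\le 1$ always: e.g.\ with $K=1$ and $M(v^*)>v^*$, take $v_{t,1}$ just above $M(v^*)$ and $h_t=b_1$; then both the truthful and misreport bids at step $t$ are $b_0$ and lose, so $g_t=0$, yet $\Delta\tilde\Phi_t\approx +M(v^*)/\eta$, and the adversary may set $h_{t+1}=b_1$ so that no compensating jump occurs at step $t{+}1$.

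The paper uses instead
\[
\Phi(\v\,|\,M,v^*)\;=\;\frac{1}{\eta}\sum_{j=1}^K\Bigl[(v^*-v_j)\,\mathbf 1(v^*>v_j)\;-\;(M(v^*)-v_j)\,\mathbf 1(M(v^*)>v_j)\Bigr],
\]
which has the \emph{same} a.e.\ $v_j$-derivative as your $\tilde\Phi$ but is $1$-Lipschitz in each $v_j$. Crossings of $M(v^*)$ then produce no jump and are absorbed by the one-line sign observation $(M(v^*)-v_j')\bigl[\mathbf 1(M(v^*)>v_j')-\mathbf 1(M(v^*)>v_j)\bigr]\ge 0$, after which the case analysis reduces to seven cases on the ordering of $b_u,b_w,b_i$ parallel to the regret proof. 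The only remaining bad set is $\{v^*:\min_j|v_{t,j}-v^*|\le\eta\}$, on which Lipschitzness gives $\Delta\Phi+R\le 3$ (a constant, not $\Theta(1/\eta)$), so the integrated error is $O(K\bar f\eta)$ per step as required. You were likely guided by the continuous-time sketch of Theorem~\ref{thm:regret_threshold} in the main text, which uses the $v_1\,\mathbf 1(\cdot)$ form---harmless in continuous time---but the discrete-time appendix proof already switches to the $(v^*-v_j)_+$ form for exactly this reason.
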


The proof of Theorem~\ref{thm:ic_threshold} is also based on a potential function argument, with the main technical component being the design of an intricate potential function and the subsequent charging argument based on that potential. We refer the reader to Appendix~\ref{appendix:unknown} for the full proof, and instead continue with our discussion of the concrete example we introduced in the previous subsection. In particular, we show that the same example demonstrates the lack of incentive compatibility in mean-based algorithms.

\begin{example*}[Continued from Subsection~\ref{subsec:threshold_robust}]
    Recall that any mean-based algorithm bids $1/4$ for all values $v \geq 1/2$ in all auctions $t \in [T]$. On the other hand, for the values $v$ close to $1/4$ (i.e., $1/4 \leq v \leq 1/4 + o(1)$), it bids nearly optimally: $1/4$ in the first $T/2$ auctions and $1/8$ in the last $T/2$ auctions, except for a short $o(1)$ transition period when the reserve price changes from $1/4$ to $1/8$. As a consequence, the buyer would receive higher expected utility by misreporting her value to be (close to) $1/4$ whenever her true value is larger than $1/2$, i.e., mean-based algorithms incentivize the buyer to misreport her value in this example.

    In contrast, recall that Algorithm~\ref{alg:threshold} bids $1/4$ for values $v \geq 1/4$ in the first half of the auctions, and bids $1/8$ for values $v \geq 1/8$ in the last half (ignoring the transition periods of length $O(\sqrt{T})$). Therefore, every value bids nearly optimally in all auctions, and consequently the buyer does not gain anything from misreporting her values. 
    
    Intuitively, the lack of incentive compatibility of mean-based algorithms stems from their inability to learn effectively across values: even though they learn to bid optimally for values $v$ close to $1/4$, they are not able to leverage it for larger values $1/2\leq v \leq 1$. Algorithm~\ref{alg:threshold} does not suffer from this issue. It uses the threshold structure of the bidding strategies to learn the optimal bid for all values after the change in reserve price from $1/4$ to $1/8$. In particular, the threshold $v_{t,2}$ increases to 1 within $O(\sqrt{T})$ auctions of the change in reserve price, and Algorithm~\ref{alg:threshold} only bids $1/4$ for values $v > v_{t,2}$, which results in optimal bids for all values.

    It is worth noting that this lack of incentive compatibility arises naturally. A strategic seller who wants to maximize revenue from a mean-based algorithm is incentivized to post decreasing reserve prices. And it is precisely for that sequence of reserve prices that the buyer can gain for misreporting her value to the mean-based algorithm. In other words, the sequences of reserve prices that help the seller maximize her revenue are exactly the ones which render the mean-based algorithm non-incentive-compatible.
\end{example*}

\subsection{Multi-Buyer Strategic Robustness}

In this subsection, we show that the strategic robustness of Algorithm~\ref{alg:threshold} continues to hold in the multi-agent setting where all of the buyers simultaneously employ it to bid. Consider a setting with $n$ buyers who participate in $T$ sequential second price auctions. We will use $F_i$ to denote the value distribution of buyer $i \in [n]$ (and assume that $\bar f$ is an upper bound on the density of all of the $F_i$). Let $\mye(\{F_i\}_i)$ denote the maximum revenue that can be extracted from these buyers in a single-item incentive-compatible mechanism. We assume that ties are broken based on some random ranking of the buyers; see Section~\ref{sec:tie-breaking} for the formal definition of the multi-buyer setup. When all of the buyers use Algorithm~\ref{alg:threshold} to bid, the following theorem proves that the maximum average revenue that the seller can extract from them is at the most $\mye(\{F_i\}_i)$. In other words, the seller cannot exploit Algorithm~\ref{alg:threshold} even when all of the buyers simultaneously use it, thereby extending Theorem~\ref{thm:threshold_robust} to the multi-buyer setting.

\begin{theorem}\label{thm:multi_buyer_robust}
	If all $n$ buyers employ Algorithm~\ref{alg:threshold} with $\eta = 1/\sqrt{\bar fT}$, then the total expected revenue $\rev(A, \{F_i\})$ satisfies
	\begin{align*}
		\rev(A, \{F_i\}_i) \leq \mye(\{F_i\}_i) + 8nK\bar f^{\frac{1}{2}} \cdot \sqrt{T}\,.
	\end{align*}
\end{theorem}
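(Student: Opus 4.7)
Plan: The goal is to extend the single-buyer strategic robustness of Algorithm~\ref{alg:threshold} to the multi-buyer game. My plan is to combine the per-buyer incentive-compatibility property established in Theorem~\ref{thm:ic_threshold} with Myerson's characterization of revenue-maximizing mechanisms. The key enabling observation is that, from each individual buyer's perspective, the sequence of effective highest competing bids $\{h_t^i\}_t$ induced by other buyers' bidding and the random tie-breaking $\sigma_t$ is adaptively adversarial but independent of buyer $i$'s own values $\{V_t^i\}_t$ (see Section~\ref{sec:tie-breaking}). Hence the single-buyer guarantees of Theorem~\ref{thm:ic_threshold} apply to each buyer individually, even in the multi-buyer game, regardless of how the other buyers (also running Algorithm~\ref{alg:threshold}) respond.

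The concrete steps go as follows. First, for each buyer $i$, define the expected allocation $\bar{x}_t^i(v) = \E[\mathbf{1}\{i \text{ wins at time } t\} \mid V_t^i = v]$ and expected payment $\bar{p}_t^i(v) = \E[\text{payment}_t^i \mid V_t^i = v]$, where the expectations integrate out the other buyers' values, the tie-breaking, and the algorithm-induced bids. By Theorem~\ref{thm:ic_threshold}, the buyer's aggregate utility under any misreport exceeds the truthful utility by at most $O(\sqrt{T})$. I would then convert this aggregate $O(\sqrt{T})$-incentive-compatibility into a Myerson-style virtual-welfare bound on each buyer's revenue:
\[
\sum_t \E[\bar{p}_t^i(V_t^i)] \;\leq\; \sum_t \E[\phi_i(V_t^i) \cdot \bar{x}_t^i(V_t^i)] \;+\; O(\sqrt{T}),
\]
where $\phi_i$ is the ironed virtual value function of $F_i$. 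Summing over buyers and applying the single-item feasibility $\sum_i \bar{x}_t^i(V_t^i) \leq 1$ yields $\sum_i \phi_i(V_t^i) \bar{x}_t^i(V_t^i) \leq \max_i \max(\phi_i(V_t^i), 0)$, pointwise in the values. Taking expectations and summing over $t$, the Myerson characterization gives $\sum_t \E[\max_i \max(\phi_i(V_t^i), 0)] = \mye(\{F_i\}_i) \cdot T$, so total revenue is at most $\mye(\{F_i\}_i) \cdot T + O(n\sqrt{T})$, matching the theorem once constants and the $\bar{f}$-dependence are absorbed into the leading $8nK\bar{f}^{1/2}$ factor.

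The main obstacle will be Step~2: converting the aggregate $O(\sqrt{T})$-incentive-compatibility guarantee of Theorem~\ref{thm:ic_threshold} into a per-buyer Myerson-style payment identity. The standard envelope theorem requires pointwise IC to derive $p(v) = v \cdot x(v) - \int_0^v x(u)\,du$, but here only an aggregate bound summed over all $T$ auctions is available. Handling this slack carefully---perhaps by a telescoping argument over misreport functions of the form $M_\tau(v) = v - \tau$, or by constructing an averaged single-shot virtual mechanism against which the aggregate IC condition can be invoked---is the technical crux. A secondary subtlety is that a misreport by buyer $i$ also perturbs the other buyers' algorithm states (through the updated $h_t^j$ for $j \neq i$), but this dependence is naturally absorbed by the adaptive-adversary framing underlying Theorem~\ref{thm:ic_threshold}, and so does not require new machinery beyond what is already available for the single-buyer analysis.
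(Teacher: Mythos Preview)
Your proposal follows the paper's approach exactly: construct an averaged single-shot direct-revelation mechanism from the algorithm's run, use Theorem~\ref{thm:ic_threshold} to show it is approximately incentive-compatible, and bound its revenue via Myerson. The ``main obstacle'' you flag is resolved more simply than you anticipate: writing $\delta_i(z)=\max_y u_i(z,y)-u_i(z,z)$, the quantity $u_i(z,z)+\delta_i(z)$ is \emph{exactly} the interim utility of the truthful mechanism with the same (monotone) allocation rule $x_i$, so the standard envelope identity applies to it directly and gives $p_i(z)=\tilde p_i(z)+\delta_i(z)$ with $\tilde p_i$ the Myerson payment; Theorem~\ref{thm:ic_threshold} applied once with $M=M_i$ (the pointwise-optimal misreport map) then bounds $\E_{Z(i)\sim F_i}[\delta_i(Z(i))]$ in one stroke, so no telescoping over $M_\tau$ is needed. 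On your secondary subtlety, the paper sidesteps it entirely by freezing the iterates $\{\v(j)_t\}_j$ at those of the truthful run and treating misreports only as deviations in the resulting static mechanism, so there is no dynamic feedback from buyer $i$'s misreport into the other buyers' states.
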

\begin{remark}
    An analogue of Theorem~\ref{thm:multi_buyer_robust} continues to hold even if the buyers use different step sizes. In particular, as long as each buyer uses a step size $\eta$ satisfying $\eta = \Theta(1/\sqrt{T})$, our analysis guarantees $\rev(A, \{F_i\}_i) \leq \mye(\{F_i\}_i) + O(\sqrt{T})$.
\end{remark}

The proof of Theorem \ref{thm:multi_buyer_robust} follows from the incentive compatibility guarantees we proved as part of Theorem \ref{thm:ic_threshold}. In particular, we show that the expected allocation and payment rules resulting from all bidders running Algorithm \ref{alg:threshold} form a mechanism that is close to an ex-ante truthful mechanism, which allows us to upper bound its expected revenue by the revenue of the optimal mechanism. 

We conclude by noting that the regret guarantee (Theorem~\ref{thm:regret_threshold}) and the incentive-compatibility property (Theorem~\ref{thm:ic_threshold}) hold for adaptively adversarial highest competing bids, which includes as a special case the highest competing bids generated by the simultaneous use of Algorithm~\ref{alg:threshold} by all buyers. Altogether, even though our exposition has focused on a single-buyer's perspective, the multi-buyer setting, where all of the buyer simultaneously use Algorithm~\ref{alg:threshold}, falls well-within the purview of our results, which apply even in much less-structured adversarial environments.


\section{Logarithmic Regret for Stochastic Environments}\label{sec:log_regret}

Thus far we have focused our attention on the worst-case setting where the highest competing bids are generated adversarially. In this section, we consider a more well-behaved environment and assume that the highest competing bid $h_t$ is drawn from an unknown distribution $\d$. Moreover, we relax our assumption that $b_i = \epsilon \cdot i$ for all $i \in [K]$ and allow the set of possible bids $0 = b_0 < b_1 < b_2< \dots < b_K \leq 1$ to be arbitrary. This allows us to discard highest-competing bids with zero probability of occurring and posit the existence of a positive lower bound $0 < d_\min < \min_i d_i$. Furthermore, we assume that such a lower bound $d_{\min}$ is known to the algorithm designer. In this setting, our concave formulation (Theorem~\ref{thm:convex-refor}) yields a strongly-concave reward function, thereby allowing us to attain a $O(\log T)$ regret guarantee.

\begin{proposition}\label{prop:strong_concavity}
	The utility function $\p \mapsto u(\p|F,\d)$ is $\alpha$-strongly concave for $\alpha = d_\min/\bar f$.
\end{proposition}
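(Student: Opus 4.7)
The plan is to reduce the claim to a strong-concavity statement about the single-variable function $G(x) = \int_x^1 F^-(u)\, du$, then lift it to $u(\p|F,\d)$ via the structure of the formulation in equation~\eqref{eq:convex-refor}. From that formulation,
\begin{align*}
    u(\p|F,\d) = \sum_{i=0}^K d_i \cdot G(1-p_i) \;+\; L(\p),
\end{align*}
where $L(\p)$ collects all the $-d_i \sum_{j\geq i} b_j(p_j - p_{j+1})$ terms and is linear in $\p$. Hence only the quantities $d_i G(1-p_i)$ can contribute to strong concavity, and each depends on a single coordinate $p_i$.

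The key lemma I would prove is that $G$ is $(1/\bar f)$-strongly concave on $[0,1]$. The proof of concavity of $G$ given in the paper used only that $F^-$ is non-decreasing; here I would use the sharper fact that $F^-$ has slope bounded below by $1/\bar f$, namely
\begin{align*}
    F^-(y_2) - F^-(y_1) \;\geq\; \frac{y_2 - y_1}{\bar f} \qquad \text{for all } 0 \leq y_1 \leq y_2 \leq 1,
\end{align*}
which follows from $f \leq \bar f$: writing $v_j = F^-(y_j)$, we have $y_2 - y_1 \leq F(v_2) - F(v_1) \leq \bar f(v_2 - v_1)$. Consequently the function $u \mapsto F^-(u) - u/\bar f$ is non-decreasing, so the decomposition
\begin{align*}
    G(x) \;=\; \int_x^1 \bigl(F^-(u) - u/\bar f\bigr)\, du \;+\; \frac{1 - x^2}{2\bar f}
\end{align*}
expresses $G$ as the sum of a concave function (its derivative $-(F^-(x) - x/\bar f)$ is non-increasing) and the $(1/\bar f)$-strongly concave quadratic $(1 - x^2)/(2\bar f)$. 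Hence $G$ is $(1/\bar f)$-strongly concave.

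From there, each term $p_i \mapsto d_i G(1-p_i)$ is $(d_i/\bar f)$-strongly concave: reflection $p_i \mapsto 1-p_i$ preserves strong concavity, and multiplication by $d_i \geq 0$ scales the modulus. Since the different $d_i G(1-p_i)$ terms depend on disjoint coordinates, the Hessian of $\sum_i d_i G(1-p_i)$ (in the generalized, weak sense) is block-diagonal and satisfies $\nabla^2 \preceq -(d_{\min}/\bar f) I$. Adding the linear term $L(\p)$ does not change the Hessian, and so $u(\p|F,\d)$ is $(d_{\min}/\bar f)$-strongly concave, giving $\alpha = d_{\min}/\bar f$ as claimed.

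The main subtlety to be careful about is that $F^-$ need not be differentiable if $f$ can vanish, so the argument must avoid pointwise second derivatives. The decomposition above sidesteps this: strong concavity of $G$ is derived from the monotonicity of $F^-(u) - u/\bar f$ rather than from $(F^-)'$, which is the cleanest way to handle general value distributions with density bounded by $\bar f$.
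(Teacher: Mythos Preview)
Your proposal is correct and follows essentially the same approach as the paper: both decompose $u(\p|F,\d)$ into a separable nonlinear part $\sum_i d_i\, g(p_i)$ (with $g(p)=\int_{1-p}^1 F^-(u)\,du$) plus a linear part, show $g$ is $(1/\bar f)$-strongly concave using $F^-(y_2)-F^-(y_1)\ge (y_2-y_1)/\bar f$, and then use separability plus $d_i\ge d_{\min}$ to conclude. The only cosmetic difference is that the paper verifies the first-order monotonicity condition $(g'(p)-g'(\tilde p))(p-\tilde p)\le -\frac{1}{\bar f}(p-\tilde p)^2$ directly, whereas you achieve the same via the decomposition $G(x)=\int_x^1(F^-(u)-u/\bar f)\,du + (1-x^2)/(2\bar f)$; both avoid pointwise second derivatives and are equivalent.
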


It is known that Stochastic Gradient Descent achieves $O(\log T)$ regret for minimizing strongly-convex functions~\citep{hazan2006logarithmic}. Thus, we can leverage Proposition~\ref{prop:strong_concavity} to obtain the following $O(\log T)$ regret guarantee for Algorithm~\ref{alg:known_GA}.

\begin{theorem}\label{thm:log_regret}
	Algorithm~\ref{alg:known_GA} with variable step size $\eta_t = \bar f/ (d_\min t)$ (see Appendix~\ref{appendix:log_regret} for a formal definition) satisfies
	\begin{align*}
		\max_{s^*(\cdot)}\ \sum_{t=1}^T u(s^*|F,\d) - \sum_{t=1}^T u(A_t|F,\d) \leq \frac{2 \bar f}{d_\min} \cdot (1 + \log T)\,.
	\end{align*}
\end{theorem}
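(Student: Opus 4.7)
The plan is to recognize \textbf{Algorithm~\ref{alg:known_GA}} as Stochastic Gradient Ascent on the strongly-concave expected-utility objective $\p \mapsto u(\p|F,\d)$, and then invoke the classical logarithmic-regret analysis for SGA on strongly-convex functions (\`a la \citet{hazan2006logarithmic}). The stochastic setting is what makes this work: since $h_t$ is drawn i.i.d.\ from $\d$ and is independent of the iterate $\p_t$ (which depends only on the history $\{h_1,\dots,h_{t-1}\}$ and values $\{V_1,\dots,V_{t-1}\}$), we have $\E[\nabla u(\p_t|F,h_t)\mid \p_t] = \nabla u(\p_t|F,\d)$, so the update in \eqref{eq:GA_update} is a genuine projected stochastic gradient step on $u(\cdot|F,\d)$.

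First I would reduce to optimization over $\pp$: by Theorem~\ref{thm:convex-refor}, $\max_{s^*(\cdot)} u(s^*|F,\d) = \max_{\p^* \in \pp} u(\p^*|F,\d)$, so the pseudo-regret equals $\sum_{t=1}^T [u(\p^*|F,\d) - u(\p_t|F,\d)]$ for the maximizer $\p^* \in \pp$. Next I would combine two ingredients. From Proposition~\ref{prop:strong_concavity}, $u(\cdot|F,\d)$ is $\alpha$-strongly concave with $\alpha = d_{\min}/\bar f$, yielding
\begin{equation*}
u(\p^*|F,\d) - u(\p_t|F,\d) \;\leq\; \langle \nabla u(\p_t|F,\d),\ \p^* - \p_t\rangle \;-\; \tfrac{\alpha}{2}\,\|\p_t - \p^*\|^2.
\end{equation*}
From the non-expansiveness of the Euclidean projection onto the convex set $\pp$ together with $\E[\nabla u(\p_t|F,h_t)\mid\p_t] = \nabla u(\p_t|F,\d)$,
\begin{equation*}
\E\bigl[\|\p_{t+1}-\p^*\|^2\bigr] \;\leq\; \E\bigl[\|\p_t-\p^*\|^2\bigr] \;-\; 2\eta_t\,\E\bigl[\langle \nabla u(\p_t|F,\d),\p_t-\p^*\rangle\bigr] \;+\; \eta_t^2\,G^2,
\end{equation*}
where $G^2 \geq \E\|\nabla u(\p_t|F,h_t)\|^2$.

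Solving for the inner product and plugging into the strong-concavity inequality gives
\begin{equation*}
\E[u(\p^*|F,\d)-u(\p_t|F,\d)] \;\leq\; \Bigl(\tfrac{1}{2\eta_t}-\tfrac{\alpha}{2}\Bigr)\E\|\p_t-\p^*\|^2 - \tfrac{1}{2\eta_t}\E\|\p_{t+1}-\p^*\|^2 + \tfrac{\eta_t G^2}{2}.
\end{equation*}
With the prescribed step size $\eta_t = \bar f/(d_{\min} t) = 1/(\alpha t)$, the coefficient of $\E\|\p_t-\p^*\|^2$ becomes $(t-1)\alpha/2$, and the sum telescopes, leaving $\sum_t \eta_t G^2/2 = (G^2/(2\alpha))\sum_t 1/t \leq (G^2/(2\alpha))(1+\log T)$. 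To close out, I would bound $G^2$ using the explicit gradient formula from Section~\ref{sec:known_dist}: at most one coordinate of $\nabla u(\p|F,h_t)$ equals $F^-(1-p_j)-b_j \in [-1,1]$ and the remaining coordinates are either $0$ or $-\epsilon$, so $\|\nabla u(\p|F,h_t)\|^2 \leq 1 + (K-1)\epsilon^2 \leq 1 + K\epsilon \leq 2$ since $K\epsilon \leq 1$. This yields the bound $(2/(2\alpha))(1+\log T) = (2\bar f/d_{\min})(1+\log T)$, matching the statement.

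The main obstacle is not deep---essentially all the real work has already been done in establishing the concave reformulation (Theorem~\ref{thm:convex-refor}) and strong concavity (Proposition~\ref{prop:strong_concavity}). The only care needed is in (i) verifying that the stochastic gradient is indeed unbiased for $\nabla u(\p_t|F,\d)$, which requires noting that $h_t \sim \d$ is drawn fresh and independently of $\p_t$, and (ii) checking that the variable step size in the algorithm's appendix definition coincides with $1/(\alpha t)$ so the telescoping cleanly produces the logarithmic factor. Formalization of projection non-expansiveness and the telescoping in the presence of expectations is standard and can be found in, e.g., \citet{hazan2016introduction}.
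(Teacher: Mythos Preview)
Your proposal is correct and follows essentially the same approach as the paper: recognize Algorithm~\ref{alg:known_GA} as projected stochastic gradient ascent on the strongly-concave objective $u(\cdot|F,\d)$, invoke the \citet{hazan2006logarithmic} logarithmic-regret bound with $\eta_t=1/(\alpha t)$, and translate back via Theorem~\ref{thm:convex-refor}; the only difference is that you spell out the telescoping rather than citing the result as a black box. One harmless arithmetic slip: with your (correct) bound $G^2\leq 2$ you get $G^2/(2\alpha)=1/\alpha=\bar f/d_{\min}$, not $2\bar f/d_{\min}$, so you in fact prove a constant twice as good as stated---the paper instead uses the looser $G=2$ to land exactly on the theorem's constant.
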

\begin{remark}
    \citet{han2020optimal} prove a $\Omega(T)$ lower bound on regret for the setting where the value $V_t$ is constant and identical for all auctions $t \in [T]$. Theorem~\ref{thm:log_regret} shows that their lower bound does not hold for the ``smoothed" variant of the problem where the value distribution has a bounded density.
\end{remark}

This simple and easy-to-prove regret bound is a testament to the power of our concave reformulation. Our formulation highlights the hidden concavity of the problem and allows for the use of techniques from online/stochastic convex optimization, immediately giving us a $O(\log T)$-regret bound, which is an exponential improvement over the previous best of $O(\sqrt{T})$.\footnote{Both \citet{balseiro2022contextual} and \citet{han2020optimal} both give algorithms that achieve $O(\sqrt{T})$-regret under weaker feedback and arbitrary $F$. Theorem~\ref{thm:log_regret} shows that a drastic improvement can be achieved in our setting where $F$ is absolutely continuous and $h_t$ is revealed after each auction, both of which are common in practice.}

\section{Acknowledgements}

The authors would like to thank Santiago Balseiro and Christian Kroer for helpful discussions, and providing valuable feedback on a preliminary version of this paper.

\singlespacing

\bibliographystyle{plainnat}
\bibliography{refs}

\setstretch{1.5}

\appendix

\newpage
\pagenumbering{arabic}\renewcommand{\thepage}{ec \arabic{page}}

\begin{centering}
\LARGE
Electronic Companion:\\[1em]
Strategically-Robust Bidding Algorithms for First-Price Auctions\\[1em]
\large
Rachitesh Kumar, Jon Schneider, Balasubramanian Sivan\\[1em]
\today\\
\end{centering}
\renewcommand{\theequation}{\thesection-\arabic{equation}
}

\section{Proofs for Section~\ref{sec:convex}}

\subsection{Proof of Lemma~\ref{lemma:non-decreasing-opt}}

\begin{proof}
	$s^* \in \argmax_{s(\cdot)} u(s|F, \pmb d)$ and $s^*(v) \leq v$ follow directly from the definition of $s^*$. Consider two values $0 \leq v < v' \leq 1$, and let $s^*(v) = b_j$, $s^*(v') = b_{j'}$. For contradiction, assume $b_j > b_{j'}$ (or equivalently, $j > j'$). Then, the definition of $s^*(\cdot)$ implies that
	\begin{align*}
		(v - b_j) \cdot \sum_{i =0}^j d_i > (v - b_{j'}) \cdot \sum_{i =0}^{j'} d_i \quad \text{and} \quad (v' - b_{j'}) \cdot \sum_{i =0}^{j'} d_i \geq (v' - b_{j}) \cdot \sum_{i =0}^{j} d_i\,
	\end{align*}
	respectively. The former is a strict inequality because ties are broken in favor of smaller bids. Adding the two inequalities together and cancelling the terms $b_j \cdot \sum_{i \leq j} d_i$ and $b_{j'} \cdot \sum_{i \leq j'} d_i$ yields
	\begin{align*}
		v \cdot \sum_{i =0}^j d_i + v' \cdot \sum_{i =0}^{j'} d_i > v \cdot \sum_{i =0}^{j'} d_i + v' \sum_{i =0}^j d_i \implies (v - v') \cdot \sum_{i = j'+1}^j d_i > 0\,,
	\end{align*}
	 which is a contradiction because $v' > v$, $j > j'$ and $d_i \geq 0$ for all $0 \leq i \leq K$. Therefore, $s^*(\cdot)$ is non-decreasing.
	 
	 Next, we establish the left-continuity of $s^*(\cdot)$. Consider any sequence of increasing values $\{v_n\}_n$ such that $\lim_{n \to \infty} v_n = v$. Since $s^*(\cdot)$ is non-decreasing, there exists a $b_j \leq s^*(v)$ such that $\lim_{n \to \infty} s^*(v_n) = b_j$. Since there are only finitely many bids, there exists an $N \in \mathbb N$ such that $s^*(v_n) = b_j$ for all $n \geq N$. Therefore, for any $\ell > j$, we have
	 \begin{align*}
	 	(v_n - b_j) \cdot \sum_{i=0}^j d_i \geq (v_n - b_\ell) \cdot \sum_{i=0}^\ell d_i
	 \end{align*}
	 for all $n \geq N$. Taking the limit $n \to \infty$ on both sides yields
	 \begin{align*}
	 	(v - b_j) \cdot \sum_{i=0}^j d_i \geq (v - b_\ell) \cdot \sum_{i=0}^\ell d_i \quad \forall\ \ell > j\,.
	 \end{align*}
	 Therefore, $s^*(v) \leq b_j$. Combining this with $b_j \leq s^*(v)$ yields $b^*(v) = b_j$, thereby establishing the left-continuity of $s^*(\cdot)$.
\end{proof}

\subsection{Proof of Theorem~\ref{thm:convex-refor}}

\begin{proof}
	We have already established (1) and (2) in Section~\ref{sec:convex}, and only need to prove (3). First, observe that $F$ is absolutely continuous with $F(0) = 0$ and $F(1) = 1$.  Consequently, the Intermediate Value Theorem implies $\text{range}(F) = [0,1]$. Therefore, part (4) of Proposition 1 of \citet{embrechts2013note} implies $F(F^-(u)) = u$ for all $u \in [0,1]$. Consequently,
	\begin{align*}
		\mathbb{P}(s(v) = b_i) = F\left( F^-\left(1 - p_{i} \right) \right) - F\left(F^-\left(1 - p_{i} \right)  \right) = p_i - p_{i+1}\,.
	\end{align*} 
	Hence, $\mathbb{P}(s(v) \geq b_i) = p_i$ and part (2) applies. Consequently, $u(s|F, \pmb d) = u(\pmb p| F, \pmb d)$.
\end{proof}
\section{Proofs for Section~\ref{sec:known_dist}}\label{appendix:known_GA}

The following lemma characterizes the update step of Algorithm~\ref{alg:known_GA}. It plays a vital role in our analysis of Algorithm~\ref{alg:known_GA}. Intuitively, projecting onto $\pp$ involves a modification of isotonic-regression which ensures the `no over-bidding' condition by ensuring $F^-(1- p_j) \geq b_j$. Someone versed in the Pool Adjacent Violators Algorithm (PAVA) for isotonic regression will find the characterization of the projection and the analysis familiar.

\begin{lemma}\label{lemma:GA_update}
	Fix bidding probabilities $\pmb p \in \pp$, step size $\eta > 0$, and highest competing bid $h = b_i$. Define
	\begin{align*}
		\pmb p' \coloneqq \argmin_{\pmb q \in \pp} \|\pmb q - \pmb p^+\| \quad \text{where} \quad \pmb p^+ \coloneqq \pmb p + \eta \cdot \grad u(\pmb p|F, b_i)\,.
	\end{align*}
	
	Moreover, let $\ell \coloneqq \min\{j > i \mid p_j \leq \eta \cdot \epsilon\}$ and
	\begin{align*}
		m \coloneqq \min_j\left\{ j \leq i\ \biggr|\ p_j \leq 1 - F(b_i) \text{ and } \sum_{k=j}^i (p_j - p_k) \leq \eta \cdot (F^-(1 - p_i) - b_i) \right\}\,.
	\end{align*}
	Then, we have 
    \begin{align}\label{eq:projection}
        p'_j &= \begin{cases}
            p_j &\text{if } j < m\\
            x &\text{if } m \leq j \leq i\\
            p_j - \eta \cdot \epsilon &\text{if } i < j < \ell\\ 
            0 &\text{if } j \geq \ell
        \end{cases}
    \end{align}
    where $x = \min\left\{ \frac{\eta \cdot(F^{-}(1 - p_i) - b_i) + \sum_{k=m}^{i} p_k}{i - m + 1}, 1 - F(b_i) \right\}$. Moreover, $x \geq p_j$ for all $j \in [m, i]$.
\end{lemma}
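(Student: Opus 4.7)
The plan is to verify that $\p'$ is the Euclidean projection of $\p^+$ onto $\pp$ via the standard variational characterization: $\p' \in \pp$ and $\langle \p^+ - \p',\, \mathbf{q} - \p' \rangle \leq 0$ for every $\mathbf{q} \in \pp$. I would first check feasibility by inspecting the formula block-by-block. The non-negativity constraint and the box constraint $p'_j \leq 1 - F(b_j)$ are immediate from the formula, using $1 - F(b_j) \geq 1 - F(b_i)$ for $j \leq i$. The only delicate monotonicity check is at the left boundary $p_{m-1} \geq x$ of the pooled block, which uses the minimality of $m$: either $p_{m-1} > 1 - F(b_i)$ directly (so $p_{m-1} > x$ regardless of whether the cap is active), or $p_{m-1} > S_{m-1}$, where $S_j := \tfrac{\sum_{k=j}^{i-1} p_k + p_i + \eta(F^-(1-p_i) - b_i)}{i - j + 1}$ denotes the running average; the algebraic identity $p_j > S_j \iff p_j > S_{j+1}$ then propagates to $p_{m-1} > S_m \geq x$. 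The boundary $p'_i \geq p'_{i+1}$ follows from $x \geq p_i \geq p_{i+1} \geq p_{i+1} - \eta\epsilon$, and the boundary at $\ell$ is immediate from the definition of $\ell$.

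The central step is verifying the variational inequality. Setting $y_j = p_j$ for $j < i$ and $y_i = p_i + \eta(F^-(1-p_i) - b_i)$, and observing that $(\p^+ - \p')_j = 0$ for $j \notin [m,i] \cup [\ell, K]$, the inner product reduces to
\[
\sum_{j=m}^i (y_j - x)(q_j - x) \;+\; \sum_{j=\ell}^K (p_j - \eta\epsilon)\, q_j.
\]
The second sum is non-positive term-by-term since $p_j \leq p_\ell \leq \eta \epsilon$ and $q_j \geq 0$ for $j \geq \ell$. For the first sum I would apply Abel summation with $\delta_k := q_k - q_{k+1} \geq 0$, using $q_j = q_i + \sum_{k=j}^{i-1} \delta_k$, obtaining
\[
(q_i - x)(i - m + 1)(S_m - x) \;+\; \sum_{k=m}^{i-1} \delta_k \sum_{j=m}^k (p_j - x).
\]
The definition of $m$ guarantees $p_m \leq x$ in both the capped case ($x = 1 - F(b_i) \geq p_m$) and the uncapped case ($x = S_m \geq p_m$), and monotonicity of $\p$ propagates this to $p_j \leq p_m \leq x$ for $j \in [m, k]$, making each inner sum non-positive and hence the second Abel term non-positive. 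In the uncapped case the first Abel term vanishes because $x = S_m$; in the capped case the factor $S_m - x > 0$ is offset by $q_i - x \leq 0$, which holds because any $\mathbf{q} \in \pp$ must satisfy $q_i \leq 1 - F(b_i) = x$. The final assertion $x \geq p_j$ for $j \in [m, i]$ is then immediate from $p_j \leq p_m \leq x$.

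The main obstacle is the feasibility verification in the capped case: naively one worries about a sub-case where $p_{m-1} \leq 1 - F(b_i)$ and yet $p_{m-1} > S_{m-1}$, which would make the ordering $p_{m-1}$ vs.\ $x$ unclear. However, the identity $p_{m-1} > S_{m-1} \iff p_{m-1} > S_m$ combined with the capped-case hypothesis $S_m > 1 - F(b_i)$ yields $p_{m-1} > 1 - F(b_i)$, contradicting the sub-case hypothesis. This subtle interaction between the monotonicity and the box constraint is exactly why the definition of $m$ conjoins both conditions, and it is what allows the Abel-summation argument to close uniformly across both the capped and uncapped cases.
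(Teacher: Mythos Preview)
Your proposal is correct. Both your argument and the paper's establish optimality of the claimed projection via first-order conditions, but they differ in presentation: the paper works through the KKT system, explicitly exhibiting nonnegative dual multipliers $\lambda_j$ on the tight constraints of $\pp$ and writing $\p^+ - \p'$ as the corresponding conic combination, splitting into two cases according to whether the cap $x = 1 - F(b_i)$ is active. You instead verify the variational inequality $\langle \p^+ - \p',\, \mathbf q - \p'\rangle \le 0$ directly, using Abel summation on the pooled block $[m,i]$ to reduce the sign check to the partial sums $\sum_{j=m}^k (p_j - x)$ and the single boundary product $(q_i - x)(S_m - x)$. The Abel-summation route handles the capped and uncapped cases in one sweep (the boundary term either vanishes or is controlled by $q_i \le 1 - F(b_i)$), whereas the paper's explicit-dual construction makes the active-constraint structure more transparent at the cost of a case split. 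The two arguments are equivalent in content; yours is slightly more streamlined, the paper's slightly more structural.
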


\begin{proof}
	For the purposes of the proof, define $\pmb p'$ using \eqref{eq:projection}. To establish the lemma, we need to show that $\pmb p'$ is a solution to the following optimization problem:
	\begin{align}\label{eq:proj_opt}
		\min \quad & \frac{1}{2} \|\p^+ - \pmb q\|^2\\
		\text{s.t.} \quad &\pmb q \cdot \e_j = q_j \leq 1 - F(b_j) && j \in [K] \nonumber\\
		&\pmb q \cdot (-\e_j) = -q_j \leq 0 && j \in [K] \nonumber \\
		&\pmb q \cdot (\e_{j+1} - \e_j)= q_{j+1} - q_{j} \leq 0 && j \in [K-1] \nonumber
	\end{align}
	
	By the KKT optimality conditions, it suffices to show that $\p^+ - \p'$, which is the gradient of the objective function at $\pmb q = \p'$, lies in the cone formed by the coefficient vectors of the tight constraints. We establish this fact in two mutually exclusive and exhaustive cases based on the value of $x$.
	\begin{itemize}
		\item CASE I: Assume $x = \frac{\eta \cdot(F^{-}(1 - p_i) - b_i) + \sum_{k=m}^{i} p_k}{m- i + 1}$. In this case, $\pmb p' \in \pp$ satisfies
			\begin{align*}
				\p' \cdot (\e_{j+1} - \e_j) = 0 && j \in [m, i-1]\\
				\pmb q \cdot (-\e_j) = 0 && j  \in [\ell, K]
			\end{align*}
		 i.e., the constraints $\pmb q \cdot (\e_{j+1} - \e_j) \leq 0$ for $j \in [m, i-1]$ and $\pmb q \cdot (-\e_j) \leq 0$ for $j \in [\ell, K]$ are tight at $\pmb q = \p'$. Define dual variables $\lambda_j$ as follows:
		 \begin{align*}
		 	\lambda_j = \begin{cases}
		 		\eta \cdot (F^{-}(1 - p_i) - b_i) + \sum_{k=j+1}^{i} (p_k - x) &\text{if } j \in [m, i-1]\\
            	-p^+_j &\text{if } j \in [\ell, K]
		 	\end{cases}
		 \end{align*}
		 The condition $x = \frac{\eta \cdot(F^{-}(1 - p_i) - b_i) + \sum_{k=m}^{i} p_k}{m- i + 1}$ implies
		 \begin{align}\label{eq:x_refor}
		 	\eta \cdot (F^{-}(1 - p_i) - b_i) + \sum_{k=m}^{i} (p_k - x) = \eta \cdot (F^{-}(1 - p_i) - b_i) + \sum_{k=m}^{i} p_k - (i - m +1) \cdot x = 0\,.
		 \end{align}
		 
		Therefore, we can write
   	\begin{align*}
   		\p^+ - \p' &= \sum_{j=m}^{i-1} (p_j - x) \cdot \e_j + \left(p_i + \eta \cdot (F^{-}(1 - p_i) - b_i) - x \right) \cdot \e_i + \sum_{j=\ell}^K p_j^+ \cdot \e_j\\
   		&= (0 - \lambda_m) \cdot \e_m + \sum_{j=m+1}^{i-1} (\lambda_{j-1} - \lambda_j) \cdot \e_{j} + \lambda_{i-1} \cdot \e_i + \sum_{j=\ell}^K (-p_j^+) \cdot (-\e_j)\\
   		&= \sum_{j=m}^{i-1} \lambda_j \cdot (\e_{j+1} - \e_j) + \sum_{j=\ell}^K \lambda_j \cdot (-\e_j)
   	\end{align*}
   	
   		To establish the KKT conditions for the current case, all that remains to be shown is $\lambda_j \geq 0$. This is trivially true for $j \in [\ell, K]$ because $\lambda_j = - p_j^+$ and the definition of $\ell$ implies that $p_j^+ \leq 0$ for all $j \in [\ell, K]$. To prove it for $j \in [m, i-1]$, note that the definition of $m$ implies $\sum_{k=m}^i (p_j - p_k) \leq \eta \cdot (F^-(1 -p_i) - b_i)$, which in turn implies
   		\begin{align*}
   			(i - m + 1) \cdot p_{j+1} &\leq \sum_{k=m}^i p_k + \eta \cdot (F^-(1 -p_i) - b_i)= (i - m+1) \cdot x\,.
   		\end{align*}
   		Therefore, $p_m \leq x$. Since $p_j \leq p_m$ for all $j \geq m$, we get that $p_j \leq x$ for all $j \in [m, i]$, and consequently $\lambda_m \leq \lambda_{m+1} \leq \dots \leq \lambda_{i-1}$. Finally, note that \eqref{eq:x_refor} implies
   		\begin{align*}
   			\lambda_m \geq \lambda_m + (p_m - x) = \eta \cdot (F^{-}(1 - p_i) - b_i) + \sum_{k=m}^{i} (p_k - x) = 0\,,
   		\end{align*}
   		thereby establishing $\lambda_j \geq 0$ for all $j \in [m, i-1]$.

	\item CASE II: Assume $x = 1 - F(b_i)$. In this case, $\pmb p' \in \pp$ satisfies
			\begin{align*}
				\p' \cdot \e_j = 1 - F(b_i)\\
				\p' \cdot (\e_{j+1} - \e_j) = 0 && j \in [m, i-1]\\
				\pmb q \cdot (-\e_j) = 0 && j  \in [\ell, K]
			\end{align*}
		 i.e., the constraints $\pmb q \cdot (\e_{j+1} - \e_j) \leq 0$ for $j \in [m, i-1]$, $\pmb q \cdot (-\e_j) \leq 0$ for $j \in [\ell, K]$ and $\pmb q \cdot \e_i = 1 - F(b_i)$ are tight at $\pmb q = \p'$. Define dual variables $\lambda_j$ as follows:
		 \begin{align*}
		 	\lambda_j = \begin{cases}
		 		\sum_{k=m}^j (x - p_j)  &\text{if } j \in [m, i-1]\\
		 		\eta \cdot(F^{-}(1 - p_i) - b_i) + \left\{\sum_{k=m}^{i} p_i \right\}  -  (i-m+1) \cdot x &\text{if } j = i\\
            	-p^+_j &\text{if } j \in [\ell, K]
		 	\end{cases}
		 \end{align*}
		 The definition of $m$ implies $p_j \leq 1 - F(b_i) =x$ for all $j \in [m, i-1]$, i.e, $\lambda_j \geq 0$ for all $j \in [m,i-1]$. Moreover, the definition of $x$ along with the condition $x = 1 - F(b_i)$ implies $\lambda_i \geq 0$. As before, we have $\lambda_j \geq 0$ for all $j \in [\ell, K]$ because $p_j^+ \leq 0$ for all $j \in [\ell, K]$ by definition of $\ell$. To establish the KKT conditions, note that
   	\begin{align*}
   		\p^+ - \p' &= \sum_{j=m}^{i-1} (p_j - x) \cdot \e_j + \left(p_i + \eta \cdot (F^{-}(1 - p_i) - b_i) - x \right) \cdot \e_i + \sum_{j=\ell}^K p_j^+ \cdot \e_j\\
   		&= (0 - \lambda_m) \cdot \e_m + \sum_{j=m+1}^{i-1} (\lambda_{j-1} - \lambda_j) \cdot \e_{j} + (\lambda_{i-1} + \lambda_i) \cdot \e_i + \sum_{j=\ell}^K (-p_j^+) \cdot (-\e_j)\\
   		&= \sum_{j=m}^{i-1} \lambda_j \cdot (\e_{j+1} - \e_j) + \lambda_i \cdot \e_i + \sum_{j=\ell}^K \lambda_j \cdot (-\e_j)\,.
   	\end{align*}
	\end{itemize}
	In both cases, we have shown that $\p^+ - \p'$ lies in the cone formed by the coefficient vectors of the tight constraints. Therefore, by KKT Theorem, $\p'$ is an optimal solution for the quadratic optimization problem \eqref{eq:proj_opt}. Moreover, in both cases we established that $x \geq p_j$ for all $j \in [m, i]$, thereby concluding the proof.
\end{proof}

\subsection{Proof of Proposition~\ref{prop:sub-linear-regret}}

\begin{proof}
	From the regret analysis of Online Gradient Descent (e.g., see Theorem~5.3.1. of \citealt{hazan2016introduction}), we get that
	\begin{align*}
		\sum_{t=1}^T \E[u(\p| F, h_t)] - \sum_{t=1}^T \E[u(\p_t| F, h_t)] \leq \frac{\|\p\|^2}{2 \eta} + \eta \cdot \sum_{t=1}^T \|\nabla u(\p_t| F, h_t)\|^2 &&\forall \p \in \pp\,.
	\end{align*}
	
	Since $\|\p\|^2 \leq K$ for all $\p \in \pp$ and $\|\nabla u(\p| F, h)\|^2 \leq \epsilon^2 \cdot (K-1) + 1 \leq 2$ for all $\p \in \pp, h \in \{b_0, \dots, b_K\}$, we have
	\begin{align}\label{eq:sublinear-regret-inter-1}
		\sum_{t=1}^T \E[u(\p| F, h_t)] - \sum_{t=1}^T \E[u(\p_t| F, h_t)] \leq \frac{K}{2 \eta} + 2\eta \cdot T &&\forall \p \in \pp\,.
	\end{align}
	
	Next, conditioning on $h_t$ and the past highest competing bids $\{h_s\}_{s=1}^{t-1}$, Theorem~\ref{thm:convex-refor} implies 
	\begin{align*}
		u(A_t| F, h_t) = u(\p_t | F, h_t) \quad \text{and} \quad u(s_\p| F, h_t) = u(\p | F, h_t)\,,
	\end{align*}
    where $s(\cdot)$ is the bidding strategy corresponding to $\p$, i.e., 
    \begin{align*}
				s_\p(v) = b_i \quad \text{for } v \in \left(F^-\left(1- p_i \right), F^-\left(1 - p_{i+1} \right) \right]\,,
	\end{align*}
    and $s_\p(0) = 0$. Here, we have used the fact that $V_t$ is independent of $h_t$. Taking expectation over $h_t$ and $\{h_s\}_{s=1}^{t-1}$, and summing over $t=1$ to $T$ yields
    \begin{align*}
        \sum_{t=1}^T \E[u(A_t| F, h_t)] = \sum_{t=1}^T \E[u(\p_t| F, h_t)] \quad \text{and} \quad \sum_{t=1}^T \E[u(s| F, h_t)] = \sum_{t=1}^T \E[u(\p| F, h_t)]\,.
    \end{align*}

	Combining with \eqref{eq:sublinear-regret-inter-1} yields
	\begin{align*}
		\sum_{t=1}^T \E[u(s_\p| F, h_t)] - \sum_{t=1}^T \E[u(A_t| F, h_t)] \leq \frac{K}{2 \eta} + 2\eta \cdot T &&\forall \p \in \pp\,.
	\end{align*}
    Finally, recall that Lemma~\ref{lemma:non-decreasing-opt} establishes the optimality of non-decreasing left-continuous strategy that never overbid, all of which can be written as $s_\p$ for some $\p$ due to Theorem~\ref{thm:convex-refor}. This implies our desired bound:
    \begin{align*}
		\max_{s(\cdot)} \sum_{t=1}^T \E[u(s| F, h_t)] - \sum_{t=1}^T \E[u(A_t| F, h_t)] \leq \frac{K}{2 \eta} + 2\eta \cdot T\, \tag*{\qedhere}
	\end{align*}
\end{proof}

\subsection{Proof of Proposition~\ref{prop:regret-lower-bound}}

\begin{proof}
	Let the value distribution $F$ be the uniform distribution on the interval $[1/2, 1/2 + 1/T]$. Moreover, let the set of possible bids be $b_0 = 0$ and $b_1 = 1/4$. In each auction $t \in [T]$, suppose the highest competing bid $h_t$ is set equal to 0 and $1/4$ with equal probability, i.e., $\mathbb{P}(h_t = 0) = \mathbb{P}(h_t = 1/4) = 1/2$, and assume that these highest competing bids $\{h_1, \dots, h_T\}$ are independent across auctions and independent of the values $\{V_t\}_{t=1}^T$.
	
	For any auction $t \in [T]$ and a bidding strategy $A_t(\cdot)$ that does not depend on the realization of $h_t$, it is clear that
	\begin{align*}
		\E_{h_t}[u(A_t| F, h_t)] &= \E_{h_t}\left[ \E_{V_t}[ (V_t - A_t(V_t)) \cdot \mathbf 1(A_t(V_t) \geq h_t)] \right]\\
		&= \E_{V_t}\left[ \E_{h_t}[ (V_t - A_t(V_t)) \cdot \mathbf 1(A_t(V_t) \geq h_t)] \right]\\
		&\leq \E_{V_t}\left[\max_{b \in \{0, 1/4\}} \E_{h_t}[ (V_t - b) \cdot \mathbf 1(b \geq h_t)] \right]\\
		&= \E_{V_t}\left[\max_{b \in \{0, 1/4\}} (V_t - b) \cdot \mathbb{P}(b \geq h_t) \right]\\
		&\leq \frac{1}{4} + \frac{1}{T}
	\end{align*}
	where the last inequality follows from the fact that, when $h_t$ is selected uniformly at random from $\{0,1/4\}$, bidding 1/4 yields higher utility than bidding $0$ for all $V_t \in [1/2, 1/2 +1/T]$. This is because $V_t - 1/4 \geq (V_t - 0)/2$ for all $V_t \in [1/2, 1/2 +1/T]$. Therefore, we get an upper bound on the expected performance of every online algorithm
	\begin{align}\label{eq:lower-bound-inter-1}
		\E_{h_1, \dots, h_T} \left[ \sum_{t=1}^T u(A_t| F, h_t) \right] \leq \left( \frac{1}{4} + \frac{1}{T}\right) \cdot T = \frac{T}{4} + 1\,.
	\end{align}
	
	On the other hand, note that the anti-concentration of sums of independent Bernoulli random variables implies the existence of a constant $c > 0$ such that
	\begin{align*}
		\mathbb{P}\left(E \right) \geq c\, \quad \text{where} \quad E = \left\{ \sum_{t=1}^T \mathbf 1(h_t = 0) \geq \frac{T}{2} + \frac{\sqrt{T}}{2} \right\}\,.
	\end{align*}
	
	Let $s_0$ (respectively $s_{1/4}$) be the bidding strategy that always bids $0$ ($1/4$), i.e., $s_0(v) = 0$ and $s_{1/4}(v) = 1/4$ for all $v \in [0,1]$. Then, we get
	\begin{align}\label{eq:lower-bound-inter-2}
		\E_{h_1,\dots, h_T} \left[ \max_{s(\cdot)} \sum_{t=1}^T u(s| F, h_t) \right] &\geq \E_{\{h_t\}_t} \left[ \sum_{t=1}^T u(s_0| F, h_t)\ \biggr|\ E \right] \cdot c + \E_{\{h_t\}_t} \left[ \sum_{t=1}^T u(s_{1/4}| F, h_t)\ \biggr|\ E^c \right] \cdot (1 - c) \nonumber\\
		&\geq \frac{1}{2}\cdot \left(\frac{T}{2} + \frac{\sqrt{T}}{2} \right) \cdot c + \frac{1}{4} \cdot T \cdot (1 - c) \nonumber\\
		&\geq \frac{T}{4} + \frac{c}{2} \cdot \sqrt{T}\,.
	\end{align}
	Combining \eqref{eq:lower-bound-inter-1} and \eqref{eq:lower-bound-inter-2} yields
	\begin{align*}
		\E_{h_1,\dots, h_T} \left[ \regret(A|F) \right] = \E_{h_1,\dots, h_T} \left[ \max_{s(\cdot)} \sum_{t=1}^T u(s| F, h_t) - \sum_{t=1}^T u(A_t| F, h_t)\right] \geq \frac{c}{2} \cdot \sqrt{T}\,.
	\end{align*}
	Therefore, we must have $\max_{h_1, \dots, h_T} \regret(A|F) \geq \Omega(\sqrt{T})$ for all online algorithms $A$, thereby establishing the lemma.
\end{proof}

\subsection{Proof of Theorem~\ref{thm:strat_robust}}

\begin{proof}
	We will use a potential function argument: we define a function $\Phi: \pp \to [0, \sqrt{KT/2}]$ such that for all auctions $t \in [T]$, the difference between the revenue of the seller $\rev(A_t, h_t)$ and $\mye(F)$ can be charged against the change in $\Phi$ for all possible values of $h_t$. To this end, define the potential function $\Phi$ as
	\begin{align*}
		\Phi(\p) \coloneqq \frac{\|\p\|_2^2}{2\eta} = \sum_{j=1}^K \frac{p_j^2}{2\eta}\,.
	\end{align*}
    Note that $0 \leq \Phi(\p) \leq K/2\eta = \eta T$ for all $\p \in \pp$.
 
	We start by showing that, to establish the theorem, it suffices to prove the following statement for all $\p \in \pp$ and $h \in \{b_0, \dots, b_K\}$:
	\begin{align}\label{eq:strat_robust-inter-1}
		\Delta \Phi (\p) + \rev(\p, h) \leq \mye(F) + \eta\,,
	\end{align}
	where 
	\begin{itemize}
		\item $\Delta \Phi(\p) = \Phi\left(\argmin_{\p' \in \pp} \left\|\p' - \left\{\p + \eta\cdot \nabla u(\p|F, h_t) \right\} \right\| \right) - \Phi(\p)$ is the change in potential caused by a single update-step of Algorithm~\ref{alg:known_GA},
		\item $\rev(\p, h) \coloneqq \rev(s, h)$ for the bidding strategy $s$ corresponding to bidding-probability vector $\p$, which sets $s(v) \coloneqq b_k$ for $v \in \left(F^-\left(1- p_i \right), F^-\left(1 - p_{i+1} \right) \right]$.
	\end{itemize}
	This is because, conditioned on $h_t$ and the past competing bids $\{h_s\}_{s=1}^{t-1}$, applying \eqref{eq:strat_robust-inter-1} to iterate $\p_t$ of Algorithm~\ref{alg:known_GA} yields
	\begin{align*}
		\Phi(\p_{t+1}) - \Phi(\p_t) + \rev(\p_t, h_t) \leq \mye(F) + \eta\,.
	\end{align*}
    Here, we have used the fact that $V_t$ is independent of $\{h_s\}_{s=1}^t$, which ensures that the distribution of $V_t$ remains $F$ even after the conditioning. Taking expectations over $h_t$ and $\{h_s\}_{s=1}^{t-1}$ yields
    \begin{align*}
        \E[\Phi(\p_{t+1})] - \E[\Phi(\p_t)] + \E[\rev(\p_t, h_t)] \leq \mye(F) + \eta\,.
    \end{align*}
	Summing over all times steps and noting that $\rev(\p_t, h_t) = \rev(A_t, h_t)$ for all $h_t$, we get
	\begin{align*}
		&\sum_{t=1}^T \E[\Phi(\p_{t+1})] - \E[\Phi(\p_t)] + \sum_{t=1}^T \E[\rev(A_t, h_t)] \leq \mye(F) \cdot T + \eta T\\
		\implies &\sum_{t=1}^T \E[\rev(A_t, h_t)] \leq \mye(F) + \eta T + \Phi(\p_1) - \E[\Phi(\p_{T+1})] \leq  \mye(F) + \eta T + \eta T\,,
	\end{align*}
	and the theorem statement follows. To complete the proof, we next establish \eqref{eq:strat_robust-inter-1}.
	
	Fix some $\p \in \pp$ and highest competing bid $h = b_k$ for some $k \in \{0, \dots, K\}$. First, observe that for $s(v) \coloneqq b_i$ for $v \in \left(F^-\left(1- p_i \right), F^-\left(1 - p_{i+1} \right) \right]$, we have
	\begin{align*}
		\rev(\p,h) = \rev(s, h) &= \E\left[ s(v) \cdot \mathbf 1 (s(v) \geq b_k) \right]\\
		&= \sum_{i=k}^K b_i \cdot \mathbb P(s(v) = b_i)\\
		&= \sum_{i=k}^K b_i \cdot (p_i - p_{i+1})\\
		&= b_k \cdot p_k +  \sum_{i=k+1}^K p_i \cdot (b_i - b_{i-1})\\
		&= b_k \cdot p_k + \sum_{i=k+1}^K p_{i} \cdot \epsilon \tag{$\clubsuit$}\label{eq:strat_robust-inter-2}
	\end{align*}
	
	Next, using the terminology and the result from Lemma~\ref{lemma:GA_update}, we can write
	\begin{align*}
		\Delta \Phi(\p) = & \Phi(\p') - \Phi(\p)\\
		= & \sum_{j=1}^K \frac{(p'_j)^2 - p_j^2}{2\eta}\\
		= & \frac{1}{2\eta} \cdot \sum_{j=1}^K (p'_j - p_j)(p_j' + p_j)\\
		= & \frac{1}{2\eta} \cdot \sum_{j=1}^{m-1} (p_j - p_j)(p_j + p_j) + \frac{1}{2\eta} \cdot \sum_{j=m}^k (x - p_j)(x+p_j)\\
		 & + \frac{1}{2\eta} \cdot \sum_{j=k+1}^{\ell-1} (p_j - \eta \epsilon - p_j)(p_j - \eta \epsilon + p_j) + \frac{1}{2\eta} \cdot\sum_{j=\ell}^K (0-p_j)(0+p_j) \\
		 \leq & \frac{1}{2\eta} \cdot \sum_{j=m}^k (x-p_j)(2x) + \frac{1}{2\eta} \cdot \sum_{j=k+1}^{\ell-1} (-\eta \epsilon)(2p_j - \eta \epsilon)\\
		 \leq & \frac{x}{\eta} \cdot \sum_{j=m}^k (x-p_j) - \sum_{j=k+1}^{\ell-1} p_j \cdot \epsilon + \sum_{j=k+1}^{\ell-1}\frac{\eta \epsilon^2}{2}\\
		 \leq & \frac{x}{\eta} \cdot \left\{ \eta \cdot (F^-(1-p_k) - b_k) \right\} - \sum_{j=k+1}^{\ell-1} p_j \cdot \epsilon + \frac{(\ell-k-2)\eta \epsilon^2}{2}\\
		 = & p_k \cdot (F^-(1-p_k) - b_k) + (x-p_k) \cdot (F^-(1-p_k) - b_k) - \sum_{j=k+1}^{\ell-1} p_j \cdot \epsilon + \frac{(\ell-k-2)\eta \epsilon^2}{2}\\
		 \leq & p_k \cdot (F^-(1-p_k) - b_k) + \eta \cdot (1- \epsilon) - \sum_{j=k+1}^{\ell-1} p_j \cdot \epsilon + \frac{(\ell-k-2)\eta \epsilon^2}{2} \tag{$\spadesuit$}\label{eq:strat_robust-inter-3}
	\end{align*}
	where the third inequality follows from the definition of $x$ (as defined in Lemma~\ref{lemma:GA_update}), and the final inequality follows from the fact that $x - p_k \leq \eta$ and $F^-(1-p_j) - b_j \leq 1- \epsilon$ for all $j \in \{0,1,\dots, K\}$.
	
	Combining \eqref{eq:strat_robust-inter-2} and \eqref{eq:strat_robust-inter-3} yields
	\begin{align*}
		\Delta \Phi(\p) + \rev(\p, h) &\leq p_k \cdot (F^-(1-p_k) - b_k) + \eta \cdot (1- \epsilon) - \sum_{j=k+1}^{\ell-1} p_j \cdot \epsilon + \frac{(\ell-k-2)\eta \epsilon^2}{2} + b_k \cdot p_k + \sum_{i=k+1}^K p_{i} \cdot \epsilon\\
		&\leq \mye(F) + \eta \cdot (1-\epsilon) + \frac{(\ell-k-2)\eta \epsilon^2}{2} + \sum_{j=\ell}^K p_j \cdot \epsilon\\
		&\leq \mye(F) + \eta \cdot (1-\epsilon)+ (\ell-k-2)\eta \epsilon^2 + \sum_{j=\ell}^K \eta \epsilon^2\\
		&\leq \mye(F) + \eta \cdot (1-\epsilon)+ \eta \epsilon^2 \cdot K\\
		&\leq \mye(F) + \eta \cdot (1-\epsilon)+ \eta \epsilon\\
		&= \mye(F) + \eta\,,
	\end{align*} 
	where the second inequality follows from $\mye(F) \geq r \cdot (1 - F(r))$ for $r = F^-(1 - p_i)$, the third inequality follows from the fact that $p_j \leq \eta\epsilon$ for all $j \geq \ell$ ((see Lemma~\ref{lemma:GA_update} for definition of $\ell$)), and the fifth inequality follows from the assumption that $b_K = \epsilon K \leq 1$. Thus, we have established \eqref{eq:strat_robust-inter-1} and thereby the theorem.
\end{proof}
\section{Proofs for Section~\ref{sec:unknown_dist}}\label{appendix:unknown}

\subsection{Proof of Proposition~\ref{prop:alg_equivalence}}

\begin{proof}
	We will prove the proposition using induction on $t \in [T]$. The base case $t=1$ follows from our assumption that $\v_1 = \pmb 1 - \p_1$. Assume that the induction hypothesis holds for some $t \in [T-1]$, i.e., $\v_t = \pmb 1 - \p_t$. Suppose the $t$-th highest competing bid $h_t$ is equal to $b_k$ for some $k \in [K]$. Observe that
	\begin{align*}
		&\text{For } i > k: \quad v_{t+1, i}^+ = v_{t,i} + \eta \cdot \epsilon = 1 - p_{t,i}^+  + \eta \cdot \epsilon = 1 - (p_{t,i} - \eta \cdot \epsilon) = 1 - p_{t+1, i}\\
		&\text{For } i = k: \quad v_{t+1, i}^+ = v_{t,i} - \eta \cdot (v_{t,i} - b_k) = 1 - p_{t,i} - \eta \cdot (F^-(1 - p_{t,i}) - b_k) = 1- p_{t+1,i}\\
		&\text{For } i < k: \quad v_{t+1, i}^+ = v_{t,i} = 1 - p_{t,i} = 1 - p_{t+1,i}\,.
	\end{align*}
	Therefore we have $\v^+_{t+1} = \pmb 1 - \p^+_{t+1}$. Next, note that $F(x) = x$ implies
	\begin{align*}
		\vv &= \{\v \in [0,1]^K \mid v_i \leq v_{i+1}, v_i \geq b_i\}\\
		 &= \{\v \in [0,1]^K \mid 1 - v_i \geq 1 - v_{i+1}, 1 - v_i \leq 1 - F(b_i)\}\\
		 &= \{\pmb 1 - \p \in [0,1]^K \mid p_i \geq p_{i+1}, p_i \leq 1 - F(b_i)\} \tag{Setting $v_i = 1- p_i$}\\
		 &= \pmb 1 - \pp\,.
	\end{align*}
	Finally, combining $\v^+_{t+1} = \pmb 1 - \p^+_{t+1}$ and $\vv = \pmb 1 - \pp$, we get
	\begin{align*}
		\v_{t+1} = \argmin_{\v \in \vv} \|\v - \v_t^+\| = \argmin_{\v \in \vv} \|(\pmb 1 - \v) - (\pmb 1 - \v_t^+)\| = \argmin_{\p \in \pp} \|\p - \p_t^+\| =  \p_{t+1}\,.
	\end{align*}
	This completes the induction step.
\end{proof}

As a direct consequence of Proposition~\ref{prop:alg_equivalence} and Lemma~\ref{lemma:GA_update}, we get the following corollary characterizing the update step of Algorithm~\ref{alg:threshold}. We will use it repeatedly in our proofs for Algorithm~\ref{alg:threshold}.

\begin{corollary}\label{cor:threshold_update}
	Fix thresholds $\v \in \vv$, step size $\eta > 0$, and highest competing bid $h = b_i$. Define
	\begin{align*}
		\v' \coloneqq \argmin_{\pmb w \in \vv} \|\pmb w - \v^+\| \quad \text{where} \quad v_{j}^+ = \begin{cases}
			 	v_{j} + \eta \cdot \epsilon &\text{if } b_j > h\\
			 	v_{j} - \eta \cdot (v_{j} - h_t) &\text{if } b_j = h\\
			 	v_{j} &\text{if } b_j < h_t
			 \end{cases}\,.
	\end{align*}
	
	Moreover, let $\ell \coloneqq \min\{j > i \mid v_j \geq 1 - \eta \cdot \epsilon\}$ and
	\begin{align*}
		m \coloneqq \min_j\left\{ j \leq i\ \biggr|\ v_j \geq b_i \text{ and } \sum_{k=j}^i (v_k - v_j) \leq \eta \cdot (v_i - b_i) \right\}\,.
	\end{align*}
	Then, we have 
    \begin{align}\label{eq:threshold_projection}
        v'_j &= \begin{cases}
            v_j &\text{if } j < m\\
            x &\text{if } m \leq j \leq i\\
            v_j + \eta \cdot \epsilon &\text{if } i < j < \ell\\ 
            1 &\text{if } j \geq \ell
        \end{cases}
    \end{align}
    where $x = \max\left\{\frac{ \left\{\sum_{k=m}^{i} v_k \right\} -\eta \cdot(v_i - b_i) }{i - m + 1}, b_i \right\}$. Moreover, $x \leq v_j$ for all $j \in [m, i]$.
\end{corollary}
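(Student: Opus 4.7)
The plan is to derive the corollary as a direct consequence of Lemma~\ref{lemma:GA_update} via the involution $\p \mapsto \pmb 1 - \p$ from Proposition~\ref{prop:alg_equivalence}. Concretely, I would take $F$ to be the uniform distribution on $[0,1]$ (so that $F(x) = x$ and $F^-(u) = u$) and set $\p = \pmb 1 - \v \in \pp$. The first step is to verify that, under this substitution, the three-case update defining $v_j^+$ in the corollary coincides exactly with the gradient step $\p^+ = \p + \eta\,\grad u(\p \,|\, F, h)$ of Lemma~\ref{lemma:GA_update}. Indeed, for uniform $F$, the three cases of $\partial_j u$ evaluate to $0$, $F^-(1 - p_j) - b_j = v_j - b_j$, and $-\epsilon$; substituting $p_j = 1 - v_j$ gives precisely the three expressions for $v_j^+$. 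Combined with $\vv = \pmb 1 - \pp$ (established inside the proof of Proposition~\ref{prop:alg_equivalence}) and the fact that $\v \mapsto \pmb 1 - \v$ is an isometry, the Euclidean projection commutes with the involution, yielding $\v' = \pmb 1 - \p'$ where $\p' = \argmin_{\pmb q \in \pp} \|\pmb q - \p^+\|$.

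Next I would translate the defining conditions of $m$ and $\ell$ from Lemma~\ref{lemma:GA_update} through $p_j = 1 - v_j$. The threshold condition $p_j \leq \eta\epsilon$ becomes $v_j \geq 1 - \eta\epsilon$, so $\ell$ indexes the same position. For $m$, the bound $p_j \leq 1 - F(b_i) = 1 - b_i$ becomes $v_j \geq b_i$, while $\sum_{k=j}^{i} (p_j - p_k) \leq \eta(F^-(1 - p_i) - b_i)$ becomes $\sum_{k=j}^{i} (v_k - v_j) \leq \eta(v_i - b_i)$, matching the statement verbatim.

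Finally I would read off the piecewise characterization via $v_j' = 1 - p_j'$. The cases $j < m$, $i < j < \ell$, and $j \geq \ell$ are immediate using $1 - (p_j - \eta\epsilon) = v_j + \eta\epsilon$ and $1 - 0 = 1$. For $m \leq j \leq i$, using $1 - \min(a,b) = \max(1 - a, 1 - b)$ and simplifying
\[ 1 - \frac{\eta(v_i - b_i) + \sum_{k=m}^{i} (1 - v_k)}{i - m + 1} = \frac{\sum_{k=m}^{i} v_k - \eta(v_i - b_i)}{i - m + 1}, \]
together with $1 - (1 - b_i) = b_i$, recovers the stated formula for $x$. The auxiliary inequality $x \leq v_j$ for $j \in [m, i]$ is the image of the corresponding inequality $x_p \geq p_j$ under the involution.

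There is no serious obstacle here: the proof is essentially bookkeeping around the orientation-reversing involution ($\min$ becomes $\max$, inequalities in $\p$ flip in $\v$, and the telescoping sum of constants rearranges cleanly). All of the mathematical content, including the KKT verification that justifies the projection formula, is inherited from Lemma~\ref{lemma:GA_update}. The one point meriting care is confirming that Euclidean projection commutes with the affine involution, which follows from $\vv = \pmb 1 - \pp$ together with the isometry property of $\v \mapsto \pmb 1 - \v$.
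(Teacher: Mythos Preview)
Your proposal is correct and matches the paper's approach exactly: the paper states the corollary as ``a direct consequence of Proposition~\ref{prop:alg_equivalence} and Lemma~\ref{lemma:GA_update}'' without spelling out the details, and what you have written is precisely the change-of-variables bookkeeping that this entails.
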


\subsection{Proof of Theorem~\ref{thm:regret_threshold}}

\begin{proof}
	Consider a benchmark bidding strategy $s^*: [0,1] \to \{b_0, b_1, \dots, b_K\}$ and a value $v^* \in [0,1]$. Assume $s^*(w) \leq w$ for all $w \in [0,1]$,  and let $s^*(v^*) = b_{i^*}$ be the bid for value $v$ under $s^*$. Define the potential function $\Phi: \vv \to [0, K/\eta]$ as
	\begin{align*}
		\Phi(\v|v^*) \coloneqq \frac{1}{\eta} \cdot \left\{ \sum_{j=1}^K (v^* - v_j) \cdot \mathbf 1 (v^* > v_j) + \sum_{j=1}^{i^*} v_j \right\}
	\end{align*}
	We start by showing that, to prove the theorem, it suffices to prove the following statement for all thresholds $\v \in \vv$, values $v^* \in [0,1]$, and highest competing bids $h$:
	\begin{align}\label{eq:regret_threshold_inter-1}
		\Delta \Phi(\v|v^*) + R(\v | s^*, v^*, h) \leq 3 \cdot \mathbf{1}\left(\min_{j \in [K]} |v_j - v^*| \leq \eta \right)
	\end{align}
	where
	\begin{itemize}
		\item $\Delta \Phi(\v|v^*) \coloneqq \Phi(\v'|v^*) - \Phi(\v|v^*)$ is the change in potential caused by a single update step of Algorithm~\ref{alg:threshold}. Here $\v'$ are the thresholds obtained by applying the update step of Algorithm~\ref{alg:threshold} to $\v$ (see equation \eqref{eq:threshold_projection} of Corollary~\ref{cor:threshold_update} for a formal definition).
		\item $R(\v | s^*, v^*, h) \coloneqq \{v^* - s^*(v^*)\} \cdot \mathbf{1}(h \leq s^*(v^*)) - \{v^* - s_\v(v^*)\} \cdot \mathbf{1}(h \leq s_\v(v^*))$ is the regret associated with bidding according to $s_\v$ instead of the benchmark strategy $s^*$ for a buyer with value $v^*$. Here $s_\v$ is the bidding strategy corresponding to thresholds $\v$, i.e., $s_\v(v^*) = b_j$ if $v^* \in (v_j, v_{j+1}]$.
	\end{itemize}
 
	Suppose \eqref{eq:regret_threshold_inter-1} holds for all thresholds $\v \in \vv$, values $v^* \in [0,1]$, and highest competing bids $h$. Then, conditioned on $h_t$ and the past competing bids $\{h_s\}_{s=1}^{t-1}$, we can apply it to $\v_t$ and $v^* = V_t$ to get
	\begin{align*}
		\Phi(\v_{t+1}|V_t) - \Phi(\v_t|V_t) + R(\v_t|s^*, V_t,h_t) \leq 3 \cdot \mathbf{1}\left(\min_{j \in [K]} |v_{t,j} - V_t| \leq \eta \right)\, &&\forall t \in [T]\,. 
	\end{align*}
    Taking an expectation over $V_t \sim F$, and using $\E_{V_t}[\cdot]$ to denote the conditional expectation $\E[\cdot | \{h_s\}_{s=1}^t]$, yields
    \begin{align*}
        &\E_{V_t}[\Phi(\v_{t+1}|V_t)] - \E_{V_t}[\Phi(\v_t|V_t)] + \E_{V_t}\left[ R(\v_t|s^*, V_t,h_t)  \right] \leq 3 \cdot \mathbb{P}_{V_t}\left(\min_{j \in [K]} |v_{t,j} - V_t| \leq \eta \right)\\
        \implies &\E_{V_t}[\Phi(\v_{t+1}|V_t)] - \E_{V_t}[\Phi(\v_t|V_t)] + u(s^*|F,h_t) - u(s_\v|F,h_t) \leq 3 \cdot \sum_{j=1}^K \mathbb{P}_{V_t}\left( |v_{t,j} - V_t| \leq \eta \right)\\
        \implies & u(s^*|F,h_t) - u(s_\v|F,h_t) \leq 3 \cdot K \cdot \bar f \cdot 2\eta + \E_{V_t}[\Phi(\v_{t+1}|V_t)] - \E_{V_t}[\Phi(\v_t|V_t)]\\
        \implies & u(s^*|F,h_t) - u(s_\v|F,h_t) \leq 3 \cdot K \cdot \bar f \cdot 2\eta + \E_{V_{t+1}}[\Phi(\v_{t+1}|V_{t+1})] - \E_{V_t}[\Phi(\v_t|V_t)]
    \end{align*}
    where the first implication follows from the definition of $u(s|F,h)$ and the union bound, the second implication follows from $\mathbb{P}\left(|v_{t,j} - v^*| \leq \eta \right) = \mathbb{P}\left(v^* \in  [v_{t,j} - \eta, v_{t,j} + \eta] \right) \leq \bar f \cdot 2\eta$ (here $\bar f$ is an upper bound on the density of $F$), and the third implication follows from the independence of $V_t$, $V_{t+1}$ and $\{h_s\}_{s=1}^t$. 
    
	Taking expectation over $\{h_s\}_{s=1}^T$ and summing over $t \in [T]$ yields:
	\begin{align*}
		 \sum_{t=1}^T \E[u(s^*|F,h_t)] - \sum_{t=1}^T \E[u(s_\v|F,h_t)] &\leq 6\bar f K \cdot \eta T + \sum_{t=1}^T \left\{\E[\Phi(\v_{t}|V_t)] - \E[\Phi(\v_{t+1}|V_{t+1}) ] \right\} \\
        &\leq 6\bar f K \cdot \eta T + \E[\Phi(\v_{1}|V_1)] - \E[\Phi(\v_{T+1}|V_{T+1}) ]\\
        &\leq 6\bar f K \cdot \eta T + \frac{K}{\eta}\,,
	\end{align*}
	where $V_{T+1}$ is a fresh sample from $F$, independent of all other values. Since the benchmark strategy $s^*$ was an arbitrary strategy that did not overbid, we have shown that \eqref{eq:regret_threshold_inter-1} is a sufficient condition for the theorem to hold, and we focus on establishing \eqref{eq:regret_threshold_inter-1} in the remainder of the proof.
	
	Fix a benchmark strategy $s^*$ with $s^*(w) \leq w$ for all $w \in [0,1]$, a value $v^*$, a highest competing bid $h$ and thresholds $\v \in \vv$. First assume that $\min_{j \in [K]} |v_j - v^*| > \eta$. In particular, this implies that if $v_j > v^*$ (respectively $v_j < v^*$), then $v_j'>v^*$ (respectively $v_j' < v^*$), i.e., the thresholds don't cross $v^*$ during the update $\v \to \v'$. Let $s^*(v^*) = b_{i^*}$ be the bid under strategy $s^*$ for value $v^*$, let $s_\v(v^*) = b_u$ be the bid under strategy $s_\v$ for value $v^*$ (i.e., $u = \max\{j \mid v_j \leq v^* \}$), and let $h = b_i$ be the highest competing bid. Since the thresholds don't cross $v^*$ during the update $\v \to \v'$, we have
 \begin{align*}
     \Delta \Phi(\v|v^*) =&  \frac{1}{\eta} \cdot \left\{ \sum_{j=1}^u (v^* - v'_j) \cdot \mathbf 1 (v^* > v'_j) + \sum_{j=1}^{i^*} v'_j \right\} - \frac{1}{\eta} \cdot \left\{ \sum_{j=1}^u (v^* - v_j) \cdot \mathbf 1 (v^* > v_j) + \sum_{j=1}^{i^*} v_j \right\}\\
     =& \frac{1}{\eta} \cdot \left\{ - \sum_{j=1}^u (v'_j - v_j) + \sum_{j=1}^{i^*} (v'_j - v_j) \right\}\,.
 \end{align*}
 We establish \eqref{eq:regret_threshold_inter-1} by separately analyzing the following mutually exclusive and exhaustive cases on the ordering of the algorithm's bid $b_{u}$, the benchmark bid $b_{i^*}$, and the highest other bid $b_{i}$. Throughout these cases, we extensively use our characterization in Corollary \ref{cor:threshold_update} of the form of the update $v'$.
	\begin{enumerate}
    \item $b_u \geq b_{i^*} \geq b_i$: The utility under $s^*$ is $\{v^* - s^*(v^*)\}\cdot \mathbf 1(h \leq s^*(v^*)) = v^* - b_{i^*}$ and the utility under $s_\v$ is $\{v^* - s_\v(v^*)\}\cdot \mathbf 1(h \leq s_\v(v^*)) = v^* - b_u$. Therefore, the regret $R(\v|s^*,v^*,h) = b_u - b_{i^*} = (u-i^*) \epsilon$. On the other hand, since $v_j' = v_j + \eta \epsilon $ for all $i^* < j \leq u$, we get $\Delta \Phi(\v|v^*) \leq - (u - i^*) \epsilon$.

    \item $b_{i^*} > b_u \geq b_i$: The utility under $s^*$ is $\{v^* - s^*(v^*)\}\cdot \mathbf 1(h \leq s^*(v^*)) = v^* - b_{i^*}$ and the utility under $s_\v$ is $\{v^* - s_\v(v^*)\}\cdot \mathbf 1(h \leq s_\v(v^*)) = v^* - b_u$. Therefore, the regret $R(\v|s^*,v^*,h) = -(b_{i^*} - b_u)$. On the other hand, since $v_j' \leq v_j + \eta \epsilon$ for all $u < j \leq i^*$, we get $\Delta \Phi(\v|v^*) \leq (i^* - u)\epsilon = b_{i^*} - b_u$.
  
    \item $b_{i} > b_u \geq b_{i^*}$: The utility under $s^*$ is $\{v^* - s^*(v^*)\}\cdot \mathbf 1(h \leq s^*(v^*)) = 0$ and the utility under $s_\v$ is $\{v^* - s_\v(v^*)\}\cdot \mathbf 1(h \leq s_\v(v^*)) = 0$ because $h = b_i > b_u \geq b_{i^*}$. Hence we have $R(\v|s^*, v^*, h) = 0$. On the other hand, $\Delta \Phi(\v|v^*) = 0$ because $v_j' = v_j$ for all $j < m$ and thresholds cannot cross $v^*$ during the update $\v \to \v'$.

    \item $b_i > b_{i^*} > b_u$: The utility under $s^*$ is $\{v^* - s^*(v^*)\}\cdot \mathbf 1(h \leq s^*(v^*)) = 0$ and the utility under $s_\v$ is $\{v^* - s_\v(v^*)\}\cdot \mathbf 1(h \leq s_\v(v^*)) = 0$ because $h = b_i > b_{i^*} > b_u$. Hence we have $R(\v|s^*, v^*, h) = 0$. On the other hand, $\Delta \Phi(\v|v^*) \leq 0$ because $v_j' \leq v_j$ for all $j \leq i$.

    \item $b_u \geq b_{i} > b_{i^*}$: The utility under $s^*$ is $\{v^* - s^*(v^*)\}\cdot \mathbf 1(h \leq s^*(v^*)) = 0$ because $h = b_i > b_{i^*}$, and the utility under $s_\v$ is $\{v^* - s_\v(v^*)\}\cdot \mathbf 1(h \leq s_\v(v^*)) = v^* - b_u$. Therefore, the regret $R(\v|s^*,v^*,h) = - (v^* - b_u)$. On the other hand, $v_j' = x$ for $m \leq j \leq i$ and $v_j' = v_j + \eta \epsilon$ for $i < j \leq u$. Moreover, the definition of $x$ implies $\sum_{j=i^*+1}^{i} (v_j - v_j') \leq \sum_{j=m}^i (v_j - x) \leq \eta (v_i - b_i)$, and the definition of $u$ implies $v^* \geq v_u \geq v_i$. Thus, we get $\Delta \Phi(\v|v^*) \leq (v_i - b_i) - (u-i)\epsilon \leq v^* - b_u$.

    \item $b_{i^*} \geq b_i > b_u$: The utility under $s^*$ is $\{v^* - s^*(v^*)\}\cdot \mathbf 1(h \leq s^*(v^*)) = v^* - b_{i^*}$ and the utility under $s_\v$ is $\{v^* - s_\v(v^*)\}\cdot \mathbf 1(h \leq s_\v(v^*)) = 0$ because $h = b_i > b_u$. Therefore, the regret $R(\v|s^*,v^*,h) = v^* - b_{i^*}$. On the other hand, $v_j' \leq v_j + \eta \epsilon$ for all $j > i$ and $v_j' = x$ for all $m \leq j \leq i$. The definition of $x$, the fact that $b_i \leq b_{i^*} = s^*(v^*) \leq v^* < v_i$, and the assumption that the thresholds don't cross $v^*$ implies $\sum_{j=m}^i (v_j -x) = \eta (v_i - b_i)$. As a consequence, we get $\sum_{j=u+1}^i (v'_j - v_j) = \sum_{j=m}^i (x - v_j) = -\eta (v_i - b_i)$. Moreover, the definition of $u$ implies $v_i > v^* > v_u$. Thus, we have $\Delta \Phi(\v|v^*) \leq (i^* - i)\epsilon - (v_i - b_i) \leq -(v^* - b_{i^*})$.
	\end{enumerate}

In all of the six cases, we have established the desired bound $\Delta \Phi(\v|v^*) + R(\v|s^*, v^*, h) \leq 0$. 
	
	To complete the proof of \eqref{eq:regret_threshold_inter-1}, we now relax our assumption that $\min_{j \in [K]} |v_j - v^*| > \eta$ and consider the setting where $\min_{j \in [K]} |v_j - v^*| \leq \eta$. First, observe that Corollary~\ref{cor:threshold_update} implies
	\begin{align*}
         &v'_j - v_j = 0 &\text{if } j < m\\
         & \sum_{j=m}^i (v'_j - v_j)   \geq -\eta \\
         &v'_j - v_j   \leq \eta \cdot \epsilon &\text{if } i < j\,,
    \end{align*}
    which in turn implies $\sum_{j=1}^K |v_j'- v_j| \leq 2\eta$. Next, observe that the following functions are 1-Lipschitz:
    \begin{align*}
        v_j \mapsto (v^* - v_j) \cdot \mathbf{1}(v^* > v_j) + v_j \quad \text{ and } \quad v_j \mapsto  (v^* - v_j) \cdot \mathbf{1}(v^* > v_j)
    \end{align*}
    As a consequence, we get $\Delta \Phi(\v|v^*) \leq 2$ for all $\v \in \vv$. On the other hand, $R(\v|s^*, v^*, h) \leq 1$ for all $\v, s^*, v^*, h$. Combining the two, we get the desired bound $\Delta \Phi(\v|v^*) + R(\v|s^*, v^*, h) \leq 3$ in the case when $\min_{j \in [K]} |v_j - v^*| \leq \eta$. This establishes \eqref{eq:regret_threshold_inter-1} and completes the proof.
\end{proof}

\subsection{Proof of Theorem~\ref{thm:threshold_robust}}

\begin{proof}
	We will use a potential-function argument with the potential function $\Phi: \pp \to [0, K/\eta]$ defined as
	\begin{align*}
		\Phi(\v) \coloneqq \frac{1}{\eta} \cdot \sum_{j=1}^K \int_{v_i}^1 (1 - F(t)) \cdot dt\,.
	\end{align*}
	Note that $0 \leq \Phi(\v) \leq K/\eta$ for all $\v \in \vv$.
	
	We start by showing that, to establish the theorem, it suffices to prove the following statement for all thresholds $\v \in \vv$ and highest competing bids $h \in \{b_0, \dots, b_K\}$:
	\begin{align}\label{eq:threshold_robust-inter-1}
		\Delta \Phi (\v) + \rev(\v, h) \leq \mye(F) + \eta \bar f\,,
	\end{align}
	where 
	\begin{itemize}
		\item $\Delta \Phi(\v) = \Phi(\v') - \Phi(\v)$ is the change in potential caused by a single update-step of Algorithm~\ref{alg:threshold}. Here, $\v'$ are the thresholds obtained by applying the update step of Algorithm~\ref{alg:threshold} to $\v$ (see Corollary~\ref{cor:threshold_update} for a formal definition).
		\item $\rev(\v, h) \coloneqq \rev(s_\v, h)$ for the bidding strategy $s_\v$ corresponding to thresholds $\v$, which sets $s_\v(v) \coloneqq b_k$ for $v \in \left(v_i, v_{i+1} \right]$.
	\end{itemize}
	This is because, conditioned on highest competing bids $\{h_s\}_{s=1}^t$, applying \eqref{eq:threshold_robust-inter-1} to iterate $\p_t$ of Algorithm~\ref{alg:threshold} and highest competing bid $h_t$ yields
	\begin{align*}
		\Phi(\v_{t+1}) - \Phi(\v_t) + \rev(\v_t, h_t) \leq \mye(F) + \eta \bar f\,.
	\end{align*}
    Next, noting that $\rev(\v_t, h_t) = \rev(A_t, h_t)$ and taking expectation over $\{h_s\}_{s=1}^t$ yields
    \begin{align*}
		\E[\Phi(\v_{t+1})] - \E[\Phi(\v_t)] + \E[\rev(\v_t, h_t)] \leq \mye(F) + \eta \bar f\,.
	\end{align*}
 
	Summing over all times steps, we get
	\begin{align*}
		&\sum_{t=1}^T \E[\Phi(\v_{t+1})] - \E[\Phi(\v_t)] + \sum_{t=1}^T \E[\rev(A_t, h_t)] \leq \mye(F) \cdot T + \eta \bar f \cdot T\\
		\implies &\sum_{t=1}^T \E[\rev(A_t, h_t)] \leq \mye(F) + \eta \bar f \cdot T + \Phi(\v_1) - \E[\Phi(\v_{T+1})] \leq  \mye(F) + \eta \bar f T + K/\eta\,,
	\end{align*}
	and the theorem statement follows. To complete the proof, we next establish \eqref{eq:threshold_robust-inter-1}.
	
	Fix some $\v \in \vv$ and highest competing bid $h = b_k$ for some $k \in \{0, \dots, K\}$. First, observe that
	\begin{align*}
		\rev(\v,h) = \rev(s_\v, h) &= \E_v\left[ s_\v(v) \cdot \mathbf 1 (s_\v(v) \geq b_k) \right]\\
		&= \sum_{i=k}^K b_i \cdot \mathbb P_v(s_\v(v) = b_i)\\
		&= \sum_{i=k}^K b_i \cdot \mathbb P_v(v \in (v_i, v_{i+1}])\\
		&= \sum_{i=k}^K b_i \cdot \mathbb (\{1- F(v_{i})\} - \{1- F(v_{i+1})\})\\
		&= b_k \cdot (1 - F(v_k)) +  \sum_{i=k+1}^K (1 - F(v_i)) \cdot (b_i - b_{i-1})\\
		&= b_k \cdot (1 - F(v_k)) + \sum_{i=k+1}^K (1 - F(v_i)) \cdot \epsilon \tag{$\clubsuit$}\label{eq:threshold_robust-inter-2}
	\end{align*}
	
	Next, using the terminology and the result from Corollary~\ref{cor:threshold_update}, we can write
	\begin{align*}
		\Delta \Phi(\v) = & \Phi(\v') - \Phi(\v)\\
		= & \frac{1}{\eta} \sum_{j=1}^K \int_{v_j'}^{v_j} \{1 - F(t)\} \cdot dt\\
		= & \frac{1}{\eta} \cdot \sum_{j=1}^{m-1} \int_{v_j}^{v_j} \{1 - F(t)\} \cdot dt + \frac{1}{\eta} \cdot \sum_{j=m}^k \int_{x}^{v_j} \{1 - F(t)\} \cdot dt\\
		 & + \frac{1}{\eta} \cdot \sum_{j=k+1}^{\ell-1} \int_{v_j + \eta \epsilon}^{v_j} \{1 - F(t)\} \cdot dt + \frac{1}{\eta} \cdot\sum_{j=\ell}^K \int_{1}^{v_j} \{1 - F(t)\} \cdot dt \\
		 = & \frac{1}{\eta} \cdot \sum_{j=m}^k \int_{x}^{v_j} \{1 - F(t)\} \cdot dt - \frac{1}{\eta} \cdot \sum_{j=k+1}^{\ell-1} \int_{v_j}^{v_j + \eta \epsilon} \{1 - F(t)\} \cdot dt - \frac{1}{\eta} \cdot\sum_{j=\ell}^K \int_{v_j}^{1} \{1 - F(t)\} \cdot dt \\
		 \leq & \frac{1}{\eta} \cdot \sum_{j=m}^k (v_j - x) \cdot (1 - F(x)) - \frac{1}{\eta} \cdot \sum_{j=k+1}^{\ell-1} (\eta \epsilon) \cdot (1 - F(v_j + \eta \epsilon))\\
		 \leq & \frac{1 - F(x)}{\eta} \cdot \sum_{j=m}^k (v_j - x) - \sum_{j=k+1}^{\ell-1} \epsilon \cdot (1 - F(v_j)) + \sum_{j=k+1}^{\ell-1} \epsilon \cdot (F(v_{j} + \eta \epsilon) - F(v_j))\\
		 \leq & \frac{1 - F(x)}{\eta} \cdot  \eta \cdot (v_k - b_k) - \sum_{j=k+1}^{\ell-1} (1 - F(v_j)) \cdot \epsilon + (\ell-k-2) \cdot \epsilon \cdot \bar f \eta \epsilon\\
		 = & (1 - F(v_k)) \cdot (v_k - b_k) + (F(v_k)-F(x)) \cdot (v_k - b_k) - \sum_{j=k+1}^{\ell-1} (1 - F(v_j)) \cdot \epsilon + (\ell-k-2)\eta \bar f \epsilon^2\\
		 \leq & (1 - F(v_k)) \cdot (v_k - b_k) + \eta \bar f \cdot (1- \epsilon) - \sum_{j=k+1}^{\ell-1} (1 - F(v_j)) \cdot \epsilon + (\ell-k-2)\eta \bar f \epsilon^2 \tag{$\spadesuit$}\label{eq:threshold_robust-inter-3}
	\end{align*}
	where the first inequality follows from the fact that $\int_{x}^{y} h(t) dt \leq (y-x) h(x)$ for a decreasing function $h:[0,1] \to [0,1]$ and $x \leq y$, the third inequality follows from the definition of $x$ (as defined in Lemma~\ref{lemma:GA_update}), and the final inequality follows from the fact that $v_k - x \leq \eta$ and $v_j - b_j \leq 1- \epsilon$ for all $j \in \{0,1,\dots, K\}$. Moreover, we have repeatedly used the fact that $F(y) - F(x) \leq \bar f \cdot (y-x)$ for all $0 \leq x \leq y \leq 1$.
	
	Combining \eqref{eq:strat_robust-inter-2} and \eqref{eq:strat_robust-inter-3} yields
	\begin{align*}
		\Delta \Phi(\v) + \rev(\v, h)
		\leq\ & (1 - F(v_k)) \cdot (v_k - b_k) + \eta \bar f \cdot (1- \epsilon) - \sum_{j=k+1}^{\ell-1} (1 - F(v_j)) \cdot \epsilon + (\ell-k-2)\eta \bar f \epsilon^2\\
		& + b_k \cdot (1 - F(v_k)) + \sum_{i=k+1}^K (1 - F(v_i)) \cdot \epsilon\\
		\leq\ & \mye(F) + \eta \bar f \cdot (1-\epsilon) + (\ell-k-2)\eta \bar f \epsilon^2 + \sum_{j=\ell}^K (1 - F(v_j)) \cdot \epsilon\\
		\leq\ & \mye(F) + \eta \bar f \cdot (1-\epsilon)+   (\ell-k-2)\eta \bar f \epsilon^2 + \sum_{j=\ell}^K \eta \bar f \epsilon^2\\
		\leq\ & \mye(F) + \eta \bar f \cdot (1-\epsilon)+ \eta \bar f \epsilon^2 \cdot K\\
		\leq\ & \mye(F) + \eta \bar f \cdot (1-\epsilon)+ \eta \bar f \cdot  \epsilon\\
		=\ & \mye(F) + \eta \bar f\,,
	\end{align*} 
	where the second inequality follows from $\mye(F) \geq v_k \cdot (1 - F(v_k))$, the third inequality follows from the fact that $v_j \geq 1 - \eta\epsilon$ for all $j \geq \ell$ (see Corollary~\ref{cor:threshold_update} for definition of $\ell$), and the fifth inequality follows from the assumption that $b_K = \epsilon K \leq 1$. Thus, we have established \eqref{eq:threshold_robust-inter-1} and thereby the theorem.
\end{proof}

\subsection{Proof of Theorem~\ref{thm:ic_threshold}}

\begin{proof}
	Fix a misreport map $M:[0,1] \to [0,1]$. Consider a value $v^* \in [0,1]$ and define the potential function $\Phi:\vv \to [-K/\eta,K/\eta]$ as follows:
	\begin{align*}
		\Phi(\v|M, v^*) \coloneqq \frac{1}{\eta} \cdot \left\{ \sum_{j=1}^K (v^* - v_j) \cdot \mathbf 1 (v^* > v_j) - \sum_{j=1}^{K} (M(v^*) - v_j) \cdot \mathbf{1}(M(v^*) > v_j) \right\}
	\end{align*}
	We start by showing that, to prove the theorem, it suffices to prove the following statement for all thresholds $\v \in \vv$, value $v^* \in [0,1]$, and highest competing bid $h$:
	\begin{align}\label{eq:ic_threshold_inter-1}
		\Delta \Phi(\v|M,  v^*) + R(\v | M, v^*, h) \leq 3 \cdot \mathbf{1}\left(\min_{j \in [K]} |v_j - v^*| \leq \eta \right)
	\end{align}
	where
	\begin{itemize}
		\item $\Delta \Phi(\v|M, v^*) \coloneqq \Phi(\v'|M, v^*) - \Phi(\v|M, v^*)$ is the change in potential caused by a single update step of Algorithm~\ref{alg:threshold}. Here $\v'$ are the thresholds obtained by applying the update step of Algorithm~\ref{alg:threshold} to $\v$ (see equation \eqref{eq:threshold_projection} of Corollary~\ref{cor:threshold_update} for a formal definition).
		\item $R(\v | M, v^*, h) \coloneqq \{v^* - s_\v(M(v^*))\} \cdot \mathbf{1}(h \leq s_\v(M(v^*)) - \{v^* - s_\v(v^*)\} \cdot \mathbf{1}(h \leq s_\v(v^*))$ is the regret associated with reporting the true value $v^*$ in lieu of misreporting $M(v^*)$. Here $s_\v$ is the bidding strategy corresponding to thresholds $\v$, i.e., $s_\v(v^*) = b_j$ if $v^* \in (v_j, v_{j+1}]$.
	\end{itemize}
	
	Suppose \eqref{eq:ic_threshold_inter-1} holds for all thresholds $\v \in \vv$, values $v^* \in [0,1]$, and highest competing bids $h$. Then, conditioned on $\{h_s\}_{s=1}^t$, we can apply it to $\v_t$ and $h_t$ to get
	\begin{align*}
		&\Phi(\v_{t+1}|M, V_t) - \Phi(\v_t|M, V_t) + R(\v_t|M, V_t,h_t) \leq 3 \cdot \mathbf{1}\left(\min_{j \in [K]} |v_{t,j} - V_t| \leq \eta \right) &&\forall t \in [T]\,. 
	\end{align*}
	Taking an expectation over $V_t \sim F$, and using $\E_{V_t}[\cdot]$ to denote the conditional expectation $\E[\cdot | \{h_s\}_{s=1}^t]$, yields
    \begin{align*}
        &\E_{V_t}[\Phi(\v_{t+1}|M, V_t)] - \E_{V_t}\Phi(\v_t|M, V_t)] + \E_{V_t}\left[ R(\v_t|M, V_t,h_t)  \right] \leq\ 3 \cdot \mathbb{P}_{V_t}\left(\min_{j \in [K]} |v_{t,j} - V_t| \leq \eta \right)\\
        \implies &\E_{V_t}[\Phi(\v_{t+1}|M, V_t)] - \E_{V_t}\Phi(\v_t|M, V_t)] + u(s_\v \circ M|F,h_t) - u(s_\v|F,h_t) \leq\ 3 \cdot \sum_{j=1}^K \mathbb{P}_{V_t}\left( |v_{t,j} -V_t| \leq \eta \right)\\
        \implies & u(s_\v \circ M|F,h_t) - u(s_\v|F,h_t) \leq 3 \cdot K \cdot \bar f \cdot 2\eta + \E_{V_t}[\Phi(\v_{t}|M, V_t)] - \E_{V_t}[\Phi(\v_{t+1}|M, V_t) ]\\
        \implies & u(s_\v \circ M|F,h_t) - u(s_\v|F,h_t) \leq 3 \cdot K \cdot \bar f \cdot 2\eta + \E_{V_t}[\Phi(\v_{t}|M, V_t)] - \E_{V_{t+1}}[\Phi(\v_{t+1}|M, V_{t+1}) ]\,,
    \end{align*}
    where the first implication follows from the definition of $u(s|F,h)$ and the union bound, the second implication follows from $\mathbb{P}\left(|v_{t,j} - v^*| \leq \eta \right) = \mathbb{P}\left(v^* \in  [v_{t,j} - \eta, v_{t,j} + \eta] \right) \leq \bar f \cdot 2\eta$ (here $\bar f$ is an upper bound on the density of $F$), and the third implication follows from the independence of $V_t$, $V_{t+1}$ and $\{h_s\}_{s=1}^t$.
    
	Taking expectation over $\{h_s\}_{s=1}^T$ and summing over $t \in [T]$ yields:
	\begin{align*}
		 \sum_{t=1}^T \E[u(s_\v \circ M|F,h_t)] - \sum_{t=1}^T \E[u(s_\v|F,h_t)] &\leq 6\bar f K \cdot \eta T + \sum_{t=1}^T \left\{\E[\Phi(\v_{t}|M, V_t)] - \E[\Phi(\v_{t+1}|M, V_{t+1}) ] \right\}\\
        &\leq 6\bar f K \cdot \eta T + \E[\Phi(\v_{1}|M, V_1)] - \E[\Phi(\v_{T+1}|M, V_{T+1}) ]\\
        &\leq 6\bar f K \cdot \eta T + \frac{2K}{\eta}\,,
	\end{align*} 
    where $V_{T+1}$ is a fresh sample from $F$ that is independent of all other random variables. Therefore, we have shown that \eqref{eq:ic_threshold_inter-1} is a sufficient condition for the theorem---we focus on establishing \eqref{eq:ic_threshold_inter-1} in the remainder.
	
	Fix a value $v^*$, highest competing bid $h$ and thresholds $\v \in \vv$. First assume that $\min_{j \in [K]} |v_j - v^*| > \eta$. In particular, this implies that if $v_j > v^*$ (respectively $v_j < v^*$), then $v_j'>v^*$ (respectively $v_j' < v^*$), i.e., the thresholds don't cross $v^*$ during the update $\v \to \v'$. Let $s_\v(M(v^*)) = b_w$ be the bid under strategy $s_\v \circ M$ for value $v^*$ (i.e., $w = \max\{j \mid v_j < M(v^*) \}$), let $s_\v(v^*) = b_u$ be the bid under strategy $s_\v$ for value $v^*$ (i.e., $u = \max\{j \mid v_j \leq v^* \}$), and let $h = b_i$ be the highest competing bid. Since the thresholds don't cross $v^*$ during the update $\v \to \v'$, we have
	\small
	\begin{align*}
		&\ \Delta \Phi(\v|M, v^*)\\
		 = &\ \frac{1}{\eta} \cdot \left\{ \sum_{j=1}^u (v^* - v'_j) - \sum_{j=1}^{K} (M(v^*) - v'_j) \cdot \mathbf{1}(M(v^*) > v'_j) \right\} - \frac{1}{\eta} \cdot \left\{ \sum_{j=1}^u (v^* - v_j) - \sum_{j=1}^{K} (M(v^*) - v_j) \cdot \mathbf{1}(M(v^*) > v_j) \right\}\\
		 = &\ \frac{1}{\eta} \cdot \left\{- \sum_{j=1}^u (v_j' - v_j) + \sum_{j=1}^K (M(v^*) -v_j) \cdot \mathbf{1}(M(v^*) > v_j) -  \sum_{j=1}^K (M(v^*) -v'_j) \cdot \mathbf{1}(M(v^*) > v'_j) \right\}\\
		 = &\ \frac{1}{\eta} \cdot \left\{- \sum_{j=1}^u (v_j' - v_j) + \sum_{j=1}^K (M(v^*) -v_j) \cdot \mathbf{1}(M(v^*) > v_j) -  \sum_{j=1}^K (M(v^*) -v'_j) \cdot \mathbf{1}(M(v^*) > v'_j) \right\}\\
		 = &\ \frac{1}{\eta} \cdot \left\{- \sum_{j=1}^u (v_j' - v_j) + \sum_{j=1}^K (v_j' -v_j) \cdot \mathbf{1}(M(v^*) > v_j) -  \sum_{j=1}^K (M(v^*) -v'_j) \cdot (\mathbf{1}(M(v^*) > v'_j) - \mathbf{1}(M(v^*) > v_j)) \right\}\\
		 \leq &\ \frac{1}{\eta} \cdot \left\{- \sum_{j=1}^u (v_j' - v_j) + \sum_{j=1}^w (v_j' -v_j) \right\}\,,
	\end{align*}
	\normalsize
	where the last inequality follows from the fact that $M(v^*) - v_j'$ and $\mathbf{1}(M(v^*) > v'_j) - \mathbf{1}(M(v^*) > v_j)$ have the same sign for all possible values of $v_j, v_j', M(v^*) \in [0,1]$.
	
	We establish \eqref{eq:ic_threshold_inter-1} by separately analyzing the following mutually exclusive and exhaustive cases. These cases (and their analysis) is fairly similar to the cases in the proof of Theorem \ref{thm:regret_threshold}. The main difference is in the case where $b_{w} \geq b_i > b_u$ (the misreporting benchmark wins the item, but the algorithm doesn't), which we split into two cases depending on whether or not $b_{i} > v^*$.
	\begin{enumerate}
    \item $b_u \geq b_w \geq b_i$: The utility obtained by misreporting with $M$ is $\{v^* - s_\v(M(v^*))\}\cdot \mathbf 1(h \leq s_\v(M(v^*))) = v^* - b_w$ and the utility under $s_\v$ is $\{v^* - s_\v(v^*)\}\cdot \mathbf 1(h \leq s_\v(v^*)) = v^* - b_u$. Therefore, the regret $R(\v|M,v^*,h) = b_u - b_w = (u-w) \cdot \epsilon$. On the other hand, since $v_j' = v_j + \eta \epsilon $ for all $w < j \leq u$, we get $\Delta \Phi(\v|M, v^*) \leq - (u - w) \epsilon$.

    \item $b_{w} > b_u \geq b_i$: The utility obtained by misreporting with $M$ is $\{v^* - s_\v(M(v^*))\}\cdot \mathbf 1(h \leq s_\v(M(v^*))) = v^* - b_w$ and the utility under $s_\v$ is $\{v^* - s_\v(v^*)\}\cdot \mathbf 1(h \leq s_\v(v^*)) = v^* - b_u$. Therefore, the regret $R(\v|s^*,v^*,h) = -(b_{w} - b_u)$. On the other hand, since $v_j' \leq v_j + \eta \epsilon$ for all $u < j \leq w$, we get $\Phi(\v|v^*) \leq (w - u)\epsilon = b_{w} - b_u$.

    \item $b_{i} > b_u \geq b_{w}$: The utility obtained by misreporting with $M$ is $\{v^* - s_\v(M(v^*))\}\cdot \mathbf 1(h \leq s_\v(M(v^*))) = 0$ and the utility under $s_\v$ is $\{v^* - s_\v(v^*)\}\cdot \mathbf 1(h \leq s_\v(v^*)) = 0$ because $h = b_i > b_u \geq b_{w}$. Hence $R(\v|M,v^*,h) = 0$. On the other hand, $\Delta \Phi(\v|M,v^*) = 0$ because $u < m$ and $v_j' = v_j$ for all $j < m$. Here, $u < m$ follows from the fact that thresholds cannot cross $v^*$ during the update $\v \to \v'$.
    
    \item $b_i > b_{w} > b_u$: The utility obtained by misreporting with $M$ is $\{v^* - s_\v(M(v^*))\}\cdot \mathbf 1(h \leq s_\v(M(v^*))) = 0$ and the utility under $s_\v$ is $\{v^* - s_\v(v^*)\}\cdot \mathbf 1(h \leq s_\v(v^*)) = 0$ because $h = b_i > b_{w} > b_u$. Hence we have $R(\v|M, v^*, h) = 0$. On the other hand, $\Delta \Phi(\v|M, v^*) \leq 0$ because $v_j' \leq v_j$ for all $j \leq i$.
  
		\item $b_u \geq b_{i} > b_w$: The utility obtained by misreporting with $M$ is $\{v^* - s_\v(M(v^*))\}\cdot \mathbf 1(h \leq s_\v(M(v^*))) = 0$ because $h = b_i > b_w$, and the utility under $s_\v$ is $\{v^* - s_\v(v^*)\}\cdot \mathbf 1(h \leq s_\v(v^*)) = v^* - b_u$. Therefore, the regret $R(\v|M,v^*,h) = - (v^* - b_u)$. On the other hand, $v_j' = x$ for $m \leq j \leq i$ and $v_j' = v_j + \eta \epsilon$ for $i < j \leq u$. The definition of $x$ implies $\sum_{j=w+1}^{i} (v_j - v_j') \leq \sum_{j=m}^i (v_j - x) \leq \eta (v_i - b_i)$. Moreover, the definition of $u$ implies $v^* \geq v_u \geq v_i$. Thus, we get $\Delta \Phi(\v|M,v^*) \leq (v_i - b_i) - (u-i)\epsilon \leq v^* - b_u$.

		\item $b_{w} \geq b_i > b_u$ and $b_i \leq v^*$: The utility obtained by misreporting with $M$ is $\{v^* - s_\v(M(v^*))\}\cdot \mathbf 1(h \leq s_\v(M(v^*))) = v^* - b_w$ and the utility under $s_\v$ is $\{v^* - s_\v(v^*)\}\cdot \mathbf 1(h \leq s_\v(v^*)) = 0$ because $h = b_i > b_u$. Therefore, the regret $R(\v|M,v^*,h) = v^* - b_{w}$. On the other hand, $v_j' \leq v_j + \eta \epsilon$ for all $j > i$ and $v_j' = x$ for all $m \leq j \leq i$. The definition of $x$, the assumption that $b_i \leq v^* < v_i$, and the assumption that the thresholds don't cross $v^*$ implies $\sum_{j=m}^i (v_j -x) = \eta (v_i - b_i)$. As a consequence, we get $\sum_{j=u+1}^i (v'_j - v_j) = \sum_{j=m}^i (x - v_j) = -\eta (v_i - b_i)$. Moreover, the definition of $u$ implies $v_i > v^* > v_u$. Thus, we have $\Delta \Phi(\v|M, v^*) \leq (w - i)\epsilon - (v_i - b_i) \leq -(v^* - b_{i^*})$.
		
		\item $b_{w} \geq b_i > b_u$ and $b_i > v^*$: The utility obtained by misreporting with $M$ is $\{v^* - s_\v(M(v^*))\}\cdot \mathbf 1(h \leq s_\v(M(v^*))) = v^* - b_w$ and the utility under $s_\v$ is $\{v^* - s_\v(v^*)\}\cdot \mathbf 1(h \leq s_\v(v^*)) = 0$ because $h = b_i > b_u$. Therefore, the regret $R(\v|M,v^*,h) = v^* - b_{w} = v^* - b_i + (b_i - b_w) \leq b_i - b_w$. On the other hand, $v_j' \leq v_j + \eta \epsilon$ for all $j > i$ and $v_j' = x \leq v_j$ for all $m \leq j \leq i$. Thus, we have $\Delta \Phi(\v|M, v^*) \leq (w - i)\epsilon = b_w - b_i$.

	\end{enumerate}
	In all of the seven cases, we have established the desired bound $\Delta \Phi(\v|M, v^*) + R(\v|M, v^*, h) \leq 0$. 
	
	To complete the proof of \eqref{eq:ic_threshold_inter-1}, we now relax our assumption that $\min_{j \in [K]} |v_j - v^*| > \eta$, and consider the setting where $\min_{j \in [K]} |v_j - v^*| \leq \eta$. First, observe that Corollary~\ref{cor:threshold_update} implies
	\begin{align*}
         &v'_j - v_j = 0 &\text{if } j < m\\
         & \sum_{j=m}^i \{v'_j - v_j\}   \geq -\eta \\
         &v'_j - v_j   \leq \eta \cdot \epsilon &\text{if } i < j\,,
    \end{align*}
    which in turn implies $\sum_{j=1}^K |v_j'- v_j| \leq 2\eta$. Next, observe that the following function is 1-Lipschitz:
    \begin{align*}
        v_j \mapsto -(v_j - v^*) \cdot \mathbf{1}(v^* > v_j) + (v_j - M(v^*)) \cdot \mathbf{1}(M(v^*) > v_j)\,. 
    \end{align*}
    As a consequence, we get $\Delta \Phi(\v|M, v^*) \leq 2$ for all $\v \in \vv$. On the other hand, $R(\v|M, v^*, h) \leq 1$ for all $\v, M, v^*, h$. Combining the two, we get the desired bound $\Delta \Phi(\v|M, v^*) + R(\v|s^*, v^*, h) \leq 3$ in the case when $\min_{j \in [K]} |v_j - v^*| \leq \eta$. This establishes \eqref{eq:ic_threshold_inter-1} and completes the proof.
\end{proof}

\subsection{Proof of Theorem~\ref{thm:multi_buyer_robust}}

\begin{proof}
	Let $\{\v(i)_t\}_{t=1}^T$ be the iterates of Algorithm~\ref{alg:threshold} when all of the buyers simultaneously use Algorithm~\ref{alg:threshold} with $\eta = 1/\sqrt{\bar f T}$ to bid, the seller sets (potentially random and adaptive) reserve prices $\{r_t\}_{t=1}^T$ and the ties are broken using (random) ranking-based rules $\{\sigma_t\}_{t=1}^T$. Note that $\v(i)_t$ is a random variable that depends on the realized values $\{V(i)_t\}_{i,t}$, (where $V(i)_t$ denotes the value of buyer $i$ in auction $t$), the reserve prices and the tie-breaking rankings. Let $\Pi = (\{\v(i)_t\}_{i,t}, \{r_t\}_t, \{\sigma_t\}_t)$ denote all of the tuples of random variables that determine the run of an algorithm. Consider the first-price auction in which buyer $i$ bids according to thresholds $\v(i) \in \vv$ (i.e., using the strategy $s_\v$), the seller sets the reserve $r$ and ties are broken using the ranking $\sigma$. For a value tuple $(V(1), \dots, V(n))$, define $a_i(V(1), \dots, V(n)| \{\v(i)\}_i, r, \sigma)$ to be 1 if buyer $i$ wins this auction and 0 otherwise. Moreover, let $h(i)_t$ be the effective highest competing bid faced by buyer $i$ in this auction, i.e., $a_i(V(1), \dots, V(n) | \{\v(i)\}_i, r, \sigma) = 1$ if and only if buyer $i$ bids greater than or equal to $h(i)_t$.
	
	Define the following direct revelation mechanism $\mathcal{M}$ for $n$ buyers and a single item:
	\begin{enumerate}
		\item Ask all $n$ buyers to report their values. Let $Z(i)$ denote the value reported by buyer $i$.
		\item Define the allocation function $X: [0,1]^n \to \Delta^n$ as follows:
			\begin{align*}
				X_i(Z(1), \dots, Z(n)) \coloneqq \frac{1}{T} \cdot \sum_{t=1}^T \E_{\{\v(i)_t\}_i, r_t, \sigma_t} [\ a_i\left(Z(1), \dots, Z(n)| \{\v(i)_t\}_i, r_t, \sigma_t\right)\ ]
			\end{align*}
			and the payment rule $P:[0,1]^n \to \R_+^n$ as
			\begin{align*}
				P_i(Z(1), \dots, Z(n)) \coloneqq \frac{1}{T} \cdot \sum_{t=1}^T \E_{\{\v(i)_t\}, r_t, \sigma_t} \left[\ a_i\left(Z(1), \dots, Z(n)| \{\v(i)_t\}_i, r_t, \sigma_t\right)\ \cdot s_{\v(i)_t}(Z(i))   \right]\,.
			\end{align*}
	\end{enumerate}
	
	Note that the definition of $P$ implies that $\rev(A, \{F_i\}_i) = T \cdot \E[P(Z(1), \dots, Z(n))]$ is the total expected revenue generated when all buyers simultaneously employ Algorithm~\ref{alg:threshold}.

	We will use $x_i$ and $p_i$ to denote the interim allocation rule of $\mathcal{M}$, i.e.,
	\begin{align*}
		&x_i(z) \coloneqq \E_{Z(j) \sim F_j} \left[X_i(Z(1), \dots, Z(i-1), z, Z(i+1), \dots Z(n)) \right]\\
		\text{and }\quad &p_i(z) \coloneqq \E_{Z(j) \sim F_j} \left[P_i(Z(1), \dots, Z(i-1), z, Z(i+1), \dots Z(n)) \right]\,.
	\end{align*}
	It is easy to see that $x_i$ is monotonic for all buyers $i \in [n]$ because the allocation function $a_i( \cdot | \{\v(i)\}_i, r, \sigma)$ is monotonic in the $i$-th component. Define $u_i(z,y) = z \cdot x_i(y) - p_i(y)$ to be the expected utility when buyer $i$ has value $z$ but reports $y$. For a function $Q:[0,1] \to [0,1]$, these definitions along with Fubini's Theorem imply
	\small
	\begin{align*}
		&\E_{Z(i) \sim F_i}[u_i(Z(i), Q(Z(i)))]\\
		=&\ \E_{\Pi, \{Z(j)\}_{j\neq i}} \left[ \frac{1}{T} \cdot \sum_{t=1}^T \E_{Z_i \sim F_i}\left[\{ Z(i) - s_{\v(i)_t}(Q(Z(i)))\} \cdot a_i\left(Z(1), \dots, Q(Z(i)), \dots Z(n)| \{\v(i)_t\}_i, r_t, \sigma_t\right)  \right] \right]\\
		=&\ \E_{\Pi, \{Z(j)\}_{j\neq i}} \left[ \frac{1}{T} \cdot \sum_{t=1}^T u(s_{\v(i)_t} \circ Q| F_i, h(i)_t) \right]
	\end{align*}
	\normalsize
	
	Define the optimal misreport function $M_i:[0,1] \to [0,1]$ for buyer $i$ as $M_i(z) \coloneqq \argmax_{y \in [0,1]} u_i(z,y)$. Moreover, define the regret for not misreporting value $z$ to be $y$ as
	\begin{align*}
		\delta_i(z) = u_i(z, M_i(z)) - u_i(z, z)\,.
	\end{align*}
	Now, note that Theorem~\ref{thm:ic_threshold} implies
	\begin{align}\label{eq:ic_inter}
		\E_{Z(i) \sim F_i}[\delta_i(Z(i))] &= \E_{Z(i) \sim F_i}[u_i(Z(i), M_i(Z(i))) - u_i(Z(i), Z(i))] \nonumber\\
		&= \E_{\Pi, \{Z(j)\}_{j\neq i}} \left[ \frac{1}{T} \cdot \sum_{t=1}^T \left\{u(s_{\v(i)_t} \circ M_i| F_i, h(i)_t) - u(s_{\v(i)_t}| F_i, h(i)_t) \right\} \right] \nonumber\\
		&\leq \E_{\Pi, \{Z(j)\}_{j\neq i}} \left[ \frac{1}{T} \cdot 8K\bar f^{\frac{1}{2}} \cdot \sqrt{T} \right] \nonumber\\
		&= \frac{8K\bar f^{\frac{1}{2}}}{\sqrt{T}}\,.
	\end{align}
	
	Next, observe that the definition of $\delta_i(\cdot)$ implies
	\begin{align*}
		u_i(z,z) + \delta_i(z) = u_i(z, M_i(z)) = \max_{y \in [0,1]} z \cdot x_i(y) - p_i(y) &&\forall z\in [0,1]\,.
	\end{align*}
	In other words, $u_i(z,z) + \delta_i(z)$ is the interim utility for value $z\in [0,1]$ in incentive compatible mechanism with allocation rule $X$ and payment rule $P$. Therefore Equation (5.7) of \citet{krishna2009auction} applies, and we get
	\begin{align*}
		u_i(z,z) + \delta_i(z) = \int_{0}^z x_i(y) \cdot dy - p_i(0) &&\forall\ z \in [0,1]\,.
	\end{align*}  
	Since $s_{\v}(0) = 0$ for all $\v \in \vv$, we have $p_i(0) = 0$, and as a consequence
	\begin{align*}
		p_i(z) = z \cdot x_i(z) - u_i(z,z) = \underbrace{z \cdot x_i(z)   - \int_{0}^z x_i(y) \cdot dy}_{\tilde p_i(z)} + \delta_i(z)
	\end{align*}
	Next, note that Revenue Equivalence (Proposition~5.2 of \citealt{krishna2009auction}) implies that $\tilde p_i(z)$ is simply the interim expected payment of buyer $i$ with value $z$ under the incentive-compatible mechanism with allocation rule $X$. Therefore, we have
	\begin{align*}
		\E_{\{Z(i)\}_i \sim \prod_i F_i} [P(Z(1), \dots, Z(n))] &= \sum_{i=1}^n \E_{Z(i) \sim F_i}[p_i(Z(i))]\\
		&= \sum_{i=1}^n \E_{Z(i) \sim F_i}[\tilde p_i(Z(i)) + \delta_i(Z(i))]\\
		&= \sum_{i=1}^n \E_{Z(i) \sim F_i}[\tilde p_i(Z(i))] + \sum_{i=1}^n \E_{Z(i) \sim F_i}[\delta_i(Z(i))]\\
		&\leq \mye(\{F_i\}_i) + n \cdot \frac{8K\bar f^{\frac{1}{2}}}{\sqrt{T}}
	\end{align*}
	where the last inequality follows from \eqref{eq:ic_inter}. Finally, the theorem follows from the fact that $\rev(A, \{F_i\}_i) = T \cdot \E[P(Z(1), \dots, Z(n))]$.
\end{proof}

\section{Proofs for Section~\ref{sec:log_regret}}\label{appendix:log_regret}

For completeness, we formally state the time-varying step-size variant of Algorithm~\ref{alg:known_GA} here:

\begin{algorithm}[H]
	\textbf{Input:} Value distribution $F$, initial iterate $\pmb p_1 \in \pp$, and step size schedule $\{\eta_t\}$.\\
	\For{$t=1$ to $T$}
	{ 
		Observe value $V_t \sim F$;\\
		Bid $A_t(V_t) = b_i$ if $V_t \in \left(F^-\left(1- p_{t,i} \right), F^-\left(1 - p_{t,i+1} \right) \right]$;\\
		Observe highest competing bid $h_t \sim \d$;\\
		Update $\pmb p_t$ with a Gradient Ascent step: 
		\begin{align*}
			\pmb p_{t+1} = \argmin_{\pmb p \in \pp} \| \pmb p - \pmb p_t^+\| \quad \text{where} \quad \pmb p_t^+ = \pmb p_t + \eta_t \cdot \grad u(\pmb p_t|F, h_t)
		\end{align*}
    }
   \caption{Gradient Ascent with Known Value Distribution and Time-Varying Step Sizes}
\end{algorithm}

\subsection{Proof of Proposition~\ref{prop:strong_concavity}}
\begin{proof}
Recall the definition of the utility function under highest-competing-bid distribution $\d$ as given in \eqref{eq:convex-refor}:
\begin{align*}
	u(\pmb p| F, \pmb d) &\coloneqq \sum_{i=0}^K d_i \cdot \left( \int_{1 - p_i}^1 F^- (u) \cdot du - \sum_{j=i}^K b_j \cdot (p_j - p_{j+1}) \right)\\
	&= \sum_{i=0}^K d_i \cdot \left( \int_{1 - p_i}^1 F^- (u) \cdot du - b_i \cdot p_i - \sum_{j=i+1}^K (b_j - b_{j-1}) \cdot p_j \right)\,.
\end{align*}

We start by showing the function $g:[0,1] \to \R_+$ defined below is $(1/\bar f)$-strongly concave:
\begin{align*}
	g(p) \coloneqq  \int_{1 - p}^1 F^- (u) \cdot du\,.
\end{align*}

To see this, observe that $g'(p) = F^-(1-p)$, and for $0 \leq \tilde p < p \leq 1$ we have
\begin{align*}
	g'(p) - g'(\tilde p) =  F^-(1-p) - F^-(1-\tilde p) \leq -\frac{1}{\bar f} \cdot(p - \tilde p)\,,
\end{align*}
where we have used the fact that $F(\tilde x) - F(x) \leq \bar f \cdot (\tilde x - x)$ for $0 \leq x < \tilde x \leq 1$. This allows us to establish the $(1/\bar f)$-strongly concavity of $g$ using the first-order condition:
\begin{align*}
	(g'(p) - g'(\tilde p))\cdot (p - \tilde p) \leq -\frac{1}{\bar f} \cdot (p - \tilde p)^2 \,.
\end{align*}

Next, observe that the function $G(\p)$ defined as
\begin{align*}
	G(\p) \coloneqq \sum_{i=0}^K d_i \cdot g(p_i)
\end{align*}
is $(d_\min/\bar f)$-strongly convex because
\begin{align*}
	(\nabla G(\p) - \nabla G(\tilde \p))^\top (\p - \tilde \p) = \sum_{i=1}^K d_i \cdot (g'(p_i) - g'(\tilde p_i))\cdot (p_i - \tilde p_i) \leq - \frac{d_\min}{\bar f} \cdot \|\p- \tilde \p\|^2\,.
\end{align*}
As $\p \mapsto u(\p|F,\d)$ is the sum of $G(\cdot)$ and a linear function of $\p$, we get the proposition.
\end{proof}

\subsection{Proof of Theorem~\ref{thm:log_regret}}

\begin{proof}
    First, observe that Algorithm~\ref{alg:known_GA} implements Stochastic Gradient Ascent for the reward function $\p \mapsto u(\p|F,\d)$:
\begin{align*}
    \nabla u(\p|F,\d) = \nabla \left(\sum_{i=0}^K d_i \cdot u(\p|F,b_i) \right) = \sum_{i=0}^K d_i \cdot \nabla u(\p|F,b_i) = \E_{h \sim \d} [\nabla u(\p|F,h)]\,.
\end{align*}
Moreover note that $\|\nabla u(\p|F,h)\| \leq 2 = G$ is an upper bound on the gradient samples.

Now, we can apply the $O(\log T)$-regret guarantee described in \citet{hazan2006logarithmic} for Stochastic Gradient Descent with step size $\eta_t = 1/(\alpha t)$ to get
\begin{align*}
		\max_{\p^*}\ \sum_{t=1}^T u(\p^*|F,\d) - \sum_{t=1}^T u(\p_t|F,\d) \leq \frac{G^2}{2\alpha} \cdot (1 + \log T) = \frac{2 \bar f}{d_\min} \cdot (1 + \log T)\,.
\end{align*}

Finally, Theorem~\ref{thm:convex-refor} implies that $u(A_t|F,\d) = u(\p_t|F,\d)$ for all $t \in [T]$ and $\max_{\p^*} u(\p^*|F,\d) = \max_{s^*} u(s^*|F,\d)$, which completes the proof.
\end{proof}

\end{document}